\newcommand{\ERM}{{\sf{ERM}}}
\newcommand{\E}[1]{\mathbb E\left[#1\right]}
\newcommand{\Esub}[2]{\mathbb E_{#1}\left[#2\right]}
\newcommand{\pp}[1]{\mathbb P\left(#1\right)}
\DeclareMathOperator*{\argmin}{argmin}
\DeclareMathOperator*{\argmax}{argmax}
\newtheorem{theorem}{\bf{Theorem}}
\newtheorem{definition}{\bf{Definition}}
\newtheorem{lemma}{\bf{Lemma}}
\newtheorem{corollary}{\bf{Corollary}}
\newtheorem{remark}{\bf{Remark}}
\newtheorem{example}{\bf{Example}}
\newtheorem{assumption}{\bf{Assumption}}
\begin{document}

\title{Fast Rate Information-theoretic Bounds on Generalization Errors\let\thefootnote\relax\footnotetext{*This work is an extended version of the preliminary work \cite{wu2022fast} appeared in ITW2022 conference. This paper extends the previous results on the justification of the tightness of the generalization error whereas we derive the lower bound for the generalization error that matches the fast rate upper bound. Furthermore, we present instances where the $(\eta,c)$-central condition is not satisfied, supported by several analytical examples. The fast rate results are confirmed with more examples, such as regression and classification problems analytically or numerically, showing the effectiveness of the proposed bounds.}}


\author{Xuetong Wu, 
\thanks{This work is an extended version of the preliminary work \cite{wu2022fast} presented in ITW2022 conference.  The work of Jingge Zhu was supported in part by
the Australian Research Council under Project DE210101497 and Project
DP230101493. 

Xuetong Wu, Jonathan H. Manton and Jingge Zhu are with the Department of Electrical and Electronic Engineering, University of Melbourne, VIC, 3010, Australia. E-mail: wfyitf@gmail.com; \{jmanton, jingle.zhu\}@unimelb.edu.au;}~
Jonathan H. Manton,~\IEEEmembership{Fellow,~IEEE,}~ Uwe Aickelin,\thanks{Uwe Aickelin is with the School of Computing and Information Systems, University of Melbourne, VIC, 3010, Australia. E-mail: uwe.aickelin@unimelb.edu.au.}
Jingge Zhu,~\IEEEmembership{Member,~IEEE}
}


\maketitle

\begin{abstract}

The generalization error of a learning algorithm refers to the discrepancy between the loss of a learning algorithm on training data and that on unseen testing data.  Various information-theoretic bounds on the generalization error have been derived in the literature, where the mutual information between the training data and the hypothesis (the output of the learning algorithm) plays an important role. Focusing on the individual sample mutual information bound by Bu et al.~\cite{bu2020tightening}, which itself is a tightened version of the first bound on the topic by Russo et al.~\cite{russo2016controlling} and Xu et al.~\cite{xu2017information}, this paper investigates the tightness of these bounds, in terms of the dependence of their convergence rates on the sample size $n$. It has been recognized that these bounds are in general not tight, readily verified for the exemplary quadratic Gaussian mean estimation problem, where the individual sample mutual information bound scales as $O(\sqrt{1/n})$ while the true generalization error scales as $O(1/n)$. The first contribution of this paper is to show that the same bound can in fact be asymptotically tight if an appropriate assumption is made. In particular, we show that the fast rate can be recovered when the assumption is made on the excess risk instead of the loss function, which was usually done in existing literature. A theoretical justification is given for this choice.  The second contribution of the paper is a new set of generalization error bounds based on the $(\eta, c)$-central condition, a condition relatively easy to verify and has the property that the mutual information term directly determines the convergence rate of the bound. Several analytical and numerical examples are given to show the effectiveness of these bounds.
\end{abstract}


\section{Introduction} \label{sec:intro}
The generalization error of a learning algorithm refers to the discrepancy between the loss of a learning algorithm on training data and that on unseen testing data. An upper bound on this quantity is crucial for assessing the generalization capability of learning algorithms. Conventionally, many bounding techniques are proposed under different conditions and assumptions.  To name a few, \citet{vapnik1971uniform} proposed VC-dimension, which describes the richness of a hypothesis class for generalization ability.  The notion of ``algorithmic stability" was introduced in \cite{kearns1997algorithmic} and \cite{devroye1979distributiona}  (see also \citet{bousquet2002stability}) for bounding the generalization error by examining if a single training sample has a significant effect on the expected loss. Many of the abovementioned bounds are only concerned with the hypothesis space or the algorithm. For example, VC-dimension methods care about the worst-case bound which only depends on the hypothesis space. The stability methods only specify the properties of learning algorithms but do not require additional assumptions on hypothesis space. To fully characterize the intrinsic nature of a learning problem, it is shown in some recent works that the generalization error can be upper bounded using the information-theoretic quantities \cite{xu2017information,russo2016controlling}, and the bound usually takes the following form:      
\begin{align}
    \mathbb{E}_{W\mathcal{S}_n}[\mathcal{E}(W, \mathcal{S}_n)] \leq \sqrt{\frac{c I(W;\mathcal{S}_n)}{n}}, \label{eq:gen-form}
\end{align}
where the expectation is taken w.r.t. the joint distribution of $W$ and $\mathcal{S}_n$ induced by some algorithm $\mathcal{A}$. Here, $\mathcal{E}(w, \mathcal{S}_n)$ denotes the generalization error (properly defined in~(\ref{eq:gen}) in Section \ref{sec:prob}) for a given hypothesis $w$ and data sample $\mathcal{S}_n = (Z_i)_{i=1,\cdots,n}$, and $I(W;\mathcal{S}_n)$ denotes the mutual information between the hypothesis and data sample, and $c$ is some positive constant. In particular, if the loss function is $\sigma$-sub-Gaussian under the distribution $P_W \otimes P_Z$, $c$ equals $2\sigma^2$. By introducing the mutual information, such a bound gives a bound that depends on both the learning algorithm and the data distribution.  It has been shown that for specific problems, the information-theoretic bounds can recover the results obtained by the method of VC dimension~\cite{xu2017information}, algorithmic stability~\cite{raginsky2016information}, and differential privacy~\cite{steinke2020reasoning} under mild conditions. Further, as pointed out by \cite{asadi2018chaining}, the information-theoretic upper bound could be substantially tighter than the traditional bounds if we could exploit specific properties of the learning algorithm.  We  mention that PAC-Bayes bounds are a class of algorithm-dependent bounds first introduced by \citet{mcallester1999some} (see \cite{alquier2024user} for a recent survey). The PAC-Bayes bounds usually contain a Kullback-divergence term between the learning algorithm and a prior distribution, which can be viewed as a computable relaxation of the information-theoretic bounds, with the difference that most of them are high probability bounds. 

A follow-up work by \cite{bu2020tightening} showed a tightened version of the above bound in the form
\begin{align}
        \mathbb{E}_{W\mathcal{S}_n}[\mathcal{E}(W, \mathcal{S}_n)] \leq \frac{1}{n}\sum_{i=1}^n\sqrt{2\sigma^2 I(W;Z_i)}, \label{eq:gen-form-individual}
\end{align}
which is generally tighter than the bound in (\ref{eq:gen-form}). In particular, while the mutual information term in (\ref{eq:gen-form}) can be arbitrarily large or infinite for deterministic algorithms as observed by  \cite{bu2020tightening,hellstrom2020generalization, steinke2020reasoning}, the individual sample mutual information bound in (\ref{eq:gen-form-individual}) can still give non-vacuous bound for such algorithms. We mention that random subset methods \cite{negrea2019information, rodriguez2021random} (see also the work by \cite{zhou2022individually,haghifam2020sharpened}) and bounds based on ``ghost samples" \cite{steinke2020reasoning} are also used to derive tightened bounds.

It has been recognized that the original bounds in (\ref{eq:gen-form}) or (\ref{eq:gen-form-individual}) are, in general, not tight. It can be easily verified that for the simple Gaussian quadratic problem (details in Example \ref{sec:example}), the bound in (\ref{eq:gen-form-individual}) scales as $O(\sqrt{1/n})$ while the true generalization error scales as $O(1/n)$. It begs the problem of whether information-theoretic bounds can be intrinsically tight. Several works are dedicated to this problem, including \cite{hellstrom2021fast}, \cite{hellstrom2022new}, \cite{zhou2023stochastic} and \cite{zhou2024exactly}. We will compare our results with these works in  Section \ref{subsec:related_works}.

In this work, we develop a general framework for the fast rate bounds using the mutual information following this line of works~\cite{van2015fast,grunwald2020fast,grunwald2021pac} and the contributions are listed as follows.
\begin{itemize}
    \item We show that for the Gaussian quadratic problem, the bound in (\ref{eq:gen-form-individual}) can be made asymptotically tight when an appropriate assumption is made, where the sub-Gaussian condition is assumed on the excess risk instead of on the loss function. An intuitive explanation is given for the choice of this assumption, and a matching lower bound is also given for this specific problem.
    \item Inspired by the analysis for the sub-Gaussian case, we propose a new condition, called the $(\eta, c)$-central condition, under which we derive a set of new generalization bounds. Compared with typical mutual information bounds, the novel bounds have a cleaner presentation and wider applicability. Under the assumption that the excess risk satisfies the $(\eta,c)$-central condition, the bounds have the nice property that the convergence rate is directly determined by the mutual information term $I(W; Z_i)$. The bounds are shown to be asymptotically tight for various examples under the empirical risk minimization (ERM) algorithm.
    \item Furthermore, our results are extended to regularized ERM algorithms, and intermediate rates could be achieved with the relaxed $(v,c)$-central condition. The fast rate results are confirmed for a few simple regression and classification examples analytically or numerically, showing the effectiveness of the proposed bounds. 
\end{itemize}

\subsection{Related Works}\label{subsec:related_works}
There exist several works in the literature,  including \cite{hellstrom2021fast}, \cite{hellstrom2022new}, \cite{zhou2023stochastic}, and \cite{zhou2024exactly}, that aim to obtain a fast rate using information-theoretic bounds. The results in \cite{hellstrom2021fast,hellstrom2022new} contain bounds in terms of conditional mutual information is bounded uniformly in $n$, then a fast rate of $O(1/n)$ can be achieved. The result in \cite{zhou2023stochastic} combines the stochastic chaining technique with the individual mutual information and derives a bound that is asymptotically tight for the quadratic Gaussian problem. A recent work \cite{zhou2024exactly} gives a PAC-Bayes type bound that is shown to be exactly tight for the quadratic Gaussian problem for any $n$. Different from these works, our results emphasize the fact that even the original bound (without other tightening techniques) of (\ref{eq:gen-form-individual}) can give the correct convergence rate under an appropriate assumption. Furthermore, our derived bounds based on the $(\eta,c)$-central condition give a different perspective on the tightness of information-theoretic bounds for generalization. Our proposed $(\eta,c)$-central condition is inspired by the work of \cite{van2015fast,grunwald2020fast,grunwald2021pac}, and is closely related to other conditions such as the $\eta$-central condition \cite{van2015fast}, the witness condition \cite{grunwald2020fast,grunwald2021pac} and the Bernstein condition \cite{bartlett2006empirical}. We give a detailed comparison of these conditions and related bounds in Section \ref{sec:related}.

\section{Problem formulation} \label{sec:prob}
Consider a dataset $\mathcal{S}_n = \left( z_1,z_2,\cdots,z_n\right)$ where each instance $z_i$ is i.i.d. drawn from some distribution $\mu$, we would like to learn a hypothesis $w$ that exploits the properties of $\mathcal{S}_n$, with the aim of making predictions for previously unseen new data correctly. The choice of $w$ is performed within a set of member functions $\mathcal{W}$ with the possibly randomised algorithm $\mathcal{A}:\mathcal{Z}^n \rightarrow \mathcal{W}$, described by the conditional probability $P_{W|\mathcal S_n}$. We define the corresponding loss function $\ell: \mathcal{W}\times \mathcal{Z} \rightarrow \mathbb{R}$. Particularly if we consider the supervised learning problem, we can write $\mathcal{Z} = \mathcal{X} \times \mathcal{Y}$ and $z_i = (x_i, y_i)$ as a feature-label pair. Then the hypothesis $w: \mathcal{X} \rightarrow \mathcal{Y}$ can be regarded as a predictor for the input sample. We will call $(\mu, \ell, \mathcal{W}, \mathcal{A})$ a learning tuple. In a typical statistical learning problem, one may wish to minimize the \emph{expected} loss function $L_{\mu}(w) = E_{z\sim \mu}[\ell(w,z)]$. However, as the underlying distribution $\mu$ is usually unknown in practice, one may wish to learn $w$ by some learning principle, for example, one typical way is minimizing the empirical risk induced by the dataset $\mathcal{S}_n$, denoted as $w_{\ERM}$, such that 
\begin{equation}
w_{\ERM} = \argmin_{w\in \mathcal{W}}\frac{1}{n}\sum_{i=1}^{n}\ell(w,z_i),
\end{equation}
which will be employed as a predictor for the new data. We point out that many of our results obtained in this paper also hold for more general algorithms other than the ERM algorithm. Here we define $\hat{L}(w, \mathcal{S}_n) = \frac{1}{n}\sum_{i=1}^{n}\ell(w,z_i)$ as the empirical loss.  To assess how this predictor performs on unseen samples, the generalization error is then introduced to evaluate whether a learner suffers from over-fitting (or under-fitting). For any $w \in \mathcal{W}$, we define the generalization error as
\begin{equation}
\mathcal{E}(w, \mathcal{S}_n) := \mathbb{E}_{Z\sim \mu}[\ell(w,Z)] - \frac{1}{n}\sum_{i=1}^{n}\ell(w,z_i).  \label{eq:gen}
\end{equation}
Another important metric, the excess risk, is defined as
\begin{equation}
\mathcal{R}(w) := \mathbb{E}_{Z\sim \mu}[\ell(w,Z)] - \mathbb{E}_{Z\sim \mu}[\ell(w^*,Z)].
\end{equation}
where the optimal hypothesis for the true risk is defined as $w^*$ as 
\begin{equation}
w^{*} = \argmin_{w\in \mathcal{W}}E_{Z \sim \mu}[\ell (w,Z)],
\end{equation}
which is unknown in practice. The excess risk evaluates how well a hypothesis $w$ performs with respect to $w^*$ given the data distribution $\mu$. We also define the corresponding empirical excess risk as
\begin{equation}
\hat{\mathcal{R}}(w, \mathcal{S}_n) := \frac{1}{n}\sum_{i=1}^{n}r(w,z_i),
\end{equation}
where $r(w,z) = \ell(w,z) - \ell(w^*,z)$. In the sequel, we are particularly interested in bounding the expected generalization error $\mathbb{E}_{W\mathcal{S}_n}[\mathcal{E}(W, \mathcal{S}_n)]$ and the excess risk $\mathbb{E}_{W}[\mathcal{R}(W)]$  where the distribution $P_W$ is the marginal distribution of $P_{W\mathcal S_n}$, induced by the distribution of the data $\mathcal S_n$ and the algorithm $P_{W|\mathcal S_n}$.

\section{Tightness of Individual Mutual Information Bound}
\subsection{Existing Bounds}
In this section, we first review some known results on generalization error.  A representative result in the following result of individual mutual information bound by \cite{bu2020tightening}, which is a tightened version of the first results on this topic by \cite{xu2017information} and \cite{russo2016controlling}.
\begin{theorem}[Generalization error of Generic Hypothesis \cite{bu2020tightening}] \label{thm:gen_erm}
Assume that the cumulant generating function of the random variable $\ell(W, Z)-\E{\ell(W,Z)}$  is upper bounded by $\psi(-\lambda)$ in the interval $(b_{-},0)$ and $\psi(\lambda)$ in the interval $(0,b_{+})$ under the product distribution $P_W\otimes\mu$ for some $b_{-}<0$ and $b_{+}>0$ where $P_W$ is induced by the data distribution and the algorithm. Then the expectation of the  generalization error in (\ref{eq:gen}) is upper bounded as
\begin{align}
\Esub{W\mathcal{S}_n}{\mathcal{E} (W, \mathcal{S}_n) }\leq \frac{1}{ n}\sum_{i=1}^{ n}\psi^{*-1}_{-}(I(W;Z_i)), \\
-\Esub{W\mathcal{S}_n}{\mathcal{E} (W, \mathcal{S}_n)}\leq \frac{1}{ n}\sum_{i=1}^{n}\psi^{*-1}_{+}(I(W;Z_i)),
\end{align}
where we define
\begin{align}
\psi^{*-1}_{-}(x) :=\inf_{\lambda\in[0,-b_{-})}\frac{x+\psi(-\lambda)}{\lambda}, \\
\psi^{*-1}_{+}(x) :=\inf_{\lambda\in[0,b_{+})}\frac{x+\psi(\lambda)}{\lambda}.
\end{align} \label{thm:exp_gen_gamma}
\end{theorem}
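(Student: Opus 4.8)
The plan is to establish the two inequalities one at a time, each by a single-letter change-of-measure argument combined with linearity of expectation, and then sum over the index $i$.

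First I would rewrite the averaged generalization error as a sum of decoupling gaps. Because the population term uses a fresh sample $Z\sim\mu$ independent of $(W,\mathcal S_n)$, while the empirical term pairs $W$ with the actual $Z_i$, linearity gives
\begin{align}
\Esub{W\mathcal{S}_n}{\mathcal{E}(W,\mathcal{S}_n)} = \frac{1}{n}\sum_{i=1}^{n}\Big(\Esub{\bar W\bar Z}{\ell(\bar W,\bar Z)} - \Esub{W Z_i}{\ell(W,Z_i)}\Big),
\end{align}
where $(\bar W,\bar Z)\sim P_W\otimes\mu$ and $(W,Z_i)$ follows the true joint law $P_{W,Z_i}$ (the marginal of $P_{W\mathcal S_n}$ on the $i$-th coordinate). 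Writing $g_i(w,z):=\ell(w,z)-\Esub{\bar W\bar Z}{\ell(\bar W,\bar Z)}$ for the loss centered under $P_W\otimes\mu$, the $i$-th gap is exactly $-\Esub{P_{W,Z_i}}{g_i}$, and $g_i$ is precisely the variable whose cumulant generating function is assumed to be controlled.

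Next, fix $i$ and $\lambda\in(0,-b_{-})$ and apply the Donsker--Varadhan inequality $\Esub{P}{f}\le D(P\|Q)+\log\Esub{Q}{e^{f}}$ with $P=P_{W,Z_i}$, $Q=P_W\otimes\mu$ and $f=-\lambda g_i$:
\begin{align}
\Esub{P_{W,Z_i}}{-\lambda g_i}\;\le\; D\!\left(P_{W,Z_i}\,\|\,P_W\otimes\mu\right) + \log\Esub{P_W\otimes\mu}{e^{-\lambda g_i}} \;=\; I(W;Z_i) + \psi_{g_i}(-\lambda),
\end{align}
and by hypothesis $\psi_{g_i}(-\lambda)\le\psi(-\lambda)$ for such $\lambda$. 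Dividing by $\lambda>0$, rearranging, and optimizing over $\lambda$ yields $-\Esub{P_{W,Z_i}}{g_i}\le\inf_{\lambda\in[0,-b_-)}\frac{I(W;Z_i)+\psi(-\lambda)}{\lambda}=\psi^{*-1}_{-}(I(W;Z_i))$, with the usual convention that the quotient at $\lambda=0$ is its right limit ($+\infty$ unless $I(W;Z_i)=0$). Substituting into the decomposition above and dividing by $n$ gives the first bound. The second bound is obtained verbatim, now taking $f=+\lambda g_i$ with $\lambda\in(0,b_+)$, which bounds $\Esub{P_{W,Z_i}}{g_i}$ (the negative of the $i$-th gap) and hence $-\Esub{W\mathcal S_n}{\mathcal E(W,\mathcal S_n)}$ by $\frac{1}{n}\sum_{i=1}^{n}\psi^{*-1}_{+}(I(W;Z_i))$.

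There is no serious obstacle here; the points needing a word of care are (i) that $I(W;Z_i)=D(P_{W,Z_i}\|P_W\otimes P_{Z_i})$, so when it is infinite the stated bound is vacuous and nothing is to be proven, while when it is finite Donsker--Varadhan applies directly; (ii) that the Fubini step in the first display and the finiteness of $\Esub{P_W\otimes\mu}{e^{\pm\lambda g_i}}$ are guaranteed by the assumed finiteness of the cumulant generating function on the neighbourhoods $(b_-,0)$ and $(0,b_+)$; and (iii) that, crucially, no convexity of $\psi^{*-1}_{\pm}$ is invoked --- each summand is bounded before averaging, which is exactly the feature that produces an individual-sample bound rather than the weaker bound obtained by applying the change of measure to $I(W;\mathcal S_n)$ all at once.
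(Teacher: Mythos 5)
Your proof is correct and follows essentially the same route the paper (via its cited source \cite{bu2020tightening}) and its own proofs of the analogous results (e.g.\ Theorems \ref{thm:sub-Gaussian} and \ref{thm:eta-c}) rely on: decompose the expected generalization error into per-sample decoupling gaps between $P_{WZ_i}$ and $P_W\otimes\mu$, apply the Donsker--Varadhan representation with $f=\mp\lambda\,(\ell-\mathbb{E}_{P_W\otimes\mu}[\ell])$, invoke the CGF bound $\psi(\mp\lambda)$ on the appropriate interval, and optimize over $\lambda$ before averaging over $i$. No gaps; your remarks on the $\lambda=0$ convention, infinite $I(W;Z_i)$, and bounding each summand individually are exactly the right points of care.
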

Under different assumptions, the above theorem can be specialized to cases when the loss function is sub-exponential, sub-Gamma, or sub-Gaussian random variables by identifying the CGF bounding function $\psi(\lambda)$. We give the example when the loss function is sub-Gaussian in the following.  More discussions on the CGF bounding function can be found in \cite{jiao2017dependence,bu2020tightening}.

\begin{example}[Sub-Gaussian bound]
We say $X$ is a $\sigma$-sub-Gaussian random variable with variance parameter $\sigma$ if it holds
$\log \mathbb{E}\left[e^{\lambda (X-\mathbb{E}[X])}\right] \leq \frac{\sigma^{2} \lambda^{2}}{2}$ for all $\lambda \in \mathbb{R}$.
Suppose that $\ell({W}, {Z})$ is $\sigma$-sub-Gaussian under the distribution $P_{W} \otimes \mu$ where $P_W$ is the marginal induced the algorithm $\mathcal{A}$ and data distribution $\mu$, then
\begin{align}
    \mathbb{E}_{W\mathcal{S}_n} \left[\mathcal{E}(W, \mathcal{S}_n)\right]  \leq \frac{1}{n} \sum_{i=1}^{n} \sqrt{2 \sigma^{2} I\left(W ; Z_{i}\right)}. \label{eq:bu_result}
\end{align}
\end{example}


If the mutual information term $I(W;Z_i)$ scales as $O(1/n)$ (which is the case for the quadratic Gaussian problem in Example \ref{sec:example}) and if the sub-Gaussian parameter $\sigma^2$ is a constant, then the bound in (\ref{eq:bu_result}) scales as $O(\sqrt{1/n})$ because of the square root. Therefore it is usually recognized that the \emph{square root} sign prevents us from the fast rate, exemplified in the following simple quadratic Gaussian mean estimation problem.
\begin{example}\label{sec:example}
Let $\ell(w,z_i) = (w-z_i)^2$, each sample is drawn from some Gaussian distribution, $Z_i \sim \mathcal{N}(\mu, \sigma_{N}^2)$. We consider the ERM algorithm that gives,
\begin{align*}
 W_{\ERM} = \frac{1}{n} \sum_{i=1}^{n} Z_i \sim \mathcal{N}(\mu, \frac{\sigma_{N}^2}{n}).
\end{align*}
The true generalization error can be calculated to be
 $  \Esub{W\mathcal{S}_n}{\mathcal{E}(W_\ERM, \mathcal{S}_n)} = \frac{2\sigma_{N}^2}{n}$.
To evaluate the upper bound in~Theorem~\ref{thm:gen_erm} for this example, we notice that for any $i$, $\ell(W,Z_i) \sim \frac{n+1}{n}\sigma_{N}^2 \chi_{1}^{2}$ where $\chi^2_1$ denotes the chi-squared distribution with 1 degree of freedom. Hence, the cumulant generating function can be calculated as,
\begin{align*}
\log \mathbb{E}_{P_W\otimes \mu}&\left[e^{\eta(\ell(W,Z)-\mathbb{E}[\ell(W,Z)])}\right] \\
& = - \sigma_{W}^2 \eta -\frac{1}{2} \log \left(1-2\sigma_{W}^2 \eta \right),
\end{align*}
where $\eta \leq \frac{1}{2\sigma^2_W}$ and $\sigma^2_W = \frac{n+1}{n}\sigma_{N}^2$ to simplify the notation. In this case, it can be proved that,
\begin{align}
    - \sigma_{W}^2 \eta -\frac{1}{2} \log \left(1-2\sigma_{W}^2 \eta \right) \leq \sigma_W^4\eta^2. \label{eq:CGF}
\end{align}
We can also calculate the mutual information as 
  $I(W;Z_i) = \frac{1}{2}\log\frac{n}{n-1}$.
With the upper bound on the CGF in (\ref{eq:CGF}), the bound becomes
\begin{align}
    \mathbb{E}_{W\mathcal{S}_n} \left[\mathcal{E}(W, \mathcal{S}_n)\right] & \leq \frac{\sigma_{N}^2}{n} \sum_{i=1}^{n} \sqrt{2\frac{(n+1)^2}{n^2}\log\frac{n}{n-1}} \nonumber \\
    &\leq \sigma^2_N\sqrt{\frac{2(n+1)^2}{(n-1)^3}},
\end{align}
which will be of the order $O(\frac{1}{\sqrt{n}})$ as $n$ goes to infinity, which leads to a slow convergence rate as the true generalization error is $O(\frac{1}{n})$. The detailed calculations can be found in Section IV of~\cite{bu2020tightening}. For completeness and to facilitate further comparison with fast-rate bounds, we also include a full derivation in Appendix~\ref{apd:example2}.
\end{example}
\subsection{Restoring the Fast Rate}
In this section, we show that, in fact, the same bound can be used to derive the correct (fast) convergence rate of $O(1/n)$, with a small yet important change on the assumption. Intuitively speaking, to achieve a fast rate bound for both the generalization error and the excess risk in expectation, the output hypothesis of the learning algorithm must be ``good" enough compared to the optimal hypothesis $w^*$. Here we encode the notion of goodness in terms of the cumulant generating function by controlling the gap between $\ell(w, Z)$ and $\ell(w^*, Z)$. To facilitate such an idea, we make the sub-Gaussian assumption w.r.t. the excess risk and bound the generalization error as follows.
\begin{theorem}\label{thm:sub-Gaussian}
Suppose that $r({W}, {Z})$ is $\sigma$-sub-Gaussian under distribution $P_{W} \otimes \mu$, then
\begin{align}
    \mathbb{E}_{W\mathcal{S}_n} \left[\mathcal{E}(W, \mathcal{S}_n)\right] \leq \frac{1}{n} \sum_{i=1}^{n} \sqrt{2 \sigma^{2} I\left(W ; Z_{i}\right)}. \label{eq:our_result}
\end{align}
Furthermore, the excess risk can be bounded by,
\begin{align}
    \mathbb{E}_{W} \left[\mathcal{R}(W)\right]  \leq &\mathbb{E}_{W\mathcal{S}_n} \left[\hat{\mathcal{R}}(W, \mathcal{S}_n)\right] \nonumber \\
    & + \frac{1}{n} \sum_{i=1}^{n} \sqrt{2 \sigma^{2} I\left(W ; Z_{i}\right)} . \label{eq:our_result_excess}
\end{align}
\end{theorem}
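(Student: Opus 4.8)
The plan is to deduce both inequalities from the already-established bound of Theorem~\ref{thm:gen_erm}, applied not to the loss $\ell$ but to the excess-loss function $r(w,z)=\ell(w,z)-\ell(w^*,z)$. The key observation, which is also the conceptual content of the theorem, is that the \emph{expected} generalization error is unchanged under this substitution. Indeed, since $w^*$ is a fixed, data-independent hypothesis, the quantity $\ell(w^*,z)$ is a function of $z$ alone, so writing $\ell(w,z)=r(w,z)+\ell(w^*,z)$ gives
\begin{align*}
\mathcal{E}(w,\mathcal{S}_n)=\left(\mathbb{E}_{Z\sim\mu}[r(w,Z)]-\frac{1}{n}\sum_{i=1}^{n}r(w,z_i)\right)+\left(\mathbb{E}_{Z\sim\mu}[\ell(w^*,Z)]-\frac{1}{n}\sum_{i=1}^{n}\ell(w^*,z_i)\right),
\end{align*}
and the second bracket has mean zero over $\mathcal{S}_n$ because the $z_i$ are i.i.d.\ from $\mu$. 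Taking expectation over $W\mathcal{S}_n$ therefore yields $\mathbb{E}_{W\mathcal{S}_n}[\mathcal{E}(W,\mathcal{S}_n)]=\mathbb{E}_{W\mathcal{S}_n}\left[\mathbb{E}_{Z\sim\mu}[r(W,Z)]-\hat{\mathcal{R}}(W,\mathcal{S}_n)\right]$, i.e.\ the expected generalization error coincides with the expected ``generalization error of $r$''.

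The first step is then to invoke Theorem~\ref{thm:gen_erm}, equivalently its sub-Gaussian specialization~(\ref{eq:bu_result}), with the role of the loss played by $r$ instead of $\ell$: the hypothesis that $r(W,Z)$ is $\sigma$-sub-Gaussian under $P_W\otimes\mu$ is exactly the cumulant generating function condition needed with $\psi(\lambda)=\sigma^2\lambda^2/2$, and the mutual information terms $I(W;Z_i)$ are unaffected because they depend only on the joint law of $(W,Z_i)$, not on which function of the pair we evaluate. This gives $\mathbb{E}_{W\mathcal{S}_n}\left[\mathbb{E}_{Z\sim\mu}[r(W,Z)]-\hat{\mathcal{R}}(W,\mathcal{S}_n)\right]\le\frac{1}{n}\sum_{i=1}^{n}\sqrt{2\sigma^2 I(W;Z_i)}$, which combined with the identity above proves~(\ref{eq:our_result}).

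The second step handles the excess risk via the trivial identity $\mathcal{R}(W)=\mathbb{E}_{Z\sim\mu}[r(W,Z)]$, which after adding and subtracting the empirical excess risk becomes
\begin{align*}
\mathbb{E}_{W}[\mathcal{R}(W)]=\mathbb{E}_{W\mathcal{S}_n}\left[\hat{\mathcal{R}}(W,\mathcal{S}_n)\right]+\mathbb{E}_{W\mathcal{S}_n}\left[\mathbb{E}_{Z\sim\mu}[r(W,Z)]-\hat{\mathcal{R}}(W,\mathcal{S}_n)\right].
\end{align*}
The last term is precisely what was bounded above by $\frac{1}{n}\sum_{i=1}^{n}\sqrt{2\sigma^2 I(W;Z_i)}$, so~(\ref{eq:our_result_excess}) follows.

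Each step is a one-line manipulation, so there is no real technical obstacle; the only delicate point is that the reference $w^*$ must be deterministic (it may depend on $\mu$ but not on $\mathcal{S}_n$), since this is what makes the $\ell(w^*,\cdot)$ term average out and thus lets the sub-Gaussian assumption be transplanted from $\ell$ onto $r$ at no cost in the mutual information. It is worth flagging in the write-up that the assumption has genuinely changed in content and not merely in form: $r(W,Z)$ being sub-Gaussian encodes that the learned hypothesis is close to $w^*$ in a strong sense, and this is the mechanism that will ultimately be responsible for the fast rate in Example~\ref{sec:example}.
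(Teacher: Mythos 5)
Your proposal is correct and follows essentially the same route as the paper: both proofs use the fact that $w^*$ is data-independent to rewrite the expected generalization error as the (per-sample) gap between the excess risk under $P_W\otimes\mu$ and under the joint law, and then apply the standard sub-Gaussian/Donsker--Varadhan decoupling bound to $r(W,Z_i)$ — which is exactly invoking the bound of Theorem~\ref{thm:gen_erm} with $r$ in place of $\ell$ — with the excess-risk bound obtained by adding and subtracting the empirical excess risk. The paper merely unrolls the decoupling inequality explicitly rather than citing (\ref{eq:bu_result}) as a black box, so there is no substantive difference.
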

The proof of this result is given in Appendix \ref{proof:sub-Gaussian}. Notice that the expression in (\ref{eq:our_result}) is identical to that in (\ref{eq:bu_result}), with the only difference that the sub-Gaussian condition is assumed on $r(W,Z)$ for the former and on $\ell(W,Z)$ for the latter. Now we evaluate the bound in~Theorem~\ref{thm:sub-Gaussian} for the Gaussian example.
\begin{example}[Continuing from Example~\ref{sec:example}] \label{example:sub-Gaussian-2}
Consider the settings in Example~\ref{sec:example}. First, we note that the expected risk minimizer $w^*$ is calculated as $\mu$. Then we have $r(w,z_i) = (w - z_i)^2 - (\mu - z_i)^2$. The expected excess risk can be calculated as $\mathbb{E}_{W}[\mathcal{R}(W)] = \frac{\sigma_{N}^2}{n}$. We can then calculate the cumulant generating function as,
\begin{align*}
\log \mathbb{E}_{P_W\otimes \mu}\left[e^{\eta(r(W,Z)-\mathbb{E}[r(W,Z)])}\right] \leq \frac{4\eta^2\sigma_N^4}{n},
\end{align*}
for any $\eta \in \mathbb{R}$ and any
$n >  \max\left\{\frac{(4\eta^2\sigma^4_N+\eta\sigma^2_N)(2\eta^2\sigma^4_N+\eta\sigma^2_N)}{\eta^2\sigma^4_N}, 4\eta^2\sigma_N^4 + 2 \eta\sigma_N^2\right\} $
where the detailed calculations can be found in Appendix~\ref{apd:example3}. Hence $r(W,Z)$ is $\sqrt{\frac{8\sigma_N^4}{n}}$-sub-Gaussian under the distribution $P_{W} \otimes \mu$. Then the bound becomes,
\begin{align*}
    \mathbb{E}_{W\mathcal{S}_n} \left[\mathcal{E}(W, \mathcal{S}_n)\right] &\leq \frac{\sigma_N^2}{n} \sum_{i=1}^{n} \sqrt{\frac{8}{n}\log\frac{n}{n-1}} \\
    & \leq \frac{2\sqrt{2}\sigma^2_N}{n-1},
\end{align*}
which is $O(1/n)$, yielding a fast rate characterization.
\end{example}

Unlike prior results that assume the loss function $\ell(W, Z)$ is $\sigma$-sub-Gaussian, we instead assume that the \emph{excess risk} $r(W, Z)$ is $\sigma$-sub-Gaussian. Although the bound in~(\ref{eq:our_result}) takes the same form as~(\ref{eq:bu_result}), the key distinction is that in our setting, $\sigma$ (taking the form $\sqrt{8\sigma_N^4/n}$) varies with the sample size $n$ and in fact vanishes as $n$ increases, leading to a tighter asymptotic bound. This behavior does not arise under the previous assumption, as illustrated in Example~\ref{sec:example}. Moreover, the excess risk can be more directly controlled, as shown in~(\ref{eq:our_result_excess}). While the shift from bounding the loss to bounding the excess risk may initially appear unconventional, we provide further intuition and theoretical justification for this choice in the following subsection.


\subsection{Tightness and Justification for the Quadratic Gaussian Problem} \label{subsec:justification}
In the following, we examine the tightness of the bound and show why $r(W, Z)$ is a more sensible choice, specifically for the quadratic Gaussian problem. To this end, we recall that the main technical tool for deriving the bound is the variational representation of the KL divergence. Specifically, let $X$ be a random variable with alphabet $\mathcal{X}$ and let $P, Q$ be two probability density functions. The KL Divergence admits the following dual representation \cite{donsker1975asymptotic}:
\begin{align}
  D(P \| Q)=\sup _{f: \mathcal{X} \rightarrow \mathbb{R}} \mathbb{E}_P[f(X)]-\log \left(\mathbb{E}_Q\left[e^{f(X)}\right]\right), \label{eq:donsker}
\end{align}
and the tightness of the bound hinges on the choice of the function $f$ in (\ref{eq:donsker}). It is well known that under mild conditions \cite{donsker1975asymptotic}, the optimal function for the Donsker-Varadhan representation is achieved by $f'(dP_{W|Z_i}/dP_W$) where $f(t) = t\log t$. For the quadratic Gaussian problem, we now calculate this optimizer explicitly and show that the choice of $r(W, Z_i)$ is, in fact, the right choice.

To this end, we firstly calculate the densities of $P_W$ and $P_{W|Z_i}$ as: $dP_W = \frac{\sqrt{n}}{\sqrt{2\pi\sigma_N^2}} \exp(\frac{-(W - \mu)^2 n }{2\sigma_N^2})$, $dP_{W|Z_i} = \frac{n}{\sqrt{2\pi \sigma_N^2(n-1)}} \exp(-\frac{(W - \frac{n-1}{n}\mu - \frac{1}{n}Z_i)^2n^2}{2\sigma_N^2(n-1)}).$ Then we can calculate the optimizer as:
\begin{align*}
    &f'(dP_{W|Z_i}/dP_W) = \log \frac{dP_{W|Z_i}}{dP_W} + 1 \\
    &= - \frac{(w-z_i)^2 - (\mu - z_i)^2}{2\sigma_N^2} - \frac{(w-z_i)^2}{2\sigma_N^2(n-1)} \\
     & \quad + 1 + \frac{1}{2}\log\frac{n}{n-1} 
\end{align*}
for fixed $w$ and $z_i$. The above function can be written as:
\begin{align*}
    f'(dP_{W|Z_i}/dP_W) = &-\frac{r(w,z_i)}{2\sigma_N^2} - \frac{\ell(w,z_i)}{2\sigma_N^2(n-1)} \\
    & + \frac{1}{2}\log\frac{n}{n-1} + 1.
\end{align*}
The unexpected excess risk $r(w,z_i)$ clearly appears in the optimizer with some scaling factor and shifting constant, which, however, will not affect the convergence. To rigorously show this, we state the following result.
\begin{lemma}\label{lemma:tightness}
Consider the quadratic Gaussian problem in Example \ref{sec:example} and the Donsker-Varadhan representation in (\ref{eq:donsker}). Let $\eta = \frac{1}{2\sigma^2_N}$, the function $R =-\eta r(w,z_i)$ satisfies the following inequality:
\begin{align*}
    \frac{n-1}{n} I(W;Z_i) &\leq \mathbb{E}_{WZ_i}[R] - \log \mathbb{E}_{P_W\otimes \mu}[e^{R}] \\
    &\leq I(W;Z_i),
\end{align*}
\end{lemma}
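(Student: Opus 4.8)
The plan is to compute the Donsker--Varadhan functional in (\ref{eq:donsker}) for the specific test function $f = -\eta r$ with $\eta = \tfrac{1}{2\sigma_N^2}$, show that it equals exactly $\tfrac{1}{2n}$, and then sandwich this value between $\tfrac{n-1}{n}I(W;Z_i)$ and $I(W;Z_i)$, where $I(W;Z_i)=\tfrac12\log\tfrac{n}{n-1}$. The upper inequality is essentially free: taking $P = P_{WZ_i}$ and $Q = P_W\otimes\mu$ in (\ref{eq:donsker}), the supremum over all $f$ equals $D(P_{WZ_i}\|P_W\otimes\mu)=I(W;Z_i)$, so the value attained by the single function $f=-\eta r$ cannot exceed $I(W;Z_i)$. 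Hence the real work lies in (i) evaluating the functional and (ii) proving the lower bound $\tfrac{n-1}{n}I(W;Z_i)\le\tfrac{1}{2n}$.

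For step (i), I would first handle the mean term $\mathbb{E}_{WZ_i}[-\eta r(W,Z_i)]$ under the joint law. Setting $A:=W-\mu$ and $B:=Z_i-\mu$ and expanding, one gets the identity $r(W,Z_i)=(W-\mu)(W+\mu-2Z_i)=A^2-2AB$. Under the joint law $A$ and $B$ are zero-mean jointly Gaussian with $\mathrm{Var}(A)=\sigma_N^2/n$, $\mathrm{Var}(B)=\sigma_N^2$, and (because $W=\tfrac1n\sum_j Z_j$) $\mathrm{Cov}(A,B)=\sigma_N^2/n$, so $\mathbb{E}[r(W,Z_i)]=\sigma_N^2/n-2\sigma_N^2/n=-\sigma_N^2/n$, giving $\mathbb{E}_{WZ_i}[-\eta r(W,Z_i)]=\eta\sigma_N^2/n=\tfrac{1}{2n}$.

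Next I would evaluate the log-MGF term under the product law $P_W\otimes\mu$, where $A$ and $B$ are now \emph{independent} zero-mean Gaussians with the same variances. Conditioning on $A$ and applying the Gaussian MGF gives $\mathbb{E}_B[e^{2\eta AB}]=e^{2\eta^2\sigma_N^2 A^2}$, hence $\mathbb{E}_{P_W\otimes\mu}[e^{-\eta r}]=\mathbb{E}_A[e^{(2\eta^2\sigma_N^2-\eta)A^2}]$. The key observation—and the reason $\eta=\tfrac{1}{2\sigma_N^2}$ is the natural choice—is that the coefficient $2\eta^2\sigma_N^2-\eta$ vanishes identically at this value of $\eta$, so the expectation is $1$ and the log-MGF term is $0$. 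Therefore the functional for $f=-\eta r$ equals precisely $\tfrac{1}{2n}-0=\tfrac{1}{2n}$.

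Finally, for step (ii) it remains to show $(n-1)\log\tfrac{n}{n-1}\le 1\le n\log\tfrac{n}{n-1}$, which rewrites the desired sandwich after multiplying through by $2n$. Both follow from the elementary bound $\log(1+t)\le t$: with $t=\tfrac{1}{n-1}$ one gets $\log\tfrac{n}{n-1}\le\tfrac{1}{n-1}$ (the left inequality), and with $t=-\tfrac1n$ one gets $\log(1-\tfrac1n)\le-\tfrac1n$, i.e. $\log\tfrac{n}{n-1}\ge\tfrac1n$ (the right inequality). There is no deep obstacle; the only points needing care are keeping track of which variables are dependent (the joint measure, where the covariance contributes the $-\sigma_N^2/n$ term) versus independent (the product measure, where the cross term integrates out), and noticing the exact cancellation in the log-MGF that makes the functional reduce cleanly to $\tfrac{1}{2n}$.
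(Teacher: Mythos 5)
Your proposal is correct and follows essentially the same route as the paper: the upper inequality comes for free from the Donsker--Varadhan supremum, and the lower inequality is obtained by showing the functional at $f=-\eta r$ with $\eta=\tfrac{1}{2\sigma_N^2}$ equals exactly $\tfrac{1}{2n}$ (mean term $\tfrac{1}{2n}$, log-MGF term $0$ by the exact cancellation $2\eta^2\sigma_N^2-\eta=0$) and comparing with $I(W;Z_i)=\tfrac12\log\tfrac{n}{n-1}$. Your explicit verification of $(n-1)\log\tfrac{n}{n-1}\le 1\le n\log\tfrac{n}{n-1}$ via $\log(1+t)\le t$ simply spells out a step the paper leaves implicit.
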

The proof of the above Lemma is given in Appendix \ref{proof:lemma_tightness}.  The above lemma demonstrates that the variational representation is essentially tight for the quadratic Gaussian problem. As in this case, the RHS of (\ref{eq:donsker}) is essentially equal to the divergence $D(P||Q)$ (which is the mutual information $I(W; Z_i)$ for the quadratic Gaussian problem) on the LHS. On the other hand, it can be checked straightforwardly that the loss function $\ell(W, Z)$ does not admit a tight approximation for $I(W; Z_i)$.

For the quadratic Gaussian problem, we can, in fact, show that mutual information also gives a lower bound for both the excess risk and generalization error for the Gaussian mean estimation problem.
\begin{lemma}[Matching Lower Bound]\label{thm:lower_bounds}
Consider the quadratic Gaussian mean estimation problem with the ERM algorithm. With a large $n$, we have: 
\begin{align*}
    &\mathbb{E}_{P_W \otimes \mu}[r(W,Z)] \geq  \frac{2\sigma^2_N}{n}\sum_{i=1}^{n}I(W;Z_i) \\
    & + \mathbb{E}_{W\mathcal{S}_n}[\hat{\mathcal{R}}(W,\mathcal{S}_n)] - \frac{1}{n-1} \mathbb{E}_{W\mathcal{S}_n}[\mathcal{E}(W,\mathcal{S}_n)].
\end{align*}
For the generalization error, we have:
\begin{align*}
    \mathbb{E}_{W \mathcal{S}_n}[\mathcal{E}(W,\mathcal{S}_n)] \geq 2\sigma_N^2 \frac{n-1}{n^2}\sum_{i=1}^{n} I(W;Z_i).
\end{align*}
\end{lemma}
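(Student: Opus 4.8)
The plan is to compute everything in closed form for the quadratic Gaussian mean estimation problem, since all the relevant quantities are explicit. Recall from Examples~\ref{sec:example} and~\ref{example:sub-Gaussian-2} that $W_\ERM \sim \mathcal{N}(\mu,\sigma_N^2/n)$, that $w^*=\mu$, that $I(W;Z_i)=\tfrac12\log\frac{n}{n-1}$, and that $\mathbb{E}_W[\mathcal{R}(W)] = \mathbb{E}_{P_W\otimes\mu}[r(W,Z)] = \sigma_N^2/n$ while $\mathbb{E}_{W\mathcal{S}_n}[\mathcal{E}(W,\mathcal{S}_n)] = 2\sigma_N^2/n$. The first step is simply to express the empirical excess risk in expectation: since $\hat{\mathcal{R}}(W,\mathcal{S}_n) = \mathcal{R}(W) - \mathcal{E}(W,\mathcal{S}_n) + (\text{a term involving } w^*)$ — more precisely $\hat{\mathcal{R}}(w,\mathcal{S}_n) = \hat L(w,\mathcal{S}_n) - \hat L(w^*,\mathcal{S}_n)$ and $\mathcal{R}(w) = L_\mu(w) - L_\mu(w^*)$ — we get $\mathbb{E}_{W\mathcal{S}_n}[\hat{\mathcal{R}}(W,\mathcal{S}_n)] = \mathbb{E}_W[\mathcal{R}(W)] - \mathbb{E}_{W\mathcal{S}_n}[\mathcal{E}(W,\mathcal{S}_n)] + \mathbb{E}_{\mathcal{S}_n}[\mathcal{E}(w^*,\mathcal{S}_n)]$, and $\mathbb{E}_{\mathcal{S}_n}[\mathcal{E}(w^*,\mathcal{S}_n)]=0$ because $w^*$ is data-independent. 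So $\mathbb{E}_{W\mathcal{S}_n}[\hat{\mathcal{R}}(W,\mathcal{S}_n)] = \sigma_N^2/n - 2\sigma_N^2/n = -\sigma_N^2/n$.

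Given these closed forms, both claimed inequalities become numerical comparisons. For the first, plugging in: the right-hand side equals $\frac{2\sigma_N^2}{n}\cdot n\cdot\tfrac12\log\frac{n}{n-1} + (-\tfrac{\sigma_N^2}{n}) - \frac{1}{n-1}\cdot\frac{2\sigma_N^2}{n} = \sigma_N^2\log\frac{n}{n-1} - \frac{\sigma_N^2}{n} - \frac{2\sigma_N^2}{n(n-1)}$, and we must check this is at most $\mathbb{E}_{P_W\otimes\mu}[r(W,Z)] = \sigma_N^2/n$. Dividing by $\sigma_N^2$, this reduces to showing $\log\frac{n}{n-1} \le \frac{2}{n} + \frac{2}{n(n-1)} = \frac{2}{n-1}$, which follows from the elementary bound $\log\frac{n}{n-1} = -\log(1-\tfrac1n) \le \frac{1/n}{1-1/n} = \frac{1}{n-1} \le \frac{2}{n-1}$ (valid for $n\ge2$; this is where the "large $n$" hypothesis is used, though in fact any $n\ge2$ works). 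For the second inequality, the right-hand side is $2\sigma_N^2\frac{n-1}{n^2}\cdot n\cdot\tfrac12\log\frac{n}{n-1} = \sigma_N^2\frac{n-1}{n}\log\frac{n}{n-1}$, and we must check this is at most $2\sigma_N^2/n$, i.e. $\frac{n-1}{n}\log\frac{n}{n-1}\le\frac2n$, i.e. $(n-1)\log\frac{n}{n-1}\le2$; again $(n-1)\log\frac{n}{n-1} = -(n-1)\log(1-\tfrac1n) \le (n-1)\cdot\frac{1/n}{1-1/n} = 1 \le 2$.

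The only genuine obstacle is bookkeeping: making sure the relationship $\mathbb{E}_{W\mathcal{S}_n}[\hat{\mathcal{R}}(W,\mathcal{S}_n)]$ in terms of $\mathbb{E}_W[\mathcal{R}(W)]$ and $\mathbb{E}_{W\mathcal{S}_n}[\mathcal{E}(W,\mathcal{S}_n)]$ is handled correctly, using that $w^*=\mu$ is deterministic so that $\mathbb{E}_{\mathcal{S}_n}[\hat L(w^*,\mathcal{S}_n)] = L_\mu(w^*)$ exactly (no generalization gap for the fixed comparator). An alternative, slightly more robust route that avoids even this step is to re-derive the second (generalization-error) inequality first purely from $\mathbb{E}_{W\mathcal{S}_n}[\mathcal{E}] = 2\sigma_N^2/n$ and $I(W;Z_i)=\tfrac12\log\frac{n}{n-1}$, then obtain the first inequality by combining it with the identity $\mathbb{E}_W[\mathcal{R}(W)] = \mathbb{E}_{W\mathcal{S}_n}[\hat{\mathcal{R}}(W,\mathcal{S}_n)] + \mathbb{E}_{W\mathcal{S}_n}[\mathcal{E}(W,\mathcal{S}_n)]$ and the numerical bound $2I(W;Z_i)\le\frac{2}{n-1} + \frac{2}{n(n-1)}$ handled as above. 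Either way the proof is a short sequence of substitutions plus the single logarithmic inequality $-\log(1-\tfrac1n)\le\frac{1}{n-1}$.
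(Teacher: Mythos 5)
Your proof is correct. The closed forms you invoke ($I(W;Z_i)=\tfrac{1}{2}\log\tfrac{n}{n-1}$, $\mathbb{E}_{P_W\otimes\mu}[r(W,Z)]=\sigma_N^2/n$, $\mathbb{E}_{W\mathcal{S}_n}[\mathcal{E}(W,\mathcal{S}_n)]=2\sigma_N^2/n$, hence $\mathbb{E}_{W\mathcal{S}_n}[\hat{\mathcal{R}}(W,\mathcal{S}_n)]=-\sigma_N^2/n$) all match the paper's calculations, your identity for the empirical excess risk is valid because $\mathbb{E}_{\mathcal{S}_n}[\mathcal{E}(w^*,\mathcal{S}_n)]=0$ for the fixed comparator $w^*=\mu$, and both inequalities indeed reduce to $-\log(1-\tfrac{1}{n})\le\tfrac{1}{n-1}$, so they hold for every $n\ge 2$. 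However, your route is genuinely different from the paper's. The paper never substitutes the numerical value of the mutual information: it takes the exact Donsker--Varadhan optimizer computed in Section~\ref{subsec:justification}, namely (up to additive constants) the function $-\frac{r(w,z_i)}{2\sigma_N^2}-\frac{\ell(w,z_i)}{2\sigma_N^2(n-1)}$, writes $I(W;Z_i)$ exactly as the corresponding expectation minus log-moment-generating-function under $P_W\otimes\mu$, lower-bounds the log-MGF term by Jensen's inequality, and then rearranges and averages over $i$ to obtain both displayed bounds with $I(W;Z_i)$ appearing structurally on the right-hand side. Your argument buys brevity and an explicit range of validity ($n\ge2$ rather than ``large $n$''), but it is a pure closed-form verification: it offers no explanation of why the mutual information should appear in a lower bound, and it cannot be reused in any setting where the generalization error, excess risk, and $I(W;Z_i)$ are not all explicitly computable, whereas the paper's variational argument is precisely what underpins the tightness discussion of Section~\ref{subsec:justification} and would extend to other problems in which the density ratio $dP_{W|Z_i}/dP_W$ can be expressed through the excess risk.
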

The proof of this result is given in Appendix \ref{proof:lowerbound}. From the above results, we observe that the individual sample mutual information appears in both the upper and lower bounds. For the generalization error, the upper and lower bounds are matched regarding the convergence rate with different leading constants. For the excess risk in the Gaussian mean example, the upper bound is tight since the empirical excess risk and generalization error are both of $O(\frac{1}{n})$. 

\section{Novel Fast Rate Results}
\subsection{New Bounds with $(\eta,c)$-Central Condition}

As observed in the results from the previous section, while our new bound is tight, it still contains a square root term in the bound. In fact, many information-theoretic bounds for generalization error \cite{xu2017information, raginsky2016information, bu2020tightening, zhou2022individually, steinke2020reasoning, russo2016controlling} contain the square root with the sub-Gaussian assumption, which is often seen as the obstacle to achieving fast rate results. As the first result in this section, we provide an alternative bound based on the sub-Gaussian assumption that does not contain the square root. 

\begin{theorem}[Fast Rate with Sub-Gaussian Condition]\label{thm:sub-Gaussianv2}
Assume that $r(W, Z)$ is $\sigma$-sub-Gaussian under the distribution $P_W \otimes \mu$. Then it holds that
\begin{align}
     \mathbb{E}_{W\mathcal{S}_n} \left[\mathcal{E}(W, \mathcal{S}_n)\right] \leq & \frac{1-a_\eta}{a_\eta} \mathbb{E}_{W\mathcal{S}_n}[\hat{\mathcal{R}(W,\mathcal{S}_n)}] \nonumber \\
     & + \frac{1}{n\eta a_\eta} \sum_{i=1}^{n}  I\left(W ; Z_{i}\right). \label{eq:sub-Gaussian}
\end{align}
for any $ 0 < \eta < \frac{2\mathbb{E}_{P_W \otimes \mu}[r(W,Z_i)]}{\sigma^2}$ and $a_\eta = 1-  \frac{\eta\sigma^2}{2\mathbb{E}_{P_W \otimes \mu}[r(W,Z_i)]}$. Furthermore, the expected excess risk is bounded by:
\begin{align*}
 \mathbb{E}_{W}[\mathcal{R}(W)] \leq & \frac{1}{a_\eta} \mathbb{E}_{W\mathcal{S}_n}[\hat{\mathcal{R}(W,\mathcal{S}_n)}] \\
 & + \frac{1}{n\eta a_\eta} \sum_{i=1}^{n}  I\left(W ; Z_{i}\right).
\end{align*}
\end{theorem}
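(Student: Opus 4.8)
The plan is to combine the Donsker--Varadhan variational representation of mutual information (already used in Section~\ref{subsec:justification}) with the sub-Gaussian hypothesis on the excess risk, plus a bookkeeping identity linking the expected generalization error to the expected excess risk. First I would record that identity:
\[
\mathbb{E}_{W\mathcal{S}_n}[\mathcal{E}(W,\mathcal{S}_n)] = \mathbb{E}_{W}[\mathcal{R}(W)] - \mathbb{E}_{W\mathcal{S}_n}[\hat{\mathcal{R}}(W,\mathcal{S}_n)],
\]
which follows by substituting $\ell(\cdot,z) = r(\cdot,z) + \ell(w^*,z)$ into both $\mathbb{E}_{Z\sim\mu}[\ell(W,Z)]$ and $\hat{L}(W,\mathcal{S}_n)$ and noting that the $w^*$ contributions cancel in expectation, since $\mathbb{E}_{\mathcal{S}_n}\big[\tfrac1n\sum_i \ell(w^*,Z_i)\big] = \mathbb{E}_{Z\sim\mu}[\ell(w^*,Z)]$. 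Consequently it is enough to prove the excess-risk bound; the generalization-error bound then follows by subtracting $\mathbb{E}_{W\mathcal{S}_n}[\hat{\mathcal{R}}(W,\mathcal{S}_n)]$ from both sides and using $\tfrac{1}{a_\eta} - 1 = \tfrac{1-a_\eta}{a_\eta}$.

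For the excess-risk bound, for each $i$ I would apply the Donsker--Varadhan lower bound in~(\ref{eq:donsker}) to $D(P_{W\mid Z_i=z_i}\Vert P_W)$ with the test function $f(w) = -\eta\, r(w,z_i)$ --- the ``excess risk'' choice suggested by the optimizer computation in Section~\ref{subsec:justification} --- and then take expectation over $Z_i\sim\mu$, yielding
\[
I(W;Z_i) \;\ge\; -\eta\, \mathbb{E}_{W\mathcal{S}_n}[r(W,Z_i)] \;-\; \mathbb{E}_{Z_i}\log \mathbb{E}_{P_W}\!\big[e^{-\eta r(W,Z_i)}\big].
\]
Next I would pull the $Z_i$-expectation inside the logarithm by Jensen's inequality (concavity of $\log$), turning the last term into $-\log \mathbb{E}_{P_W\otimes\mu}[e^{-\eta r(W,Z)}]$, and invoke $\sigma$-sub-Gaussianity of $r(W,Z)$ under $P_W\otimes\mu$ to bound $\log \mathbb{E}_{P_W\otimes\mu}[e^{-\eta r(W,Z)}] \le \tfrac{\sigma^2\eta^2}{2} - \eta\, \mathbb{E}_{P_W\otimes\mu}[r(W,Z)]$. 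Rearranging, dividing by $\eta>0$, and averaging over $i=1,\dots,n$ (using $\mathbb{E}_{P_W\otimes\mu}[r(W,Z)] = \mathbb{E}_W[\mathcal{R}(W)]$ and $\tfrac1n\sum_i \mathbb{E}_{W\mathcal{S}_n}[r(W,Z_i)] = \mathbb{E}_{W\mathcal{S}_n}[\hat{\mathcal{R}}(W,\mathcal{S}_n)]$) gives
\[
\mathbb{E}_W[\mathcal{R}(W)] \;\le\; \mathbb{E}_{W\mathcal{S}_n}[\hat{\mathcal{R}}(W,\mathcal{S}_n)] + \frac{\sigma^2\eta}{2} + \frac{1}{n\eta}\sum_{i=1}^n I(W;Z_i).
\]
Finally, writing $\tfrac{\sigma^2\eta}{2} = (1-a_\eta)\,\mathbb{E}_W[\mathcal{R}(W)]$ --- which is precisely the definition of $a_\eta$ --- lets me move that term to the left, and the hypothesis $\eta < 2\mathbb{E}_{P_W\otimes\mu}[r(W,Z_i)]/\sigma^2$ ensures $a_\eta>0$, so dividing by $a_\eta$ produces the stated excess-risk bound; the identity above then gives the generalization bound.

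The step I expect to require the most care is the Jensen/sub-Gaussian bookkeeping: Donsker--Varadhan must be applied \emph{per sample} (conditionally on $Z_i=z_i$) and the $Z_i$-expectation interchanged with the logarithm only afterwards, because the sub-Gaussian hypothesis controls the moment generating function under the product measure $P_W\otimes\mu$, not under $P_{W\mid Z_i}$. The remaining subtlety is purely algebraic --- recognizing that the ``self-referential'' term $\tfrac{\sigma^2\eta}{2}$ is a constant multiple of $\mathbb{E}_W[\mathcal{R}(W)]$ and that absorbing it is legitimate exactly on the stated range of $\eta$. Everything else is routine.
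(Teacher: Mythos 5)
Your proposal is correct and follows essentially the same route as the paper's proof: a Donsker--Varadhan bound with the test function $-\eta\, r(w,z_i)$, the sub-Gaussian bound $\log \mathbb{E}_{P_W\otimes\mu}[e^{-\eta r}] \le \tfrac{\sigma^2\eta^2}{2} - \eta\,\mathbb{E}_{P_W\otimes\mu}[r]$, and absorption of the $\tfrac{\sigma^2\eta}{2}$ term using $a_\eta$, followed by rearrangement. The only cosmetic differences are that the paper applies Donsker--Varadhan directly to $D(P_{WZ_i}\,\|\,P_W\otimes P_{Z_i}) = I(W;Z_i)$ (so your conditional-then-Jensen step, which is valid since it only weakens the lower bound in the right direction, is unnecessary), and the paper rearranges per sample before summing rather than proving the excess-risk bound first.
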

The proof of this result is given in Appendix \ref{proof:sub-Gaussianv2}. We continue to examine the bound in Theorem~\ref{thm:sub-Gaussianv2} with the Gaussian mean estimation. 
\begin{example}\label{eg:subv2}
Since the expected excess risk can be calculated as $\mathbb{E}_{W}[\mathcal{R}(W)] = \frac{\sigma_N^2}{n}$, and $r(W,Z)$ is $\sqrt{\frac{8\sigma_N^4}{n}}$-sub-Gaussian, then we require that $0 <\eta < \frac{1}{2\sqrt{2}\sigma_N^2}$, which is independent of the sample size. For simplicity, we can apply Theorem \ref{thm:sub-Gaussianv2} to the quadratic Gaussian mean estimation problem with the choice $\eta = \frac{1}{4\sigma_N^2}$ and $a_\eta=\frac{1}{2}$. For any $n$ that satisfies the condition in Example~\ref{example:sub-Gaussian-2}, we have the generalization error bound,
\begin{align*}
   \frac{1-a_\eta}{a_\eta}&\Esub{W\mathcal{S}_n}{\hat{\mathcal{R}}(W_\ERM,\mathcal{S}_n)} +  \frac{1}{\eta a_{\eta} n}\sum_{i=1}^{n}I(W;Z_i) \\
   &\leq  \frac{3\sigma_N^2}{n},
\end{align*}
where the empirical excess risk $\Esub{W\mathcal{S}_n}{\hat{\mathcal{R}}(W_\ERM,\mathcal{S}_n)}$ is calculated as $-\frac{\sigma^2_N}{n}$ and the bound has the rate of $O(1/n)$. 
\end{example}

Notice that in the above example, both $\eta$ and $\alpha_\eta$ depend on the expected excess risk $\mathbb{E}_{P_W \otimes \mu}[r(W,Z)]$ and $\sigma^2$, which potentially depend on $n$ as well.  In this sense, this result is still not very satisfying, as this dependence makes it hard to determine the actual convergence rate directly from the bounds. To this end, we propose a different type of bound to alleviate this drawback. Ideally, we aim for a bound that does not contain extra quantities that depend on $n$, the key to such a bound is the so-called expected ($\eta,c$)-central condition (or we simply say ($\eta,c$)-central condition for short), inspired by the works \cite{van2015fast,mehta2017fast,grunwald2020fast,grunwald2021pac}. We will first define the $(\eta,c)$-central condition for a non-negative random variable and use this notation to bound the generalization error with the loss function and excess risk.

\begin{definition}[$(\eta,c)$-central condition]
Let ${\eta}>0$ and $0 < c \leq 1$ be two constants. We say that a random variable $X$ endowed with a probability measure $P$ satisfies the $(\eta,c)$-central condition if the following inequality holds:
\begin{align}
\log \mathbb{E}_{P}& \left[e^{-{\eta}X}\right]  \leq   -c\eta  \mathbb{E}_{P}\left[X\right]. \label{eq:eta_c_loss} 
\end{align} 
given that $0 < E_P[X] < \infty$.
\end{definition}
\begin{remark}
Comparing the $(\eta,c)$-central condition with the $\sigma$-sub-Gaussian condition defined as
\begin{align*}
    \log \mathbb{E}_{P}[e^{-\eta X}] \leq -\eta \mathbb{E}_{P}[X] + \frac{\eta^2\sigma^2}{2}, \forall \eta \in \mathbb{R},
\end{align*}
One difference between these two conditions is that the $(\eta, c)$-central condition is required to hold just for a single specific $\eta$, whereas the sub-Gaussian condition is required for all $\eta$. Another difference is on the bounding terms for the CGF where the variance proxy term $\frac{\eta^2\sigma^2}{2}$ is replaced by the term $(1-c)\eta \Esub{P}{X}$ for a positive $\eta$. This also shows that if we want to make the sub-Gaussian condition and the $(\eta,c)$-central condition equivalent, the variance proxy $\sigma^2$ should have the same scaling law as the term $\Esub{P}{X}$. This is consistent with the results in Example \ref{example:sub-Gaussian-2}, where both $\E{\mathcal R(W)}$ and the variance proxy of $r(W,Z)$ scales as $O(1/n)$. We furthermore compare $(\eta,c)$-central condition with other related conditions in the literature in Section \ref{sec:related}.
\end{remark}

Then we make the following assumption that the unexpected excess risk satisfies the $(\eta,c)$-central condition.
\begin{assumption}\label{assump:eta_c_r}
We assume the unexpected excess risk $r(W,Z_i)$ satisfies the $\left(\eta, c \right)$-central condition for any $i\in [n]$, some constants $\eta > 0$ and $0 < c \leq 1$ under the product distribution $P_W\otimes\mu$, e.g.,
\begin{align}
\log \mathbb{E}_{P_W\otimes \mu}& \left[e^{-{\eta}\left(\ell(W,Z)-\ell(w^*,Z)\right)}\right]  \leq \nonumber \\
& -c\eta  \mathbb{E}_{P_W\otimes \mu}\left[\ell(W,Z) - \ell(w^*,Z)\right]. \label{eq:eta_c}
\end{align} 
\end{assumption}

The above definition is inspired by the $\eta$-central condition \cite[(5)]{van2015fast}, which in our notation takes the form
\begin{align}
\log \mathbb{E}_{P_W\otimes \mu}& \left[e^{-{\eta}\left(\ell(W,Z)-\ell(w^*,Z)\right)}\right]  \leq   0  \label{eq:eta},
\end{align} 
which is the special case of $(\eta,c)$-central condition with $c=0$. Compared to the $\eta$-central condition, the RHS of~(\ref{eq:eta_c}) is negative and has a tighter control than (\ref{eq:eta}) of the tail behaviour for some $c> 0$. A similar condition is given in \cite[(4)]{van2015fast}, which in our notation  takes the form
\begin{align}
\log \mathbb{E}_{\mu}& \left[e^{-{\eta}\left(\ell(w,Z)-\ell(w^*,Z)\right)}\right]  \leq 0, \forall w \in \mathcal{W}.
\end{align}
This is a stronger condition as it is required to hold all $w \in \mathcal{W}$ instead of in expectation.


\begin{remark}
Using the Markov inequality, the $(\eta,c)$-central condition on $r(W,Z)$ implies that for any $t$, we have
\begin{align*}
    \pp{r(W,Z)\leq t}&=\pp{-r(W,Z)\geq -t} \\
    &=\pp{e^{-\eta r(W,Z)}\geq e^{-\eta t}}\\
    &\leq e^{\eta t}\E{e^{-\eta r(W,Z)}} \\
    &\leq e^{-\eta(c\E{r(W,Z)}-t)}.
\end{align*}
This shows for $t< c\E{r(W,Z)}$, the probability that  $\ell(W,Z)\leq \ell(w^*,Z)+t$ is exponentially small.
\end{remark}



With the definitions in place, we derive the fast rate bounds under the $(\eta, c)$-central condition as follows. 

\begin{theorem}[Fast Rate with $(\eta, c)$-central condition]\label{thm:eta-c}
Suppose Assumption~\ref{assump:eta_c_r} is satisfied, then it holds that:
\begin{align*}
     \mathbb{E}_{W\mathcal{S}_n}[\mathcal{E}(W,\mathcal{S}_n)] \leq & \frac{1-c}{c} \mathbb{E}_{P_{W\mathcal{S}_n}}[\hat{\mathcal{R}}\left(W, \mathcal{S}_{n} \right)] \\
     & + \frac{1}{c\eta n} \sum_{i=1}^{n} I(W;Z_i).
\end{align*}
\noindent Furthermore, the excess risk is bounded by,
 \begin{align*}
     \mathbb{E}_{W}[\mathcal{R}(W)] \leq & \frac{1}{c} \mathbb{E}_{P_{W\mathcal{S}_n}}[\hat{\mathcal{R}}\left(W, \mathcal{S}_{n} \right)] \\
     & + \frac{1}{c\eta n} \sum_{i=1}^{n} I(W;Z_i).
 \end{align*}
\end{theorem}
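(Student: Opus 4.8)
The plan is to first note that the two displayed bounds are equivalent via an elementary identity, and then to prove the excess-risk bound through a single-sample change-of-measure estimate. From the definitions, for every fixed $w$ and $\mathcal{S}_n$ one has $\mathcal{R}(w) - \hat{\mathcal{R}}(w,\mathcal{S}_n) = \mathcal{E}(w,\mathcal{S}_n) - \mathcal{E}(w^{*},\mathcal{S}_n)$; taking the expectation over $(W,\mathcal{S}_n)$ and using that $w^{*}$ does not depend on the data while $z_1,\dots,z_n$ are i.i.d.\ $\mu$ (so $\mathbb{E}_{\mathcal{S}_n}[\mathcal{E}(w^{*},\mathcal{S}_n)]=0$) gives
\[
\mathbb{E}_{W\mathcal{S}_n}[\mathcal{E}(W,\mathcal{S}_n)] = \mathbb{E}_{W}[\mathcal{R}(W)] - \mathbb{E}_{W\mathcal{S}_n}[\hat{\mathcal{R}}(W,\mathcal{S}_n)].
\]
Hence, subtracting $\mathbb{E}_{W\mathcal{S}_n}[\hat{\mathcal{R}}(W,\mathcal{S}_n)]$ from the excess-risk bound and using $\tfrac1c-1=\tfrac{1-c}{c}$ yields the generalization-error bound, so it is enough to establish the excess-risk inequality.

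For that I would invoke the Donsker--Varadhan variational formula in decoupled form: for each $i$ and any measurable $f$ with finite exponential moment under $P_W\otimes\mu$,
\[
\mathbb{E}_{WZ_i}[f(W,Z_i)] - \log\mathbb{E}_{P_W\otimes\mu}\!\big[e^{f(W,Z_i)}\big] \le D\big(P_{WZ_i}\,\big\|\,P_W\otimes\mu\big) = I(W;Z_i).
\]
Choosing $f(w,z)=-\eta\, r(w,z)$ and bounding the log-moment term by Assumption~\ref{assump:eta_c_r}, namely $\log\mathbb{E}_{P_W\otimes\mu}[e^{-\eta r(W,Z_i)}] \le -c\eta\,\mathbb{E}_{P_W\otimes\mu}[r(W,Z_i)] = -c\eta\,\mathbb{E}_{W}[\mathcal{R}(W)]$ (the last step because the $z_i$ are i.i.d.\ and $\mathcal{R}(w)=\mathbb{E}_{Z\sim\mu}[r(w,Z)]$), I obtain, for each $i$,
\[
c\eta\,\mathbb{E}_{W}[\mathcal{R}(W)] - \eta\,\mathbb{E}_{WZ_i}[r(W,Z_i)] \le I(W;Z_i).
\]

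Averaging this over $i=1,\dots,n$ and using $\tfrac1n\sum_{i=1}^n\mathbb{E}_{WZ_i}[r(W,Z_i)] = \mathbb{E}_{W\mathcal{S}_n}[\hat{\mathcal{R}}(W,\mathcal{S}_n)]$ gives $c\eta\,\mathbb{E}_{W}[\mathcal{R}(W)] \le \eta\,\mathbb{E}_{W\mathcal{S}_n}[\hat{\mathcal{R}}(W,\mathcal{S}_n)] + \tfrac1n\sum_{i=1}^n I(W;Z_i)$; dividing by $c\eta>0$ produces the excess-risk bound, and combining it with the identity above produces the generalization-error bound.

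The argument is essentially mechanical, so the only ``hard part'' is the bookkeeping. I must carefully separate expectations under the joint law (where $W$ is coupled to $Z_i$ through the algorithm acting on $\mathcal{S}_n$) from those under the product law $P_W\otimes\mu$ that appears both in the central condition and on the right-hand side of the variational step; I must verify that $\log\mathbb{E}_{P_W\otimes\mu}[e^{-\eta r(W,Z_i)}]$ is finite so that Donsker--Varadhan applies with this choice of $f$, which is guaranteed by Assumption~\ref{assump:eta_c_r}; and I must justify moving $\mathbb{E}_{W}[\mathcal{R}(W)]$ across the inequality in the final rearrangement, which requires $\mathbb{E}_{W}[\mathcal{R}(W)] = \mathbb{E}_{P_W\otimes\mu}[r(W,Z)]$ to be finite --- precisely the $0 < \mathbb{E}_P[X] < \infty$ hypothesis built into the definition of the $(\eta,c)$-central condition.
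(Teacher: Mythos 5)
Your proposal is correct and follows essentially the same route as the paper's proof: Donsker--Varadhan with the choice $f(w,z)=-\eta\,r(w,z)$, the $(\eta,c)$-central condition to bound $\log\mathbb{E}_{P_W\otimes\mu}[e^{-\eta r(W,Z_i)}]$, and an average over the $n$ individual-sample inequalities. The only difference is bookkeeping order --- you establish the excess-risk bound first and deduce the generalization-error bound via $\mathbb{E}_{W\mathcal{S}_n}[\mathcal{E}(W,\mathcal{S}_n)]=\mathbb{E}_{W}[\mathcal{R}(W)]-\mathbb{E}_{W\mathcal{S}_n}[\hat{\mathcal{R}}(W,\mathcal{S}_n)]$, while the paper rearranges for the generalization error directly --- which is not a substantive difference.
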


The proof of this result is given in Appendix \ref{proof:thm_eta-c}. Such a bound retrieves a similar form with \cite[Eq.~(3)]{grunwald2021pac}, which consists of the empirical excess risk and mutual information terms. For some algorithms, such as ERM, the first term is non-positive, allowing the bound to be tightened and expressed solely in terms of mutual information to achieve fast rates. Notice that different from the bound in Theorem~\ref{thm:sub-Gaussianv2}, the bound in Theorem~\ref{thm:eta-c} contains constants $c$ and $\eta$ that in general do not depend on the sample size $n$ (as we will verify for several examples later in the paper.) Therefore, this bound has the nice property that the convergence rate is solely determined by the mutual information term $I(W; Z_i)$.

In the following, we analytically examine our bounds in Gaussian mean estimation, and we also empirically verify our bounds with a logistic regression problem in Appendix~\ref{sec:logistic}. 
\begin{example}
We can verify that the $(\eta, c)$-central condition is satisfied for the quadratic Gaussian mean estimation problem. It can be checked that for $n >  \max\left\{\frac{(4\eta^2\sigma^4_N-\eta\sigma^2_N)(2\eta^2\sigma^4_N-\eta\sigma^2_N)}{\eta^2\sigma^4_N}, 4\eta^2\sigma_N^4 - 2 \eta\sigma_N^2\right\} $, it holds that
\begin{align*}
\log \mathbb{E}_{P_W\otimes \mu}\left[e^{-\eta r(W,Z)}\right] \leq \frac{4 \eta^2\sigma_N^4 - \eta \sigma_N^2}{n} \leq -c\eta \frac{\sigma_N^2}{n}.
\end{align*}
From the above inequality, this learning problem satisfies the $(\eta, c)$-central condition for any $0 < \eta < \frac{1}{4\sigma_N^2}$ and any $c \leq 1- 4\eta\sigma_N^2$, which is independent of the sample size and thus does not affect the convergence rate. Similarly, take $\eta = \frac{1}{8\sigma_N^2}$ and $c = \frac{1}{2}$, the bound becomes
\begin{align*}
    \frac{1-c}{c} \mathbb{E}_{P_{W\mathcal{S}_n}}[\hat{\mathcal{R}}\left(W, \mathcal{S}_{n} \right)] + \frac{1}{c\eta n} \sum_{i=1}^{n} I(W;Z_i) \leq \frac{7\sigma_N^2}{n},
\end{align*}
which coincides with the bound in~Example~\ref{eg:subv2}.
\end{example}

It is natural to consider whether we can apply the ($\eta,c$)-central condition to the loss function, and whether that leads to a good bound. We present the following theorem regarding the generalization error when the loss function satisfies the $(\eta,c)$-central condition, and evaluate the resulting bound for the Gaussian mean estimation problem.

\begin{theorem}[Generalization Error Bounds with $(\eta, c)$-central condition w.r.t. the loss function]\label{thm:eta-c-loss}
Assume the loss function $\ell(w,z_i)$ satisfies the  $\left(\eta, c \right)$-central condition for any $i\in [n]$, some constants $\eta > 0$ and $0 < c \leq 1$ under the data distribution $\mu$ and algorithm $\mathcal{A}$. Then it holds that:
\begin{align}
     \mathbb{E}_{W\mathcal{S}_n}[\mathcal{E}(W,\mathcal{S}_n)] \leq & \frac{1-c}{c}\mathbb{E}_{W\mathcal{S}_n}[\hat{L}(W,\mathcal{S}_n)] \\
     & + \frac{\sum_{i=1}^{n}I(W;Z_i)}{c\eta n}. \label{eq:loss-bound}
\end{align}
\end{theorem}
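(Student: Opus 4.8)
The plan is to follow the same route as the proof of Theorem~\ref{thm:eta-c}, the only change being that the $(\eta,c)$-central condition is now invoked for the loss $\ell(W,Z_i)$ rather than for the unexpected excess risk $r(W,Z_i)$ (so $\hat{\mathcal{R}}$ gets replaced by $\hat{L}$ throughout). Fix $i\in[n]$. Apply the Donsker--Varadhan representation~(\ref{eq:donsker}) to the joint law $P_{WZ_i}$ and the product law $P_W\otimes\mu$ with the test function $f(w,z)=-\eta\,\ell(w,z)$; since $D(P_{WZ_i}\|P_W\otimes\mu)=I(W;Z_i)$, this yields
\begin{align*}
-\eta\,\mathbb{E}_{P_{WZ_i}}[\ell(W,Z_i)] \le I(W;Z_i) + \log\mathbb{E}_{P_W\otimes\mu}\!\left[e^{-\eta\,\ell(W,Z_i)}\right].
\end{align*}
By hypothesis $\ell$ satisfies the $(\eta,c)$-central condition~(\ref{eq:eta_c_loss}) under $P_W\otimes\mu$, so the last term is at most $-c\eta\,\mathbb{E}_{P_W\otimes\mu}[\ell(W,Z_i)]$. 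Because the $Z_i$ are i.i.d. and $P_W$ is a fixed marginal, $\mathbb{E}_{P_W\otimes\mu}[\ell(W,Z_i)]=\mathbb{E}_{W}[L_\mu(W)]$ is the same for every $i$.

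Rearranging the combined inequality and dividing by $c\eta>0$ gives, for each $i$,
\begin{align*}
\mathbb{E}_{W}[L_\mu(W)] \le \frac{1}{c}\,\mathbb{E}_{P_{WZ_i}}[\ell(W,Z_i)] + \frac{I(W;Z_i)}{c\eta}.
\end{align*}
Averaging over $i=1,\dots,n$ and using $\frac1n\sum_{i=1}^{n}\mathbb{E}_{P_{WZ_i}}[\ell(W,Z_i)]=\mathbb{E}_{W\mathcal{S}_n}[\hat{L}(W,\mathcal{S}_n)]$ produces
\begin{align*}
\mathbb{E}_{W}[L_\mu(W)] \le \frac{1}{c}\,\mathbb{E}_{W\mathcal{S}_n}[\hat{L}(W,\mathcal{S}_n)] + \frac{1}{c\eta n}\sum_{i=1}^{n} I(W;Z_i).
\end{align*}
Finally subtract $\mathbb{E}_{W\mathcal{S}_n}[\hat{L}(W,\mathcal{S}_n)]$ from both sides: since $\mathbb{E}_{W}[L_\mu(W)]-\mathbb{E}_{W\mathcal{S}_n}[\hat{L}(W,\mathcal{S}_n)]=\mathbb{E}_{W\mathcal{S}_n}[\mathcal{E}(W,\mathcal{S}_n)]$ and $\frac1c-1=\frac{1-c}{c}$, this is exactly the claimed bound~(\ref{eq:loss-bound}).

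The argument is essentially bookkeeping once the test function $f=-\eta\ell$ is chosen, so I do not anticipate a serious obstacle. The two points that need care are: (i) the central condition is a statement under the \emph{product} measure $P_W\otimes\mu$, while Donsker--Varadhan produces an expectation under the \emph{joint} $P_{WZ_i}$ — matching these is precisely the step that converts the log-MGF into a mutual-information term; and (ii) verifying the hypotheses of the $(\eta,c)$-central condition, namely that $\ell$ is non-negative with $0<\mathbb{E}_{P_W\otimes\mu}[\ell(W,Z_i)]<\infty$, and observing that the i.i.d. structure makes the constants $\eta,c$ automatically uniform in $i$.
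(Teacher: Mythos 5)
Your proposal is correct and follows essentially the same route as the paper's own proof: apply the Donsker--Varadhan representation with the test function $f=-\eta\,\ell(W,Z_i)$, bound the log-MGF term via the $(\eta,c)$-central condition on the loss, rearrange, and average over $i$. The only cosmetic difference is that you phrase the intermediate step as a bound on $\mathbb{E}_{W}[L_\mu(W)]$ before subtracting the empirical loss, whereas the paper rearranges directly into the generalization-error form; the algebra is identical.
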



\begin{example}
Concretely, let us again examine the Gaussian mean estimation problem. To verify the ($\eta,c$)-central condition w.r.t. the loss function, we can calculate the CGF under the distribution $P_W \otimes \mu$ with the $(\eta, c)$-central condition as:
\begin{align}
    \log \mathbb{E}_{P_W\otimes \mu}[e^{-\eta \ell(W,Z)}] &= -\frac{1}{2}\log(1+2\eta \frac{n+1}{n} \sigma_N^2) \nonumber \\
    &\leq -2c\eta \frac{n+1}{n}\sigma_N^2,
\end{align}
For arbitrary choice of $\eta$ and any $n \geq 1$, we can select $c = \frac{\log(1+4\eta \sigma^2_N)}{4\eta \sigma^2_N}$ as the function $\frac{\log(1+x)}{x}$ is non-increasing for all $x > 0$. The generalization error can be upper bounded by,
\begin{align}
     \mathbb{E}_{W\mathcal{S}_n}[\mathcal{E}(W,\mathcal{S}_n)] \leq & \frac{1-c}{c}\frac{n-1}{n}\sigma^2_N + \frac{1}{2c\eta(n-1)}. \label{eq:eta-c-gaussian-loss}
\end{align}
which converges to the (non-zero) constant $\frac{1-c}{c}\sigma^2_N$. Hence, the bound is not tight in this case.
\end{example}

\begin{remark}
As can be seen from the above example, the R.H.S. of (\ref{eq:eta-c-gaussian-loss}) does not converge to zero with $n$ increasing when $c$ and $\eta$ are chosen to be independent of the sample size $n$, which does not match the true generalization error. We can draw a comparable insight from this example as illustrated in the previous section: obtaining the fast rate result requires us to make appropriate assumptions on the unexpected excess risk $r(W, Z)$ while making assumptions under the loss function itself is not sufficient to a tight bound.
\end{remark}

\subsection{Connection with Other Conditions} \label{sec:related}
Fast rate conditions are widely investigated under different learning frameworks and conditions \cite{van2015fast, mehta2017fast, koren2015fast, mhammedi2019pac, grunwald2020fast, zhu2020semi, grunwald2021pac}. As the most relevant work, our bound is similar to that found in \cite{grunwald2021pac} which applies conditional mutual information \cite{steinke2020reasoning}, but their results do not hold for unbounded losses and specifically do not hold for sub-Gaussian losses. Our result applies to more general algorithms with a weaker assumption since we only require the proposed conditions to hold in expectation w.r.t. $P_W$, instead of for all $w \in\mathcal{W}$. Our results also have the benefit of allowing the convergence factors to be further improved by using different metrics and data-processing techniques, see \cite{jiao2017dependence, hafez2020conditioning, zhou2022individually} for examples.  Now it is instructive to compare the different assumptions used in the related works. We point out that the $(\eta,c)$-central condition is indeed the key assumption for generalizing the result of Theorem~\ref{thm:sub-Gaussianv2}, which also coincides with some well-known conditions that lead to a fast rate. We firstly show that the Bernstein condition \cite{bartlett2006empirical,bartlett2006convexity, hanneke2016refined,mhammedi2019pac} implies the $(\eta,c)$-central condition for certain $\eta$ and $c$ in the following corollary. 
\begin{corollary}\label{coro:berstein}
Let $\beta \in [0,1]$ and $B \geq 1$. For a learning tuple $(\mu, \ell, \mathcal{W}, \mathcal{A})$,  we say that the \textbf{Bernstein condition} holds if the following inequality holds for the optimal hypothesis $w^*$:
\begin{align*}
       {\mathbb{E}}_{P_W \otimes \mu}&\left[\left(\ell\left(W, Z^{\prime}\right)-\ell\left(w^{*} , Z^{\prime}\right)\right)^{2}\right] \\
       & \leq B \left({\mathbb{E}}_{P_W \otimes \mu}\left[\ell\left(W, Z^{\prime}\right)- \ell\left(w^{*} ; Z^{\prime}\right)\right]\right)^{\beta}.
\end{align*}
If the above condition holds with $\beta = 1$ and $r(w,z_i)$ is lower bounded by $-b$ with some $b > 0$ for all $w$ and $z_i$, the learning tuple also satisfies $(\min(\frac{1}{b}, \frac{1}{2B(e-2)}), \frac{1}{2})$-central condition.
\end{corollary}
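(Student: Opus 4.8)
The plan is to reduce the claim to one elementary exponential inequality together with the $\beta=1$ Bernstein condition. Write $r=r(W,Z)=\ell(W,Z)-\ell(w^*,Z)$ and $\bar r=\mathbb{E}_{P_W\otimes\mu}[r]$. Note first that $\bar r\ge 0$ by the very definition of $w^*$ as the minimizer of the true risk; the $(\eta,c)$-central condition is moreover stated under the hypothesis $0<\mathbb{E}_P[X]<\infty$, so we may take $0<\bar r<\infty$. The goal is then to show $\log\mathbb{E}_{P_W\otimes\mu}[e^{-\eta r}]\le-\tfrac12\,\eta\,\bar r$ for $\eta=\min\{1/b,\,1/(2B(e-2))\}$, $c=\tfrac12$.

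First I would record the standard bound $e^{u}\le 1+u+(e-2)u^{2}$, valid for every $u\le 1$: the difference $\phi(u)=1+u+(e-2)u^{2}-e^{u}$ is convex on $(-\infty,0]$ with $\phi(0)=\phi'(0)=0$, hence nonnegative there, while on $[0,1]$ it vanishes at both endpoints and $\phi'$ is first positive then negative, so $\phi\ge 0$ on $[0,1]$ as well. Applying this with $u=-\eta r$ is legitimate because $r\ge -b$ and $\eta\le 1/b$ force $-\eta r\le \eta b\le 1$; this is precisely where the lower bound $-b$ on the excess loss enters and dictates the $1/b$ term in the choice of $\eta$. We obtain the pointwise inequality $e^{-\eta r}\le 1-\eta r+(e-2)\eta^{2}r^{2}$.

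Next I would take expectations under $P_W\otimes\mu$ and invoke the Bernstein condition with $\beta=1$, i.e.\ $\mathbb{E}_{P_W\otimes\mu}[r^{2}]\le B\,\bar r$, which gives $\mathbb{E}_{P_W\otimes\mu}[e^{-\eta r}]\le 1-\eta\bar r+(e-2)\eta^{2}B\,\bar r=1-\eta\bar r\bigl(1-(e-2)\eta B\bigr)$. Using $\log(1+x)\le x$, and then the bound $\eta\le 1/(2B(e-2))$, which yields $1-(e-2)\eta B\ge\tfrac12$, together with $\bar r\ge 0$ so that this inequality survives multiplication by $\bar r$, we conclude $\log\mathbb{E}_{P_W\otimes\mu}[e^{-\eta r}]\le-\eta\bar r\bigl(1-(e-2)\eta B\bigr)\le-\tfrac12\,\eta\,\bar r$, which is exactly the $(\eta,c)$-central condition of~(\ref{eq:eta_c}) with $c=\tfrac12$.

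The computation itself is short, so there is no substantive obstacle; the only points requiring care are bookkeeping ones, namely justifying the range restriction $u\le 1$ in the exponential inequality (handled by the $-b$ lower bound and the $1/b$ branch of $\eta$) and tracking the sign of $\bar r$ when converting the scalar inequality $1-(e-2)\eta B\ge\tfrac12$ into the final bound. The value $B\ge 1$ plays no role in the argument beyond being part of the conventional statement of the Bernstein condition.
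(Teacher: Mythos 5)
Your proof is correct, and it rests on the same mathematical engine as the paper's: a quadratic bound on the exponential with constant $e-2$ coming from the range restriction $u \le 1$ (your $e^{u}\le 1+u+(e-2)u^{2}$ is exactly the monotonicity fact $\kappa(u)=(e^{u}-u-1)/u^{2}\le \kappa(1)=e-2$ used in the paper's Expected Bernstein Inequality), combined with the $\beta=1$ Bernstein condition to replace $\mathbb{E}[r^{2}]$ by $B\,\mathbb{E}[r]$, and $\log(1+x)\le x$. The difference is organizational rather than substantive: the paper routes the argument through a standalone expected Bernstein lemma and then through Proposition~5 of \cite{grunwald2021pac}, which is stated for general $\beta'$ and requires interpreting the term $(1-\beta')(2Bc\eta)^{1/(1-\beta')}$ at $\beta'=1$ as vanishing (a limiting step that needs $2B(e-2)\eta<1$), whereas you specialize to $\beta=1$ from the outset and carry out the two-line arithmetic inline, so your version is self-contained, avoids the external proposition and the $\beta'\to 1$ subtlety, and works with $\eta$ equal to (not strictly less than) $\min\{1/b,\,1/(2B(e-2))\}$, which matches the corollary as stated. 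Your remarks on where the $-b$ lower bound and each branch of the minimum enter, and on the sign of $\bar r$, are accurate; no gap.
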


The proof of the above result is given in Appendix \ref{proof:corollary_bernstein}. The Bernstein condition is usually recognized as a characterization of the ``easiness" of the learning problem under various $\beta$ where $\beta = 1$ corresponds to the ``easiest" learning case. For bounded loss functions, the Bernstein condition will automatically hold with $\beta = 0$.  Different from the standard setting, we only require that the learned (randomized) hypothesis $W$ satisfy the inequality in expectation. This is a weaker requirement: the condition does not need to be satisfied for all $w\in\mathcal W$ but only needs to be satisfied on average.

The second condition is the central condition with the witness condition \cite{van2015fast,grunwald2020fast}, which also implies the $(\eta,c)$-central condition. We say $(\mu, \ell, \mathcal{W}, \mathcal{A})$ satisfies the $\eta$-central condition \cite{van2015fast,grunwald2020fast} if for the optimal hypothesis $w^*$, the following inequality holds,
\begin{align*}
\mathbb{E}_{P_W\otimes \mu}\left[e^{-{\eta}\left(\ell(W,Z)-\ell(w^*,Z)\right)}\right] \leq 1. 
\end{align*}
We also say the learning tuple  $(\mu, \ell, \mathcal{W}, \mathcal{A})$ satisfies the  $(u, c)$-witness condition \cite{grunwald2020fast} if for constants $u > 0$ and $c \in (0,1]$, the following inequality holds.
\begin{align*}
     &\mathbb{E}_{P_W\otimes \mu} [\left(\ell(W,Z)-\ell({w^{*}},Z) \right) \cdot   \mathbf{1}_{\left\{\ell(W,Z)-\ell({w^{*}},Z) \leq u \right\}}] \\
     &\geq c \mathbb{E}_{P_W\otimes \mu}\left[\ell(W,Z) -\ell({w^{*}},Z) \right],
\end{align*}
where $ \mathbf{1}_{\{\cdot\}}$ denotes the indicator function.  The standard $\eta$-central condition is a key condition for proving the fast rate \cite{van2015fast,mehta2017fast,grunwald2020fast}. Some examples are exponential concave loss functions (including log-loss) with $\eta = 1$ (see \cite{mehta2017fast,zhu2020semi} for examples) and bounded loss functions with Massart noise condition with different $\eta$ \cite{van2015fast}.  The witness condition~\cite[Def. 12]{grunwald2020fast} is imposed to rule out situations in which learnability simply cannot hold. The intuitive interpretation of this condition is that we exclude bad hypothesis $w$ with negligible probability (but still can contribute to the expected loss), which we will never witness empirically.  They are connected to the $(\eta,c)$-central condition in the following way.
\begin{corollary}\label{coro:central}
If the learning tuple satisfies both $\eta$-central condition and $(u,c)$-witness condition, then the learning tuple also satisfies the $(\eta', \frac{c(1-\eta'/\eta)}{\eta' u +1})$-central condition for any $0< \eta' < \eta$.
\end{corollary}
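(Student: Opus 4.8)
The plan is to deduce the $(\eta',c')$-central condition from a single deterministic pointwise inequality, and then integrate it against $P_W\otimes\mu$, using the $\eta$-central condition to discard an error term and the $(u,c)$-witness condition only at the very end. Write $X:=\ell(W,Z)-\ell(w^*,Z)$ with $(W,Z)\sim P_W\otimes\mu$, so that the $\eta$-central condition is $\mathbb{E}[e^{-\eta X}]\le1$, the witness condition is $\mathbb{E}[X\,\mathbf{1}_{\{X\le u\}}]\ge c\,\mathbb{E}[X]$, and the target is $\log\mathbb{E}[e^{-\eta'X}]\le -c'\eta'\,\mathbb{E}[X]$ with $c'=\frac{c(1-\eta'/\eta)}{\eta'u+1}$. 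Set $\gamma:=\eta'/\eta\in(0,1)$, $\kappa:=\frac{\eta'(1-\gamma)}{\eta'u+1}>0$ (chosen so that $\kappa c=c'\eta'$), and $A:=\frac{\eta'-\kappa}{\eta}=\gamma\,\frac{\eta'u+\gamma}{\eta'u+1}$; one checks $A\in(\gamma^2,\gamma)$. The crux is the pointwise estimate
\begin{equation}
e^{-\eta'x}-1+\kappa x\,\mathbf{1}_{\{x\le u\}}\ \le\ A\bigl(e^{-\eta x}-1\bigr),\qquad x\in\mathbb{R}.\tag{$\star$}
\end{equation}

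Granting $(\star)$, the rest is short. Taking expectations and using $\mathbb{E}[e^{-\eta X}]\le1$ together with $A\ge0$, the right side integrates to $A(\mathbb{E}[e^{-\eta X}]-1)\le0$, hence $\mathbb{E}[e^{-\eta'X}]\le 1-\kappa\,\mathbb{E}[X\,\mathbf{1}_{\{X\le u\}}]$. The witness condition and $\kappa>0$ give $-\kappa\,\mathbb{E}[X\,\mathbf{1}_{\{X\le u\}}]\le-\kappa c\,\mathbb{E}[X]$, and $1-t\le e^{-t}$ then yields $\mathbb{E}[e^{-\eta'X}]\le e^{-\kappa c\,\mathbb{E}[X]}$; as $\kappa c=c'\eta'$, this is exactly the $(\eta',c')$-central condition. (When $\mathbb{E}[X]=0$ the claim degenerates to $\mathbb{E}[e^{-\eta'X}]\le(\mathbb{E}[e^{-\eta X}])^{\eta'/\eta}\le1$, i.e.\ Jensen plus the $\eta$-central condition.)

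For $(\star)$, introduce $\phi(t):=e^{-t}-1+t\ge0$. Using the identity $\kappa=\eta'-A\eta$, the inequality $(\star)$ on $\{x\le u\}$ rearranges exactly to $A\,\phi(\eta x)\ge\phi(\eta'x)$, and on $\{x>u\}$ (indicator gone) to $Ae^{-\eta x}+(1-A)\ge e^{-\eta'x}$. The case $x>u$ is immediate: since $A\le\gamma$ and $e^{-\eta x}<1$ we get $Ae^{-\eta x}+(1-A)\ge\gamma e^{-\eta x}+(1-\gamma)$, and the tangent-line bound at $y=1$ for the concave map $y\mapsto y^\gamma$ gives $\gamma e^{-\eta x}+(1-\gamma)\ge(e^{-\eta x})^\gamma=e^{-\eta'x}$. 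For $x\le u$ one must show $R(\eta x)\le A$, where $R(t):=\phi(\gamma t)/\phi(t)$ (with $R(0):=\gamma^2$ by continuity). This follows from: (i) $R$ is non-decreasing on $\mathbb{R}$, so $R(\eta x)\le R(\eta u)=\phi(\eta'u)/\phi(\eta u)$ for $x\le u$; and (ii) $A\ge\phi(\eta'u)/\phi(\eta u)$. Fact (i), after taking logarithmic derivatives, is equivalent to the monotonicity of $t\mapsto t\,\phi'(t)/\phi(t)$ on $\mathbb{R}$, which reduces to the elementary inequality $s\le 2\sinh(s/2)$ for $s\ge0$. Fact (ii), after clearing positive denominators, is $H(\eta u)\ge0$ for $H(T):=1-\gamma^2+\gamma^2(T+1)e^{-T}-(\gamma T+1)e^{-\gamma T}$, which holds since $H(0)=0$ and $H'(T)=\gamma^2 T\bigl(e^{-\gamma T}-e^{-T}\bigr)\ge0$ for $T\ge0$.

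The delicate step is the pointwise inequality $(\star)$, and within it the branch $x\le u$: bounding $\phi(\eta'x)/\phi(\eta x)$ uniformly over $(-\infty,u]$ requires the monotonicity of $R$ together with the sharp endpoint estimate $H(\eta u)\ge0$, and it is precisely there that the explicit value $A=\gamma(\eta'u+\gamma)/(\eta'u+1)$---hence the constant $c'$ stated in the corollary---gets pinned down (indeed $\kappa$ and $A$ are forced by demanding that $(\star)$ hold with equality, and with matching first derivatives, at $x=0$). Everything else is mechanical; the two one-variable calculus facts (i) and (ii) I would relegate to an appendix, while the $\eta$-central condition enters only to drop a non-positive term after integrating and the witness condition only to replace $\mathbb{E}[X\,\mathbf{1}_{\{X\le u\}}]$ by $c\,\mathbb{E}[X]$ at the end.
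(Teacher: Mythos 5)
Your proof is correct, but it takes a genuinely different route from the paper. The paper's argument is a short reduction: it states (as Lemma~\ref{lemma:central}) an expectation version of Lemma~13 of \cite{grunwald2020fast}, namely $\mathbb{E}_{P_W\otimes\mu}[r(W,Z)]\le -\frac{c_u}{\eta'}\log\mathbb{E}_{P_W\otimes\mu}[e^{-\eta' r(W,Z)}]$ with $c_u=\frac{1}{c}\frac{\eta'u+1}{1-\eta'/\eta}$, defers that lemma's proof to Appendix C.1 of \cite{grunwald2020fast} (``taking the expectation over $P_W$''), and then obtains the corollary in two lines since $1/c_u$ is exactly the claimed constant. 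You instead prove everything from scratch: the pointwise inequality $(\star)$, $e^{-\eta'x}-1+\kappa x\,\mathbf{1}_{\{x\le u\}}\le A(e^{-\eta x}-1)$ with $\kappa=\eta'-A\eta$ and $A=\gamma\frac{\eta'u+\gamma}{\eta'u+1}$, integrated against $P_W\otimes\mu$, with the $\eta$-central condition killing the right-hand side and the witness condition converting $\mathbb{E}[X\mathbf{1}_{\{X\le u\}}]$ into $c\,\mathbb{E}[X]$. I checked the supporting facts: on $\{x\le u\}$ the inequality is indeed $A\phi(\eta x)\ge\phi(\eta'x)$; your numerator computation for the monotonicity of $t\mapsto t\phi'(t)/\phi(t)$ reduces to $t^2e^{-t}\le(1-e^{-t})^2$, i.e.\ $s\le 2\sinh(s/2)$ (note you should say \emph{non-increasing} here, which is the direction you need); and $H(0)=0$, $H'(T)=\gamma^2T(e^{-\gamma T}-e^{-T})\ge0$ gives the endpoint bound, with the $x>u$ branch following from $A\le\gamma$, $x>u>0$ and the tangent bound for $y\mapsto y^\gamma$. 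The constants match ($\kappa c=c'\eta'$ with $c'=\frac{c(1-\eta'/\eta)}{\eta'u+1}=1/c_u$). What your approach buys is a self-contained, elementary proof that makes explicit how the constant is forced (matching value and first derivative of $(\star)$ at $x=0$), whereas the paper's route is shorter but leaves the hard technical estimate to the cited reference; a minor point worth one sentence in your write-up is integrability, i.e.\ that $\mathbb{E}[e^{-\eta X}]\le1$ and the witness condition guarantee the expectations you split in $(\star)$ are finite.
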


Furthermore, we can show the following connection between the $(\eta,c)$-central condition to sub-exponential and sub-gamma assumptions. Recall that we say $X$ is a $\left(\nu^2, \alpha\right)$-sub-exponential random variable with parameters $\nu, \alpha>0$ if:
\begin{align}
\log\mathbb{E} \left[ e^{\lambda (X -\mathbb{E}[X])}\right] \leq \frac{\lambda^2 \nu^2}{2} , \quad \forall \lambda:|\lambda|<\frac{1}{\alpha}.  
\end{align}
We say $X$ is a ($\nu^2,\alpha$)-sub-Gamma random variable with variance parameter $\nu^2$ and scale parameter $\alpha$ if:
\begin{align}
\log\mathbb{E} \left[ e^{\lambda (X -\mathbb{E}[X])}\right]\leq \frac{\nu^2 \lambda^2}{2(1- \alpha \lambda)}, \quad  \forall \lambda:  0<\lambda<\frac{1}{\alpha}.
\end{align}


The following result is obtained by combining the sub-exponential/Gamma assumption with the $(\eta,c)$-central condition.

\begin{corollary}\label{coro:subexponential}
If $r(W, Z)$ is ($\nu^2$, $\alpha$)-sub-exponential under the distribution $P_W \otimes \mu$, then the learning tuple satisfies $(\min(\frac{1}{\alpha}, \frac{\nu^2}{\mathbb{E}_{P_W \otimes \mu}[r(W,Z)]}), \frac{1}{2})$-central condition. Correspondingly, the generalization error is upper bounded as
\begin{align}
     \mathbb{E}_{W\mathcal{S}_n}[\mathcal{E}(W,\mathcal{S}_n)] \leq &  \mathbb{E}_{P_{W\mathcal{S}_n}}[\hat{\mathcal{R}}\left(W, \mathcal{S}_{n} \right)] \nonumber \\
     &+ \frac{2}{\eta n} \sum_{i=1}^{n} I(W;Z_i). \label{sub-exponential}
\end{align}
for any $0< \eta \leq \min(\frac{1}{\alpha}, \frac{\nu^2}{\mathbb{E}_{P_W \otimes \mu}[r(W,Z)]})$. If $r(W, Z)$ is ($\nu^2$, $\alpha$)-sub-Gamma under the distribution $P_W \otimes \mu$, then the learning tuple satisfies $(\frac{\mathbb{E}_{P_W\otimes\mu}[r(W,Z)]}{\nu^2+\alpha \mathbb{E}_{P_W\otimes\mu}[r(W,Z)]}, \frac{1}{2})$-central condition.  Correspondingly, the generalization error is upper bounded in the same way as in (\ref{sub-exponential})
for any $0< \eta \leq\frac{\mathbb{E}_{P_W\otimes\mu}[r(W,Z)]}{\nu^2+\alpha \mathbb{E}_{P_W\otimes\mu}[r(W,Z)]}$.
\end{corollary}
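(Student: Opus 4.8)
The plan is to reduce each half of the statement to a single substitution in the corresponding tail bound, followed by a direct application of Theorem~\ref{thm:eta-c}. Throughout I would write $m := \Esub{P_W\otimes\mu}{r(W,Z)}$, which is positive and finite (it is positive whenever the randomized hypothesis is not a.s.\ optimal for $\mu$, and finite because the sub-exponential/sub-Gamma assumption forces a finite mean), so that the $(\eta,c)$-central condition~(\ref{eq:eta_c_loss}) applied to $r(W,Z)$ reads $\log \Esub{P_W\otimes\mu}{e^{-\eta r(W,Z)}} \le -c\eta m$, and I would use the trivial splitting $\log \Esub{P_W\otimes\mu}{e^{-\eta r(W,Z)}} = -\eta m + \log \Esub{P_W\otimes\mu}{e^{-\eta (r(W,Z)-m)}}$ so that the tail condition on $r(W,Z)-m$ can be plugged in directly.

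\emph{Sub-exponential case.} First I would set $\lambda = -\eta$ in the sub-exponential inequality, which is legitimate for every $0<\eta<1/\alpha$ because that bound is two-sided in $\lambda$; this gives $\log \Esub{P_W\otimes\mu}{e^{-\eta r(W,Z)}} \le \tfrac{\eta^2\nu^2}{2} - \eta m$. Comparing the right-hand side with the target $-\tfrac12\eta m$, the $(\eta,\tfrac12)$-central condition holds as soon as $\tfrac{\eta^2\nu^2}{2}\le\tfrac12\eta m$, i.e.\ $\eta\nu^2\le m$; combined with the domain restriction this is exactly the requirement $0<\eta\le\min\{1/\alpha,\,m/\nu^2\}$. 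Since the inequality $\eta\nu^2\le m$ is monotone in $\eta$, the $(\eta,\tfrac12)$-central condition then holds for \emph{every} $\eta$ in this range, so Assumption~\ref{assump:eta_c_r} is met with $c=\tfrac12$ for each such $\eta$; plugging $c=\tfrac12$ into Theorem~\ref{thm:eta-c} (so that $\tfrac{1-c}{c}=1$ and $\tfrac{1}{c\eta n}=\tfrac{2}{\eta n}$) yields~(\ref{sub-exponential}).

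\emph{Sub-Gamma case.} The argument is identical, the only change being that $\lambda=-\eta$ is now inserted into the sub-Gamma bound in the form controlling the negative exponential moment, giving $\log \Esub{P_W\otimes\mu}{e^{-\eta r(W,Z)}} \le \tfrac{\nu^2\eta^2}{2(1-\alpha\eta)} - \eta m$ for $0<\eta<1/\alpha$. Requiring this to be at most $-\tfrac12\eta m$ and dividing by $\tfrac12\eta$ reduces to $\tfrac{\nu^2\eta}{1-\alpha\eta}\le m$, equivalently $\eta(\nu^2+\alpha m)\le m$, i.e.\ $\eta\le m/(\nu^2+\alpha m)$; one checks that $m/(\nu^2+\alpha m)<1/\alpha$ automatically, so the domain constraint needs no separate verification, and since $\eta\mapsto \tfrac{\nu^2\eta}{1-\alpha\eta}$ is increasing on $(0,1/\alpha)$ the $(\eta,\tfrac12)$-central condition holds on the whole interval $\bigl(0,\,m/(\nu^2+\alpha m)\bigr]$. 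Hence the learning tuple satisfies the $\bigl(m/(\nu^2+\alpha m),\tfrac12\bigr)$-central condition, and invoking Theorem~\ref{thm:eta-c} with $c=\tfrac12$ again produces~(\ref{sub-exponential}).

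I do not anticipate a substantive obstacle: both halves are short computations and the real content is carried by Theorem~\ref{thm:eta-c}. The one point that genuinely needs care is that the $(\eta,c)$-central condition constrains the \emph{negative} exponential moment $\Esub{P_W\otimes\mu}{e^{-\eta r(W,Z)}}$, so in the sub-Gamma step one must use the left-tail form of the sub-Gamma bound (equivalently, read that definition as two-sided, or apply it to $-r(W,Z)$); the sub-exponential definition is symmetric in $\lambda$ and raises no such issue. A secondary point worth making explicit in the write-up is the monotonicity of the derived inequalities in $\eta$, since that is exactly what licenses the conclusion — and hence the generalization bound~(\ref{sub-exponential}) — uniformly over the full range of admissible $\eta$ rather than only at its endpoint.
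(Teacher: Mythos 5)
Your proof follows essentially the same route as the paper's: substitute $\lambda=-\eta$ into the sub-exponential (resp.\ left-tail sub-Gamma) CGF bound, factor out $-\eta\,\mathbb{E}_{P_W\otimes\mu}[r(W,Z)]$, read off the $(\eta,\tfrac12)$-central condition, and then invoke Theorem~\ref{thm:eta-c} with $c=\tfrac12$ so that $\tfrac{1-c}{c}=1$ and $\tfrac{1}{c\eta n}=\tfrac{2}{\eta n}$; your remarks about needing the left-tail form of the sub-Gamma assumption and about monotonicity in $\eta$ correspond to steps the paper leaves implicit. One substantive discrepancy: in the sub-exponential half your algebra yields the admissible range $\eta\le\min\{1/\alpha,\ \mathbb{E}_{P_W\otimes\mu}[r(W,Z)]/\nu^2\}$, whereas the corollary statement (and the paper's proof, which sets $\lambda=\min(1/\alpha,\ \nu^2/\mathbb{E}_{P_W\otimes\mu}[r(W,Z)])$) uses the inverted ratio. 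Your version is the one the computation actually supports: the requirement $\eta^2\nu^2/2\le\tfrac12\eta\,\mathbb{E}[r(W,Z)]$ is exactly $\eta\le\mathbb{E}[r(W,Z)]/\nu^2$, and with the ratio as printed the claim can fail when $\nu^2>\mathbb{E}[r(W,Z)]$ (e.g.\ a Gaussian excess risk with small mean and unit variance proxy), so the printed $\nu^2/\mathbb{E}[r(W,Z)]$ appears to be an inversion typo that your write-up silently corrects. The sub-Gamma half agrees with the paper, including the observation that $\mathbb{E}[r(W,Z)]/(\nu^2+\alpha\,\mathbb{E}[r(W,Z)])<1/\alpha$, so no separate domain check is needed there.
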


The proof of the above result is given in Appendix~\ref{proof:coro_subexponential_subgamma}. For a simple comparison, we present the following corollary, obtained by directly applying the known result Theorem \ref{thm:gen_erm} with the sub-exponential and sub-Gamma assumption. We omit the detail of the proof, as this result is very similar to  Corollaries 2 and 3 in \cite{jiao2017dependence}, with the difference that we consider the individual mutual information version of the bound. 

\begin{corollary}\label{coro:subexponential_old}
    Assume that $\ell(W, Z)$ is ($\nu^2$, $\alpha$)-sub-exponential under the distribution $P_W \otimes \mu$ for some $\nu^2$ and $\alpha > 0$. Then it holds that
\begin{multline}
\mathbb{E}_{W\mathcal{S}_n} \left[\mathcal{E}(W, \mathcal{S}_n)\right] \\
\leq \begin{cases}
 \frac{1}{n}\sum_{i=1}^{n}\sqrt{2\nu^2 I(W;Z_i)}, & \text{if } I(W;Z_i) \leq \frac{\nu^2}{2\alpha^2} \\
 & \text{for all } i \in [n],\\
 \frac{\nu^2}{2\alpha} + \frac{\alpha}{n}\sum_{i=1}^{n} I(W;Z_i), & \text{if } I(W;Z_i) > \frac{\nu^2}{2\alpha^2} \\
 & \text{for all } i \in [n].
 \end{cases}
 \label{eq:subgexponential}
\end{multline}
Assume that $\ell(W, Z)$ is ($\nu^2$, $\alpha$)-sub-Gamma under the distribution $P_W \otimes \mu$ for some $\nu^2$ and $\alpha > 0$. Then it holds that
\begin{align}
     \mathbb{E}_{W\mathcal{S}_n} \left[\mathcal{E}(W, \mathcal{S}_n)\right] \leq &\frac{1}{n}\sum_{i=1}^{n}\sqrt{2\nu^2I(W;Z_i)} \nonumber \\
     & + \alpha I(W;Z_i). \label{eq:subgamma}
\end{align}
\end{corollary}

A detailed evaluation of the practical utility of Corollary \ref{coro:subexponential} and its potential advantages over Corollary \ref{coro:subexponential_old} is left for future work. To end this section, we summarize and outline all  technical conditions in Table~\ref{tab:tech1}. From the table, we can see that our proposed $(\eta,c)$-central condition coincides with many existing works such as \cite{grunwald2020fast} and \cite{grunwald2021pac} with suitable choices of $c$ and $\eta$.  For bounded loss, $\beta = 1$ in the Bernstein condition is equivalent to the central condition with the witness condition,  which implies the $(\eta,c)$-central condition. As an example of unbounded loss functions, the log-loss will satisfy the central and witness conditions under well-specified model \cite{wong1995probability,grunwald2020fast}, which also consequently implies the $(\eta,c)$-central condition. As suggested by Theorem~\ref{thm:sub-Gaussianv2}, Corollary~\ref{coro:subexponential}, the sub-Gaussian, sub-exponential and sub-Gamma conditions can also satisfy the $(\eta,c)$-central condition for different parameters in the assumptions.  

\begin{table*}[!ht]
    \centering
    \caption{A comparison of different conditions in Section \ref{sec:related}}\label{tab:tech1}
    \begin{tabular}{|c|c|}
    \hline
     Condition      &  Key Inequality   \\
     \hline
     $(\eta,c)$-Central Condition & $\log \mathbb{E}\left[e^{-\eta r(W,Z)}\right] \leq  -c \eta \mathbb{E}[r(W,Z)]$ \\
     \hline 
     Bernstein Condition with $\beta = 1$   &  $\log  \mathbb{E}\left[e^{-\eta r(W,Z)}\right] \leq  -\frac{1}{2} \eta \mathbb{E}[r(W,Z)]$ \\
     \hline 
     Central  Condition + Witness Condition    &  $\log  \mathbb{E}\left[e^{-\eta r(W,Z)}\right] \leq  -\frac{1}{c_u}\eta\mathbb{E}[r(W,Z)]$   \\
    \hline 
     Central Condition  &  $\log  \mathbb{E}\left[e^{-\eta r(W,Z)}\right]  \leq 0$      \\
     \hline
     Sub-Gaussian Condition  & $\log \mathbb{E}\left[ e^{ -\eta r(W,Z)} \right] \leq -\eta \mathbb{E}[ r(W,Z)] +\frac{\eta^2 \sigma^2}{2}$  for $\eta\in\mathbb R$   \\
     \hline 
     Sub-exponential Condition & $\log \mathbb{E}\left[ e^{ -\eta r(W,Z)} \right] \leq -\eta \mathbb{E}[ r(W,Z)] +\frac{\eta^2 \nu^2}{2}$  for  $|\eta|\leq 1/\alpha$  \\
     \hline 
     Sub-Gamma Condition & $\log \mathbb{E}\left[ e^{ -\eta r(W,Z)} \right] \leq -\eta \mathbb{E}[ r(W,Z)]  + \frac{\nu^2 \lambda^2}{2(1- \alpha \lambda)}$ for $0\leq \eta\leq 1/\alpha$ \\
     \hline 
    \end{tabular}
\end{table*}

\subsection{Extensions to intermediate rates}
The regularized ERM algorithm, which involves minimizing the empirical risk function and a regularization term, is often used in practice for better statistical and computational properties of the optimization problem. In this section, we further apply the learning bound in Theorem~\ref{thm:eta-c} to the regularized ERM algorithm with the following optimization problem:
\begin{align*}
    w_{\sf{RERM}} = \argmin_{w \in \mathcal{W}} \hat{L}(w,\mathcal{S}_n) + \frac{\lambda}{n}g(w),
\end{align*}
where $g : \mathcal{W} \rightarrow \mathbb{R}$ denotes the regularizer function and $\lambda$ is some coefficient. We define $\hat{\mathcal{R}}_{\textup{reg}}(w,\mathcal{S}_n) = \hat{\mathcal{R}}(w,\mathcal{S}_n) + \frac{\lambda}{n}(g(w) - g(w^*))$, then we have  the following lemma.
\begin{corollary}\label{coro:rerm}
Suppose Assumption~\ref{assump:eta_c_r} hold and also assume $|g(w_1) - g(w_2)| \leq B$ for any $w_1$ and $w_2$ in $\mathcal{W}$ with some $B >0$. Then for the regularized ERM hypothesis $W_{\sf{RERM}}$:
\begin{align*}
     \mathbb{E}_{W}[\mathcal{R}(W_{\sf{RERM}})] \leq & \frac{1}{c} \mathbb{E}_{P_{W\mathcal{S}_n}}[\hat{\mathcal{R}}_{\textup{reg}}\left(W_{\sf{RERM}}, \mathcal{S}_{n} \right)]  \\
     &+\frac{\lambda B}{cn} + \frac{1}{c\eta n} \sum_{i=1}^{n} I(W_{\sf{RERM}};Z_i). 
\end{align*}
\end{corollary}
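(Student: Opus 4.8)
The plan is to obtain the corollary as a direct reduction to the excess-risk part of Theorem~\ref{thm:eta-c}, applied to the hypothesis $W_{\sf{RERM}}$ in place of the generic output $W$. The first step is to note that Assumption~\ref{assump:eta_c_r} is a condition on the marginal $P_W$ alone, and by hypothesis it holds for the marginal induced by the regularized ERM algorithm. I expect the proof of Theorem~\ref{thm:eta-c} to proceed through the Donsker--Varadhan variational representation together with the $(\eta,c)$-central condition, neither of which references the learning rule, so the theorem applies verbatim to $W_{\sf{RERM}}$ and yields
\begin{align*}
  \mathbb{E}_{W}[\mathcal{R}(W_{\sf{RERM}})] \leq \frac{1}{c}\,\mathbb{E}_{P_{W\mathcal{S}_n}}[\hat{\mathcal{R}}(W_{\sf{RERM}},\mathcal{S}_n)] + \frac{1}{c\eta n}\sum_{i=1}^{n} I(W_{\sf{RERM}};Z_i).
\end{align*}

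The second step is purely algebraic. Unfolding the definition $\hat{\mathcal{R}}_{\textup{reg}}(w,\mathcal{S}_n) = \hat{\mathcal{R}}(w,\mathcal{S}_n) + \frac{\lambda}{n}(g(w)-g(w^*))$ gives $\hat{\mathcal{R}}(W_{\sf{RERM}},\mathcal{S}_n) = \hat{\mathcal{R}}_{\textup{reg}}(W_{\sf{RERM}},\mathcal{S}_n) - \frac{\lambda}{n}(g(W_{\sf{RERM}})-g(w^*))$, and the boundedness $|g(w_1)-g(w_2)|\leq B$ together with $\lambda \geq 0$ controls the subtracted term, so $\hat{\mathcal{R}}(W_{\sf{RERM}},\mathcal{S}_n) \leq \hat{\mathcal{R}}_{\textup{reg}}(W_{\sf{RERM}},\mathcal{S}_n) + \frac{\lambda B}{n}$. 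Since $0 < c \leq 1$, multiplying by the positive constant $1/c$ preserves the inequality, and substituting into the display above gives exactly the claimed bound
\begin{align*}
  \mathbb{E}_{W}[\mathcal{R}(W_{\sf{RERM}})] \leq \frac{1}{c}\,\mathbb{E}_{P_{W\mathcal{S}_n}}[\hat{\mathcal{R}}_{\textup{reg}}(W_{\sf{RERM}},\mathcal{S}_n)] + \frac{\lambda B}{cn} + \frac{1}{c\eta n}\sum_{i=1}^{n} I(W_{\sf{RERM}};Z_i).
\end{align*}

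I do not anticipate a substantive obstacle: the content of the corollary \emph{is} this reduction. The only points requiring care are (i) confirming that the proof of Theorem~\ref{thm:eta-c} nowhere exploits the specific unregularized ERM structure, so that it legitimately applies to the law of $W_{\sf{RERM}}$; and (ii) tracking the signs of $\lambda$ and of $g(W_{\sf{RERM}})-g(w^*)$ so that the slack introduced is the additive $\frac{\lambda B}{cn}$ rather than a term of the wrong sign. As a remark motivating the ``intermediate rate'' reading (not needed for the statement itself), the optimality of $W_{\sf{RERM}}$ for the penalized empirical objective gives $\hat{\mathcal{R}}_{\textup{reg}}(W_{\sf{RERM}},\mathcal{S}_n)\leq 0$ pointwise, hence the first term is non-positive in expectation and the effective rate is governed by the interplay of $\frac{\lambda B}{cn}$ and the mutual information term.
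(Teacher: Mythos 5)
Your proposal is correct and follows essentially the same route as the paper's proof: apply Theorem~\ref{thm:eta-c} to $W_{\sf{RERM}}$, then convert $\hat{\mathcal{R}}(W_{\sf{RERM}},\mathcal{S}_n)$ into $\hat{\mathcal{R}}_{\textup{reg}}(W_{\sf{RERM}},\mathcal{S}_n)$ using the boundedness $|g(w_1)-g(w_2)|\leq B$, which introduces exactly the $\frac{\lambda B}{cn}$ slack. Your sign bookkeeping (with $\lambda\geq 0$) and the closing remark that $\hat{\mathcal{R}}_{\textup{reg}}(W_{\sf{RERM}},\mathcal{S}_n)\leq 0$ by optimality both match the paper's argument.
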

The proof of this result is given in Appendix \ref{proof:rerm}. As $\hat{\mathcal{R}}_{\textup{reg}}(w,\mathcal{S}_n)$ will be negative for $w_{\sf{RERM}}$, the regularized ERM algorithm can lead to the fast rate if $I(W_{\sf{RERM}};Z_i) \sim O(1/n)$, which coincides with results in \cite{koren2015fast}.

From Theorem~\ref{thm:eta-c}, we can achieve the linear convergence rate $O(1/n)$ if the mutual information between the hypothesis and data example is converging with $O(1/n)$. To further relax the $(\eta,c)$-central condition, we can also derive the intermediate rate with the order of $O(n^{-\alpha})$ for $\alpha \in [\frac{1}{2}, 1]$. Similar to the $v$-central condition, which is a weaker condition of the $\eta$-central condition \cite{van2015fast,grunwald2020fast}, we propose the $(v,c)$-central condition  and derive the intermediate rate results in Theorem~\ref{lemma:intermediate}.

\begin{definition}[$(v,c)$-Central Condition]\label{def:weaker-eta-c}
Let $v:[0, \infty) \rightarrow[0, \infty)$ be a bounded and non-decreasing function satisfying $v(\epsilon)>0$ for all $\epsilon > 0$. We say that $(\mu, \ell, \mathcal{W}, \mathcal{A})$  satisfies the $(v,c)$-central condition if for all $\epsilon \geq 0$, it holds that
\begin{align}
& \log \mathbb{E}_{P_W\otimes \mu}  \left[e^{-{v(\epsilon)}\left(\ell(W,Z)-\ell(w^*,Z)\right)}\right]  \leq \nonumber \\
& -cv(\epsilon)  \mathbb{E}_{P_W\otimes \mu}\left[\ell(W,Z) - \ell(w^*,Z)\right] + v(\epsilon) \epsilon. \label{eq:v-central} 
\end{align}
\end{definition}

\begin{theorem}\label{lemma:intermediate}
Assume the learning tuple $(\mu, \ell, \mathcal{W}, \mathcal{A})$ satisfies the  $\left(v, c\right)$-central condition up to $\epsilon$ for some function $v$ as defined in~Def. \ref{def:weaker-eta-c} and $0 < c < 1$. Then it holds that for any $\epsilon \geq 0$ and any $0< \eta \leq v(\epsilon)$,
\begin{align*}
     \mathbb{E}_{W\mathcal{S}_n}[\mathcal{E}(W,\mathcal{S}_n)] \leq & \frac{1-c}{c} \mathbb{E}_{P_{W\mathcal{S}_n}}[\hat{\mathcal{R}}\left(W, \mathcal{S}_{n} \right)] \\
     & + \frac{1}{n} \sum_{i=1}^{n} \left( \frac{1}{\eta c}I(W;Z_i) + \frac{\epsilon}{c}\right).
 \end{align*}
\end{theorem}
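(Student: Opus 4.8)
The plan is to mirror the proof of Theorem~\ref{thm:eta-c}, inserting one preliminary reduction that converts the $(v,c)$-central condition (which fixes the exponential tilt to the single value $v(\epsilon)$) into a whole family of cumulant bounds valid for every tilt $\eta\in(0,v(\epsilon)]$. \textbf{Step 1 (from $v(\epsilon)$ down to $\eta$).} Fix $\epsilon\ge 0$ and abbreviate $\lambda=v(\epsilon)$. For $0<\eta\le\lambda$ the map $t\mapsto t^{\eta/\lambda}$ is concave on $[0,\infty)$, so by Jensen's inequality
\begin{align*}
\mathbb{E}_{P_W\otimes\mu}\!\left[e^{-\eta r(W,Z)}\right]=\mathbb{E}_{P_W\otimes\mu}\!\left[\bigl(e^{-\lambda r(W,Z)}\bigr)^{\eta/\lambda}\right]\le\Bigl(\mathbb{E}_{P_W\otimes\mu}\!\left[e^{-\lambda r(W,Z)}\right]\Bigr)^{\eta/\lambda}.
\end{align*}
Taking logarithms and invoking~(\ref{eq:v-central}) gives, for every $0<\eta\le v(\epsilon)$,
\begin{align*}
\log\mathbb{E}_{P_W\otimes\mu}\!\left[e^{-\eta r(W,Z)}\right]\le-c\eta\,\mathbb{E}_{P_W\otimes\mu}[r(W,Z)]+\eta\epsilon,
\end{align*}
i.e.\ exactly the $(\eta,c)$-central inequality with an extra additive slack $\eta\epsilon$.

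\textbf{Step 2 (Donsker--Varadhan coordinatewise).} For each $i\in[n]$, apply the variational representation~(\ref{eq:donsker}) to $I(W;Z_i)=D(P_{W,Z_i}\,\|\,P_W\otimes\mu)$ with test function $f(w,z)=-\eta\,r(w,z)$, obtaining
\begin{align*}
I(W;Z_i)\ge-\eta\,\mathbb{E}_{P_{W,Z_i}}[r(W,Z_i)]-\log\mathbb{E}_{P_W\otimes\mu}\!\left[e^{-\eta r(W,Z_i)}\right].
\end{align*}
Plugging in the bound from Step~1 (legitimate precisely because $0<\eta\le v(\epsilon)$) and noting that $\mathbb{E}_{P_W\otimes\mu}[r(W,Z_i)]=\mathbb{E}_W[\mathcal{R}(W)]$ is the same for all $i$, this becomes
\begin{align*}
I(W;Z_i)\ge-\eta\,\mathbb{E}_{P_{W,Z_i}}[r(W,Z_i)]+c\eta\,\mathbb{E}_W[\mathcal{R}(W)]-\eta\epsilon.
\end{align*}

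\textbf{Step 3 (average and rearrange).} Averaging over $i=1,\dots,n$ and using $\frac1n\sum_i\mathbb{E}_{P_{W,Z_i}}[r(W,Z_i)]=\mathbb{E}_{W\mathcal{S}_n}[\hat{\mathcal{R}}(W,\mathcal{S}_n)]$ (valid since $r(W,Z_i)$ depends only on the pair $(W,Z_i)$), we get
\begin{align*}
\frac1n\sum_{i=1}^n I(W;Z_i)\ge-\eta\,\mathbb{E}_{W\mathcal{S}_n}[\hat{\mathcal{R}}(W,\mathcal{S}_n)]+c\eta\,\mathbb{E}_W[\mathcal{R}(W)]-\eta\epsilon.
\end{align*}
Solving for $\mathbb{E}_W[\mathcal{R}(W)]$ and then substituting the identity $\mathbb{E}_{W\mathcal{S}_n}[\mathcal{E}(W,\mathcal{S}_n)]=\mathbb{E}_W[\mathcal{R}(W)]-\mathbb{E}_{W\mathcal{S}_n}[\hat{\mathcal{R}}(W,\mathcal{S}_n)]$ (which holds because $\mathbb{E}[\tfrac1n\sum_i\ell(w^*,Z_i)]=\mathbb{E}_{Z\sim\mu}[\ell(w^*,Z)]$) yields $\mathbb{E}_{W\mathcal{S}_n}[\mathcal{E}]\le\frac{1-c}{c}\mathbb{E}_{W\mathcal{S}_n}[\hat{\mathcal{R}}]+\frac{1}{c\eta n}\sum_i I(W;Z_i)+\frac{\epsilon}{c}$; rewriting the constant $\epsilon/c$ as $\frac1n\sum_{i=1}^n\epsilon/c$ gives the stated form.

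\textbf{Main obstacle.} Relative to Theorem~\ref{thm:eta-c} there is essentially no new difficulty: the only genuinely new ingredient is the power-mean/Jensen reduction of Step~1, which is elementary. The points deserving a sentence of care are (i) carrying the $\eta\epsilon$ slack cleanly through the Donsker--Varadhan step and checking it aggregates into the single term $\epsilon/c$ — it does, because $\epsilon$ is independent of $i$ — and (ii) noting, as always when using~(\ref{eq:donsker}), that $\mathbb{E}_{P_W\otimes\mu}[e^{-\eta r(W,Z)}]<\infty$, which is guaranteed by the $(v,c)$-central condition together with Step~1.
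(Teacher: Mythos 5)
Your proposal is correct and follows essentially the same route as the paper's proof in Appendix \ref{proof:intermediate}: the power-mean/Jensen reduction from the tilt $v(\epsilon)$ down to any $\eta \le v(\epsilon)$ (the paper applies it to the centered exponent in (\ref{eq:bound-v-central}), you apply it to the uncentered one — an equivalent computation), followed by the Donsker--Varadhan step with $f=-\eta r$, rearrangement, and averaging over $i$. The bookkeeping of the $\eta\epsilon$ slack and the final constant $\epsilon/c$ matches the paper's derivation exactly.
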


We prove this result in Appendix \ref{proof:intermediate}.  In particular, this result implies that if $v(\epsilon) \asymp \epsilon^{1-\beta}$ for some $\beta \in [0,1]$, then the generalization error is bounded by,
\begin{align*}
     \mathbb{E}_{W\mathcal{S}_n}&[\mathcal{E}(W,\mathcal{S}_n)] \leq  \frac{1-c}{c} \mathbb{E}_{P_{W\mathcal{S}_n}}[\hat{\mathcal{R}}\left(W, \mathcal{S}_{n} \right)] \\
     &+ \frac{(1-\beta)^{\frac{1}{2-\beta}}(2-\beta)}{nc(1-\beta)}\sum_{i=1}^{n} I(W;Z_i)^{\frac{1}{2-\beta}}.
 \end{align*}

 
Thus, the expected generalization is found to have an order of $I(W;Z_i)^{\frac{1}{2-\beta}}$, which corresponds to the results under Bernstein's condition \cite{hanneke2016refined,mhammedi2019pac,grunwald2021pac}.

\section{Examples}
In this section, we present several additional examples of fast rates that satisfy the $(\eta,c)$-central condition. In certain examples, the convergence rate may be faster than $O(1/n)$, for instance, when the mutual information $I(W;Z_i)$ exhibits an exponential convergence. In addition, we examine two supervised learning problems: linear regression, a simple problem, and logistic regression, a slightly more complicated problem, both of which demonstrate a convergence rate of $O(1/n)$. For completeness, we have also included some exceptional instances where the $(\eta,c)$-central condition is not satisfied, despite being relatively uncommon in learning problems.

\subsection{Gaussian Mean Estimation with Discrete $\mathcal{W}$}\label{sec:discretehypothesis}
In the Gaussian mean estimation problem, instead of considering $\mathcal{W} = \mathbb{R}$, we now consider an example where we assume $z_i \sim \mathcal{N}(w,1), i = 1,\cdots, n$ for some $w$ in the hypothesis space $\mathcal{W}$ that has finite elements, e.g., $\mathcal{W} = \{-1, 1\}$. We assume $w = 1$ is the true mean. Define the loss function to be the squared loss of $\ell(w,z) = (w-z)^2$ and define the empirical risk minimization problem
\begin{align}
    w_{\textup{ERM}} &= \argmin_{W \in \mathcal{W}} \frac{1}{n}\sum_{i=1}^{n}{(w - z_i)^2},
\end{align}
which is equivalent to the maximum likelihood estimation in this case. Similar to Example~\ref{eg:gaussian_mean_zero}, the ERM algorithm produces the decision rule by:
\begin{align*}
    w_{\ERM} = \begin{cases}
1, \textup{ if } \frac{1}{n}\sum_{i=1}^{n}Z_i \geq 0, \\
-1, \textup{ otherwise.}
\end{cases}
\end{align*} 
As $\frac{1}{n}\sum_{i=1}^{n}Z_i \sim \mathcal{N}(1, \frac{1}{n})$,  we then have that,
\begin{align*}
    P(W_{\ERM} = 1) &= 1 - Q(\sqrt{n}) \\
    & \leq 1 - \frac{\sqrt{n}}{1+n} \frac{1}{\sqrt{2\pi}} \exp(-\frac{n}{2}),
\end{align*}
where $Q(\cdot)$ is the tail distribution function of the standard normal distribution, and the inequality follows as $\frac{x}{1+x^2}\phi(x) \leq Q(x) < \frac{\phi(x)}{x}$ for any $x > 0$ where $\phi(x) = \frac{1}{\sqrt{2\pi}} e^{-\frac{x^2}{2}}$ \cite{borjesson1979simple}. Similarly, we have, 
\begin{align*}
    P(W_{\ERM} = -1) = Q(\sqrt{n}) &\leq \frac{1}{\sqrt{2\pi n}}  \exp(-\frac{n}{2}).
\end{align*}
Then we can calculate the expected risk and the excess risk by,
\begin{align*}
    \mathbb{E}_{P_W\otimes \mu}[\ell(W_{\ERM},Z)] &= P(w_{\ERM} = -1)\mathbb{E}_{Z}[(Z+1)^2] \\
    &\quad + P(w_{\ERM} = 1)\mathbb{E}_{Z}[(Z-1)^2]  \\
    &= 1 + 4Q(\sqrt{n}) \\
    & \leq 1 + \frac{4}{\sqrt{2\pi n}}\exp(-\frac{n}{2}).
\end{align*}
It is easily calculated that $w^* = 1$ in this case and 
\begin{align*}
    \mathbb{E}_{Z}[\ell(w^*,Z)] &= \mathbb{E}_{Z}[(Z-1)^2] = 1.
\end{align*}
Then the expected excess risk is calculated as,
\begin{align*}
    \mathbb{E}_{P_W\otimes \mu}& [\ell(W_{\ERM},Z)] - \mathbb{E}_{Z}[\ell(W^*,Z)]  = 4Q(\sqrt{n}) \\
    & \leq  \frac{4}{\sqrt{2\pi n}}\exp(-\frac{n}{2}) = O(\frac{e^{-\frac{n}{2}}}{\sqrt{n}}).
\end{align*}
The expected generalization error is given by,
\begin{align*}
    &\mathbb{E}_{P_W\otimes \mu}\left[\ell(W_{\ERM},Z_i)\right] - \frac{1}{n}\sum_{i=1}^{n}\mathbb{E}_{WZ_i}\left[\ell(W_{\ERM},Z_i)\right] \\
    & = 2\mathbb{E}_{W_{\ERM}\mathcal{S}_n}\left[W\frac{1}{n}\sum_{i=1}^{n}Z_i\right] - 2\mathbb{E}_W\left[W_{\ERM}\right]\\
    &= 2\mathbb{E}_{\mathcal{S}_n}\left[\left|\frac{1}{n}\sum_{i=1}^{n}Z_i\right|\right] - 2\mathbb{E}_W\left[W_{\ERM}\right] \\
    &= \sqrt{\frac{8}{n\pi}} \exp(-\frac{n}{2}) = O\left(\frac{e^{-\frac{n}{2}}}{\sqrt{n}}\right).
\end{align*}
where the first equality follows by expanding the squared loss $\ell(w,z)$, and the second equality follows as $W_{\ERM}$ has the same sign as $\sum^n_{i=1}Z_i$. The third equality is from the following calculation where:
\begin{align}
    \mathbb{E}_W[W_{\ERM}] = 1 - 2Q(\sqrt{n})
\end{align}
and 
\begin{align}
\mathbb{E}_{\mathcal{S}_n}\left[\left|\frac{1}{n}\sum_{i=1}^{n}Z_i\right|\right] = \sqrt{\frac{2}{n\pi}} \exp(-\frac{n}{2}) + 1 - 2Q(\sqrt{n})
\end{align}
as $\left|\frac{1}{n}\sum_{i=1}^{n}Z_i\right|$ is the folded Gaussian distribution with the parameters $\mu = 1$ and $\sigma = \frac{1}{\sqrt{n}}$. Now we evaluate our mutual information bound and compare it to the true excess risk and generalization error. For a fixed $z_i$, we have,
\begin{align*}
     P(W_{\ERM}=1|z_i) &= P(\frac{1}{n-1}\sum_{j\neq i} Z_j \geq - \frac{1}{n-1}z_i) \\
     &= 1 - Q(\frac{z_i}{\sqrt{n-1}} + \sqrt{n-1}).
\end{align*}
We then calculate the mutual information for a large $n$ by:
\begin{align*}
    &I(W_{\ERM};Z_i) = H(W_{\ERM}) - H(W_{\ERM}|Z_i) \\
    &= h_2(Q(\sqrt{n})) - \mathbb{E}_{Z_i}\left[h_2(Q(\frac{Z_i}{\sqrt{n-1}} + \sqrt{n-1}))\right] \\
    &= h_2(Q\sqrt{n}) \\
    &\quad - \mathbb{E}_{Z_i \leq -(n-1)}\left[h_2(Q(\frac{Z_i}{\sqrt{n-1}} + \sqrt{n-1}))\right] \\
    &\quad - \mathbb{E}_{Z_i > -(n-1)}\left[h_2(Q(\frac{Z_i}{\sqrt{n-1}} + \sqrt{n-1}))\right] \\
    &\leq h_2(Q(\sqrt{n})) - h_2(Q(\frac{\mathbb{E}_{Z_i > -(n-1)}[Z_i]}{\sqrt{n-1}} + \sqrt{n-1})) \\
    &= O\left(\frac{e^{-\frac{n}{2}}}{\sqrt{n}}\right)
\end{align*}
where the inequality follows Jensen's inequality with the fact that $Q$-function is locally convex for a large $n$ as $Q''(x) = \frac{x}{\sqrt{2\pi}}e^{-\frac{x^2}{2}}$ and $h_2$ is a concave function. 
By writing the excess risk as $r(w,z) = \ell(w,z) - \ell(w^*,z) = (w - z)^2 - (1 - z)^2$, its moment generating function is calculated by:
\begin{align*}
    &\mathbb{E}_{P_W\otimes \mu}[e^{-\eta r(W_{\ERM},Z)}] \\
    & = P(W_{\ERM}=1) + P(W_{\ERM}=-1) \mathbb{E}_{Z}[e^{-4\eta Z}] \\
    &= 1 - Q(\sqrt{n})  + Q(\sqrt{n})\mathbb{E}_{Z}[e^{-4\eta Z}].
\end{align*}
We can then calculate
\begin{align*}
    \mathbb{E}_{Z}[e^{-4\eta Z}] 
    &= e^{8\eta^2 - 4\eta}
\end{align*}
and 
\begin{align*}
    \log \mathbb{E}_{P_W\otimes \mu}[e^{-\eta r(W_{\ERM},Z)}] &= \log (1 -Q(\sqrt{n})  \\
    & + Q(\sqrt{n})\exp(8\eta^2 - 4\eta)).
\end{align*}
Therefore, we can check the $(\eta,c)$-central condition with $\log(1+x) \leq x$ for any $x> -1$:
\begin{align*}
    \log \mathbb{E}_{P_W\otimes \mu}[e^{-\eta r(W_{\ERM},Z)}] &\leq   Q(\sqrt{n}) (e^{8\eta^2-4\eta}-1) \\
    &= -4c\eta Q(\sqrt{n}).
\end{align*}
where we can choose $c = \frac{1-e^{8\eta^2-4\eta}}{4\eta} \leq 1-2\eta$ with selecting any $\eta \in (0,\frac{1}{2})$. Then the bound in Theorem \ref{thm:eta-c} shows that:
\begin{align*}
    \mathbb{E}_{W\mathcal{S}_n}[\mathcal{E}(W_{\ERM},\mathcal{S})] &\leq \frac{1}{c\eta n}\sum^{n}_{i=1}I(W_{\ERM};Z_i) \\
    & = O\left(\frac{e^{-\frac{n}{2}}}{\sqrt{n}}\right),
\end{align*}
which decays exponentially. Such a bound matches the true convergence rate, and this example shows that with the $(\eta,c)$-central condition, the convergence rate is dominated by the mutual information between the hypothesis and the instances.

\subsection{Linear Regression}
We now extend the Gaussian mean estimation example to the linear regression problem where we have the instance space $\mathcal{Z} = \mathcal{X} \times \mathcal{Y}$ where $\mathcal{X} \subseteq \mathbb{R}^{d}$ represents the feature space and $\mathcal{Y} \subseteq \mathbb{R}$ represents the label space. We consider the linear regression model with the case $\mathcal{X} \subseteq \mathbb{R}$ for simplicity such that the label is generated in the following way:
\begin{align}
    Y_i = w^*X_i + \epsilon_i
\end{align}
where $w^* \in \mathbb{R}$ denotes the underlying (but unknown) hypothesis and $\epsilon_i$ are the noises i.i.d. drawn from some zero-mean distribution. We consider the loss function $\ell(w,z_i) = (y_i - wx_i)^2$. Then the ERM solution can be calculated as:
\begin{align*}
    w_{\ERM} = \frac{\sum_{i=1}^{n} x_iy_i}{\sum_{i=1}^{n} x^2_i} = w^* + \sum_{i}\frac{x_j}{\sum_{j}x^2_j}\epsilon_i.
\end{align*}
We consider the fixed design such that $x_i$ is not randomized where the optimal hypothesis is $w^*$ and we assume $x_i \neq x_j$ for $i \neq j$. Assume $\epsilon_i \sim \mathcal{N}(0,\sigma^2)$, we can then calculate the expected loss over $\epsilon$ as:
\begin{align*}
    \mathbb{E}_{Z_i}[\ell(w^*,Z_i)] &= \mathbb{E}_{\epsilon_i}[(w^*x_i + \epsilon_i - w^*x_i)^2] \\
    &= \mathbb{E}_{\epsilon_i}[ \epsilon_i^2] \\
    &= \sigma^2.
\end{align*}
The expected loss under $w_{\ERM}$ can be calculated as:
\begin{align*}
    &\mathbb{E}_{P_W\otimes Z_i}[\ell(W_\ERM,Z_i)] 
    \\
    &= \mathbb{E}_{\epsilon_j \otimes \epsilon'_i}\left[(\epsilon'_i - \sum_{k}\frac{x_ix_k}{\sum_{j}x^2_j}\epsilon_k )^2\right]  \\
    &= \sigma^2 + \frac{x^2_i}{\sum_{j} x^2_j}\sigma^2    
\end{align*}
Therefore the excess risk can be calculated as:
\begin{align*}
    \frac{1}{n}\sum_{i=1}^{n}\mathbb{E}_{P_W\otimes \mu}[\ell(W_\ERM,Z_i)] -  \frac{1}{n}\sum_{i=1}^{n}&\mathbb{E}_{Z_i}[\ell(w^*,Z_i)] \\
    & = \frac{1}{n}\sigma^2,
\end{align*}
which scales with $O(\frac{1}{n})$. For the generalization error, we have,
\begin{align*}
    &\frac{1}{n}\sum_{i=1}^{n}\mathbb{E}_{P_W\otimes Z_i}[\ell(W_\ERM,Z_i)] \\
    &\quad - \frac{1}{n}\sum_{i=1}^{n}\mathbb{E}_{WZ_i}[\ell(W_\ERM,Z_i)] \\
    &= \sigma^2 + \frac{1}{n}\sigma^2 - \sigma^2 - \frac{1}{n}\sigma^2 + 2\frac{\sigma^2}{n} \\
    &= \frac{2}{n}\sigma^2,
\end{align*}
which also scales with $O(\frac{1}{n})$. Next, we can calculate the mutual information by:
\begin{align*}
    I(W;Z_i) &= h(W) - h(W|Z_i) = \frac{1}{2}\log \frac{\sum_{j}x_j^2}{\sum_{j \backslash i} x_j^2}.
\end{align*}
Then we have,
\begin{align*}
    \frac{1}{n}\sum_{i=1}^{n}I(W;Z_i) &\leq \frac{1}{2n} \sum_{i}\frac{x^2_i}{\sum_{j \backslash i}x^2_j}.
\end{align*}
For $n > 3$ and any $i$, there exists a constant $c \in (0,1)$ such that $\sum_{j \backslash i} x^2_j \geq c \sum_j x^2_j$ holds. Then we can further upper bound the mutual information terms by,
\begin{align*}
    \frac{1}{n}\sum_{i=1}^{n}I(W;Z_i) &\leq \frac{1}{2n} \sum_{i}\frac{x^2_i}{c\sum_{j}{x^2_j}} = \frac{1}{2nc},
\end{align*}
which scales with $O(1/n)$. Since $\ell(W,Z_i)$ is $(1+\frac{x^2_i}{\sum_j x^2_j})\sigma^2\chi^2$ distributed, we calculate the log-moment generating function as,
\begin{align}
    &\log \mathbb{E}_{P_W\otimes \mu}[e^{\eta \ell(W_\ERM,Z_i)}]  \nonumber \\
    & = -\frac{1}{2}\log(1 - 2(1+\frac{x^2_i}{\sum_j x^2_j})\sigma^2\eta).
\end{align}
Let $\sigma_{i} = (1+\frac{x^2_i}{\sum_j x^2_j})\sigma^2$. We have,
\begin{align}
    &\log\left(\mathbb{E}_{P_W\otimes \mu}[\exp{\eta (\ell(W_\ERM,Z_i)- \mathbb{E}[\ell(W_\ERM,Z_i)])} ]\right) \nonumber \\
    & \leq \sigma_i^4\eta^2, \textup{ for any } \eta < 0.
\end{align}
Therefore, following \cite{bu2020tightening}, the generalization error bound becomes:
\begin{align*}
    &\frac{1}{n}\sum_{i=1}^{n} \sqrt{2\sigma^4_iI(W;Z_i)} \\
    &= \frac{1}{n}\sum_{i=1}^{n} \sqrt{2\left(\frac{\sum_j x^2_j + x^2_i}{\sum_j x^2_j}\right)^2 \sigma^4I(W;Z_i)} \\
    &\leq \sqrt{(2-c)^2\sigma^4\frac{1}{nc}},
\end{align*}
which scales with $O(\sqrt{\frac{1}{n}})$. If we consider the excess risk $r(W,Z_i)$, we can calculate that,
\begin{align*}
    &\log \mathbb{E}_{P_W\otimes \mu}\left[\exp(-\eta r(W,Z_i) )\right] \\
    &= \frac{1}{2}\log \frac{\sum_{j} x^2_j / x^2_i}{\sum_{j} x^2_j / x^2_i - (4\eta^2\sigma^4 - 2\eta \sigma^2)} \\
    &\leq \frac{2\eta^2 \sigma^4 - \eta \sigma^2}{\sum_j x^2_j / x^2_i} \\
    &\leq -c\eta \frac{\sigma^2 x^2_i}{\sum_{j}x^2_j}.
\end{align*}
Hence the $(\eta,c)$ is satisfied for $c \leq 1 - 2\eta\sigma^2$ and $\eta \leq \frac{1}{4\sigma^2}$. By selecting $\eta = \frac{1}{4\sigma^2}$ and $c = \frac{1}{2}$, the bounds for the ERM (by ignoring the empirical excess risk terms) become:
\begin{align*}
    \mathbb{E}_{W\mathcal{S}_n}[\mathcal{E}(W,\mathcal{S}_n)] \leq \frac{1}{\eta c n} \sum_{i=1}^{n} I(W;Z_i) \\
    \leq \frac{8\sigma^2}{n} \sum_{i=1}^{n} I(W;Z_i) 
\end{align*}
which scales with $O(\frac{1}{n})$. 


\subsection{Logistic Regression}\label{sec:logistic}
We apply our bound in a typical classification problem. Consider a logistic regression problem in a 2-dimensional space. For each $w \in \mathbb{R}^2$ and $z_i = (x_i,y_i) \in \mathbb{R}^{2} \times \{0,1\}$, the loss function is given by
\begin{align*}
    \ell(w,z_i) := &-(y_i\log (\sigma(w^Tx_i)) \\
    &+ (1-y_i)\log (1 - \sigma(w^Tx_i)))
\end{align*}
where $\sigma(x) = \frac{1}{1+e^{-x}}$. Here each $x_i$ is drawn from a standard multivariate Gaussian distribution $\mathcal{N}(0,\mathbf{I}_{2})$ and Let $w^* = (0.5,0.5)$, then each $y_i$ is drawn from the Bernoulli distribution with the probability $P(Y_i = 1|x_i, w^*) = \sigma(-x^T_iw^*)$. We also restrict hypothesis space as $\mathcal{W} = \{w: \|w\|_2 < 3\}$ where $W_{\ERM}$ falls in this area with high probability. Since the hypothesis is bounded and under the log-loss, then the learning problem will satisfy the central and witness condition \cite{van2015fast,grunwald2020fast}. Therefore, it will satisfy the $(\eta,c)$-central condition. We will evaluate the generalization error and excess risk bounds in (\ref{thm:eta-c}). To this end, we need to estimate $\eta$, $c$ and mutual information $I(W_{\ERM},Z_i)$ efficiently. Hence we repeatedly generate $w_{\ERM}$ and $z_i$ and use the empirical density for estimation. Specifically, we vary the sample size $n$ from $50$ to $500$, and for each $n$, we repeat the logistic regression algorithm 500 times to generate a set of $w_\ERM$. By fixing $\eta = 0.8$ as an example, we can empirically estimate the CGF and expected excess risk with the data sample and a set of ERM hypotheses, which leads to $c \approx 0.385$. It is worth noting that from the experiments, once $\eta$ is fixed, the choice of $c$ actually does not depend on the sample size $n$, which empirically confirms the ($\eta,c$)-central condition. For the mutual information, we decompose $I(W; X,Y) =I(W; Y) + P(Y = 0)I(W; X|y = 0) + P(Y = 1)I(W; X|y = 1)$ by chain rule, and the first term can be approximated using the continuous-discrete estimator~\cite{gao2017estimating} for mutual information and the rest terms are continuous-continuous ones~\cite{moddemeijer1989estimation,kraskov2004estimating}. To demonstrate the usefulness of the results, we also compare the bounds with the true excess risk and true generalization error. The comparisons are shown in Figure~\ref{fig:logistic}. We point out that the mutual information estimate, in this case, is not a trivial task as it involves the mutual information between continuous and discrete random variables. As shown in Figure \ref{fig:logistic}, both the true generalization error and the excess risk converge as $O(\frac{1}{n})$. The bounds on the generalization error and the excess error also seem to follow the same convergence rate. However, a more rigorous study of the estimated mutual information term is required to give a conclusive statement. Furthermore, we also plot the comparison with the bound in \cite{bu2020tightening} with the form of $\frac{1}{n}\sum_{i=1}^{n}\sqrt{\frac{I(W;Z_i)}{2}}$, and the comparison shows that our bound is decaying faster as $n$ increases.

\begin{figure*}[!ht]
\centering
\subfloat[Generalization Error]{\includegraphics[width=0.3\textwidth]{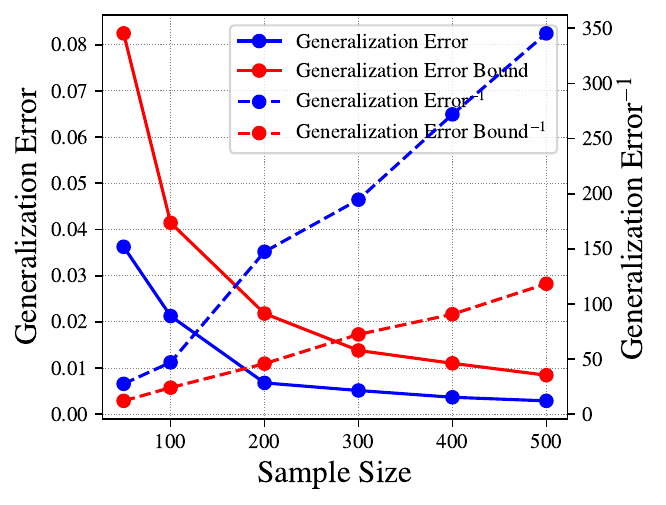}}\quad
\subfloat[Excess Risk]{\includegraphics[width=0.3\textwidth]  
{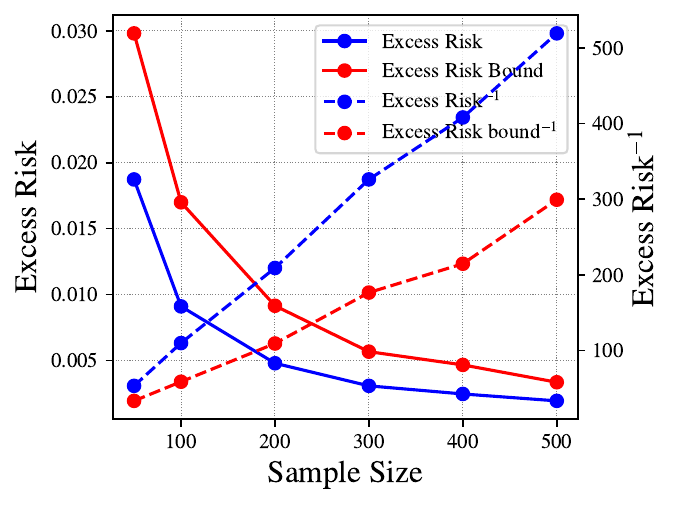}}\quad
\subfloat[Bound Comparison]{\includegraphics[width=0.26\textwidth]  
{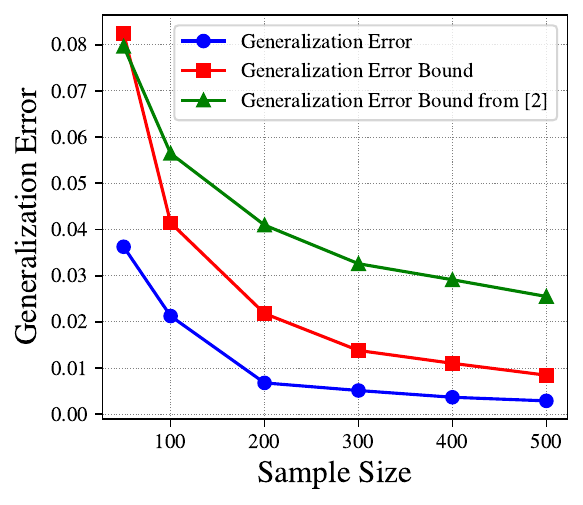}}
\caption{We represent the true expected generalization error in (a) and true excess risk in (b) along with their bounds in Theorem~\ref{thm:eta-c}. Here we vary $n$ from 50 to 500. We also plot their reciprocals to show the rate w.r.t. sample size $n$. We also plot the generalization error bound comparisons with \cite{bu2020tightening} in (c). All results are derived by 500 experimental repeats.}\label{fig:logistic}
\end{figure*}



\subsection{Examples when $(\eta,c)$-central condition does not hold}
There are cases where $(\eta,c)$-central condition does not hold. One instance is when $\mathbb{E}_{P_W\otimes \mu}[r(W,Z)] = 0$ but $\log\mathbb{E}_{P_W\otimes \mu}[e^{-\eta r(W,Z)}] > 0$. Notice that as $\mathbb{E}_{\mu}[{r(w,Z)}]$ is non-negative, hence requiring $\mathbb{E}_{P_W\otimes \mu}[r(W,Z)] = 0$ is the same as requiring $\mathbb{E}_{\mu}[{r(w,Z)}]=0$ for all $w$, namely all the hypotheses given by the algorithm are as good as the optimal hypothesis $w^*$. On the other hand, the CGF is not necessarily equal to zero in this case. We point out that though the exact condition $\mathbb{E}_{\mu}[{r(w,Z)}]=0$ for all $w$ seems rarely hold, it may be relevant in deep learning scenarios where one can often find many empirical minimizers that have a very low excess risk. In the following, we give a simple example where this condition holds (hence $(\eta,c)$-central condition does not hold) for the completeness of the results.


\begin{example}[Gaussian mean estimation with all hypotheses equally optimal] \label{eg:gaussian_mean_zero}
Different from the example in \ref{sec:discretehypothesis}, we consider the zero mean Gaussian case where $Z_i \sim \mathcal{N}(0,1), i = 1,\cdots, n$ and the same hypothesis space $\mathcal{W} = \{-1, 1\}$. Intuitively speaking, unlike the previous situation, even if we have abundant data, both $1$ and $-1$ are the optimal hypothesis due to the symmetry. Mathematically, let the algorithm be maximum likelihood estimation and the ERM algorithm produces the decision rule by:
\begin{align*}
W_{\ERM} = \begin{cases}
1, \textup{ if } \frac{1}{n}\sum_{i=1}^{n}Z_i \geq 0, \\
-1, \textup{ otherwise.}
\end{cases}
\end{align*} 
Then the distribution of $W_{\ERM}$ is Bernoulli distributed with $P(W_{\ERM} = 1) = P\left(\frac{1}{n}\sum_{i=1}^{n}Z_i \geq 0 \right).$ As $\frac{1}{n}\sum_{i=1}^{n}Z_i \sim \mathcal{N}(0, \frac{1}{n})$,  we have that,
\begin{align*}
    P(W_{\ERM} = 1) &= P(W_{\ERM} = -1) = \frac{1}{2} 
\end{align*}
due to the symmetry. Then we can calculate the expected risk and the excess risk by,
\begin{align*}
&\mathbb{E}_{P_W\otimes \mu}[\ell(W_{\ERM},Z)] \\
& = P(W_{\ERM} = 1)\mathbb{E}_{Z}[(Z-1)^2] \\
&\quad + P(W_{\ERM} = -1)\mathbb{E}_{Z}[(Z+1)^2] \\
&= 2.
\end{align*}
It is important to note that $w^* $ can be either $-1$ or $1$ in this case, and 
\begin{align*}
    \mathbb{E}_{Z}[\ell(W^*,Z)] &= \mathbb{E}_{Z}[(Z-1)^2] = 2.
\end{align*}
Then the \textbf{excess risk} is calculated as,
\begin{align*}
    \mathbb{E}_{P_W\otimes \mu}[\ell(W_{\ERM},Z)] - \mathbb{E}_{Z}[\ell(W^*,Z)]  = 0.
\end{align*}
The expected generalization error is given by,
\begin{align*}
    &\mathbb{E}_{P_W\otimes \mu}[\ell(W_{\ERM},Z_i)] - \frac{1}{n}\sum_{i=1}^{n}\mathbb{E}_{WZ_i}[\ell(W_{\ERM},Z_i)] \\  
    &= 2\mathbb{E}_{\mathcal{S}_n}\left[\left|\frac{1}{n}\sum_{i=1}^{n}Z_i\right|\right] 
    = \sqrt{\frac{8}{\pi n}}
\end{align*}
due to the fact that the expectation of the absolute Gaussian r.v. with mean of $\mu$ and variance of $\sigma^2$ is $\sigma \sqrt{\frac{2}{\pi}} e^{-\frac{\mu^2}{2 \sigma^2}}+\mu\left(1-2 \Phi\left(\frac{-\mu}{\sigma}\right)\right)$.  
If we choose $w^* = 1$ and the excess risk can be written as $r(w,z) = \ell(w,z) - \ell(w^*,z) = (w - z)^2 - (1 - z)^2$, now we can calculate the MGF of the excess risk explicitly as: 
\begin{align*}
    &\mathbb{E}_{P_W\otimes \mu}[e^{-\eta r(W_{\ERM},Z)}] \\
    &\quad = P(W_{\ERM}=1) + P(W_{\ERM}=-1) \mathbb{E}_{Z}[e^{-4\eta Z}] \\
    &= \frac{1}{2}+ \frac{1}{2}\mathbb{E}_{Z}[e^{-4\eta Z}].
\end{align*}
For the second term, we have the following:
\begin{align*}
    \mathbb{E}_{Z}[e^{-4\eta Z}] &= \int^{+\infty}_{-\infty} \frac{1}{\sqrt{2\pi}}e^{-\frac{z^2}{2}}e^{-4\eta z} dz 
    = e^{8\eta^2}.
\end{align*}
Finally, it yields the CGF by, 
\begin{align*}
    \log \mathbb{E}_{P_W\otimes \mu}[e^{-\eta r(W,Z)}] = \log ( \frac{1}{2}  + \frac{1}{2}\exp(8\eta^2))
\end{align*}
for any $\eta$. In this case, the $(\eta,c)$-central condition does \textbf{not} hold as $\log \mathbb{E}_{P_W\otimes \mu}[e^{-\eta r(W,Z)}] = \log ( \frac{1}{2}  + \frac{1}{2}\exp(8\eta^2)) \geq 0$ for any $\eta > 0$. 
\end{example}

\begin{example}[Hypothesis Selection \cite{russo2016controlling}] \label{eg:hypothesis_selection}
We provide another example where the $(\eta,c)$-central condition is not satisified. Let $\mathcal{S} = (Z_1,Z_2,\cdots, Z_n)$ where $Z_i \in \mathbb{R}$ is a random variable with the mean of $\mu_i$ and $W$ to be some selection hypothesis such that $\mathcal{W} = [1:n]$ where $[1:n]$ denotes a range of integers from 1 to n. If we seek for the largest instance $Z_i$ in the dataset $\mathcal{S}$, we define the loss function $\ell(W,\mathcal{S}) = -Z_{W}$ and simply minimising the loss function will lead to a hypothesis that produces the instance with the largest value. Then we have that $\mathbb{E}_{W\mathcal{S}}[\ell(W,\mathcal{S})] = -\mathbb{E}_{W\mathcal{S}}[Z_W]$ and $\mathbb{E}_{W\otimes \mathcal{S}}[\ell(W,\mathcal{S})] = -\mathbb{E}_{W\otimes \mathcal{S}}[Z_W] = -\mathbb{E}_{W}\mathbb{E}_\mathcal{S}[Z_W] = -\mathbb{E}_W[\mu_W]$. We can write:
\begin{align*}
    |\mathbb{E}_{W\mathcal{S}}[\ell(W,\mathcal{S})] - \mathbb{E}_{W\otimes \mathcal{S}}[\ell(W,\mathcal{S})]| = |\mathbb{E}_{W\mathcal{S}}[Z_W - \mu_W]|. 
\end{align*}
We consider the ERM hypothesis $W = \argmin \ell(W,\mathcal{S}) = \argmax_i Z_i$ and assume all $Z_i$ are normally distributed and have the same mean of some positive $\mu$ and variance of $\sigma^2$ for simplicity. Then we have that:
\begin{align*}
I(W;\mathcal{S}) = H(W) - H(W|\mathcal{S}) = H(W) = \log n.    
\end{align*}
Since we assume $\mu_W = \mu$ for all $W$, we have $|\mathbb{E}_{W\mathcal{S}}[Z_W - \mu_W]| \leq \sqrt{2\sigma^2 \log n} = \sqrt{2\sigma^2 I(W;\mathcal{S})}$ for any $n$. Now we check the $(\eta,c)$-central condition w.r.t. the excess risk function $r(W,\mathcal{S}) = \ell(W,\mathcal{S}) - \ell(w^*, \mathcal{S})$ where $w^*$ can be any index in $[1:n]$:
\begin{align*}
    \log \mathbb{E}_{W\otimes \mathcal{S}}\left[ e^{-\eta r(W, \mathcal{S})} \right] \leq -c\eta \mathbb{E}_{W\otimes \mathcal{S}}\left[ r(W, \mathcal{S})\right]
\end{align*}
for some $c \in [0,1]$. We can then calculate that 
\begin{align*}
    \log \mathbb{E}_{W\otimes \mathcal{S}}\left[ e^{-\eta r(W, \mathcal{S})}\right] &= \log \left(\frac{1}{n} + \frac{n-1}{n} \exp{\sigma^2\eta^2}\right) \\
    &> 0 ,
\end{align*}
for positive $\eta$ while the expected excess risk is $\mathbb{E}_{W\otimes \mathcal{S}}\left[ r(W, \mathcal{S})\right] = 0$ and this contradicts to the proposed $(\eta,c)$-central condition. In both examples, we observe that the expected excess risk for the ERM hypothesis is always zero because each hypothesis in the class is optimal for the given distribution. However, the CGF could still be positive since we have stochasticity in the hypothesis derived, which is a rare occurrence in most learning scenarios, and we highlight this unique scenario for completeness.  
\end{example}

\section{Conclusion}
As we demonstrate in this paper, if the sub-Gaussian assumption is imposed on the excess loss in the typical information-theoretic generalization error bounds, the square root does not necessarily prevent us from achieving a fast rate. On top of that, we identify the key conditions that lead to the fast and intermediate rate bounds in expectation. Intuitively speaking, to achieve a fast rate bound for both the generalization error in expectation, the output hypothesis of a learning algorithm must be ``good" enough compared to the optimal hypothesis $w^*$. Here we encode the notion of goodness in terms of the CGF by controlling the gap between $\ell(w,Z)$ and $\ell(w^*,Z)$.  Further, we verify the proposed bounds and present the results analytically with examples. We remark that there is some room for future work to improve the bounds. One possible direction is to develop novel techniques for removing the empirical excess risk term in the bound for general algorithms other than ERM or regularized ERM. Additionally, in most cases, $w^*$ is usually not known but one could possibly seek a reference hypothesis $\hat{w}$ that could be trained from the sample and close to $w^*$, allowing more flexibility of the bounds in real applications.

\section*{Acknowledgement}
The preliminary version of this work is presented at the ITW2022 conference, we greatly appreciate useful feedback and comments from all reviewers. The work  is supported by Melbourne Research Scholarships (MRS).

\appendices

\section{Proof of~Theorem~\ref{thm:sub-Gaussian}}\label{proof:sub-Gaussian}
\begin{proof}
As $w^*$ is independent of $Z_i$, we have
\begin{align}
    &\mathbb{E}_{W\mathcal{S}_n}[\mathcal{E}(W, \mathcal{S}_n)] = \mathbb{E}_{W\mathcal{S}_n}[\mathcal{R}(W) - \hat{\mathcal{R}}(W, \mathcal{S}_n)] \nonumber \\
    &= \mathbb{E}_{W \otimes \mathcal{S}_n}[\hat{\mathcal{R}}(W,\mathcal{S}_n)] - \mathbb{E}_{W\mathcal{S}_n}[\hat{\mathcal{R}}(W,\mathcal{S}_n)] .
\end{align}
Let the distribution $P_{WZ_i}$ denote the joint distribution induced by $P_{W\mathcal{S}_n}$ with the algorithm $P_{W|\mathcal{S}_n}$. With the i.i.d. assumption, we can rewrite the generalization error by:
\begin{align*}
    \mathbb{E}_{W\mathcal{S}_n}[\mathcal{E}(W,\mathcal{S}_n)] = &\frac{1}{n}\sum_{i=1}^n \mathbb{E}_{P_W \otimes \mu}[r(W,Z_i)] \\
    &- \mathbb{E}_{WZ_i}[r(W,Z_i)].
\end{align*}
Using the KL-divergence property \cite{bu2020tightening, xu2017information} that
\begin{align}
    & \sqrt{ 2\sigma^2 D\left( P_{WZ_i} \| P_{W}\otimes P_{Z_i} \right)} \nonumber \\
    & \geq \mathbb{E}_{P_W \otimes \mu}[r(W,Z_i)] - \mathbb{E}_{WZ_i}[r(W,Z_i)]
\end{align}
under the $\sigma$-sub-Gaussian assumption under the distribution $P_{W} \otimes \mu$. Summing up every term concludes the proof.
\end{proof}

\section{Proof of~Theorem~\ref{thm:lower_bounds}}\label{proof:lowerbound}
\begin{proof}
By Jensen's inequality, we have the inequalities as in (\ref{eq:lower_bounds}).
\begin{figure*}[!htb]
\normalsize
\begin{align}
    I(W;Z_i) &= \mathbb{E}_{P_{WZ_i}}\left[-\frac{r(W,Z_i)}{2\sigma_N^2}- \frac{(W-Z_i)^2}{2\sigma^2_N(n-1)}\right] - \log \mathbb{E}_{P_W\otimes \mu}\left[\exp{\left(-\frac{r(W,Z_i)}{2\sigma_N^2} - \frac{(W-Z_i)^2}{2\sigma^2_N(n-1)}\right)}\right] \nonumber \\
    & \leq \mathbb{E}_{P_{WZ_i}}\left[-\frac{r(W,Z_i)}{2\sigma_N^2}- \frac{(W-Z_i)^2}{2\sigma^2_N(n-1)}\right] - \mathbb{E}_{P_W\otimes \mu}\left[{-\frac{r(W,Z_i)}{2\sigma_N^2} - \frac{(W-Z_i)^2}{2\sigma^2_N(n-1)}}\right], \label{eq:lower_bounds}
\end{align}
\vspace*{4pt}
\hrulefill
\end{figure*}
By rearranging the inequality,  the following holds for all $i = 1,2, \cdots, n$, 
\begin{align}
    &\mathbb{E}_{P_W\otimes \mu}\left[r(W,Z_i)\right] \geq 2\sigma^2_NI(W;Z_i) \nonumber \\
    &+ \mathbb{E}_{P_{WZ_i}}[r(W,Z_i)] \nonumber \\
    &+ \frac{1}{n-1} (\mathbb{E}_{WZ_i}\left[ \ell(W,Z_i)\right] - \mathbb{E}_{P_W\otimes \mu}\left[ \ell(W,Z_i)\right]).
\end{align}
We complete the proof of the lower bound on the excess risk by averaging over $Z_i$. For the generalization error, we rewrite \eqref{eq:lower_bounds} as
\begin{align}
   \frac{1}{2\sigma^2_N}\frac{n}{n-1} (\mathbb{E}_{P_W\otimes \mu}\left[ \ell(W,Z_i)\right] &- \mathbb{E}_{WZ_i}\left[ \ell(W,Z_i)\right] ) \nonumber \\
   & \geq I(W;Z_i).
\end{align}
Summing up every term for $Z_i$ completes the proof for the generalization error.
\end{proof}

\section{Proof of~Theorem~\ref{thm:sub-Gaussianv2}}\label{proof:sub-Gaussianv2}
\begin{proof}
Using the Donsker-Varadhan representation of the KL divergence, we build on the following inequality for some $\eta > 0$,
\begin{align}
    \frac{I(W;Z_i)}{\eta} &+ \Esub{P_{WZ_i}}{r(W,Z_i)} \nonumber \\
    &\geq  - \frac{1}{\eta} \log \mathbb{E}_{P_{W}\otimes \mu}[e^{-\eta(r(W,Z_i))}].
\end{align}
We will bound the R.H.S. using the following technique. For any $a$, we have,
\begin{align}
    &\log \mathbb{E}_{P_{W}\otimes \mu}\left[e^{a\eta \bar{r} -\eta r(W,Z_i)}\right] \nonumber \\
    &\quad = \log \mathbb{E}_{P_{W}\otimes \mu}\left[e^{\eta (\bar{r} - r(W,Z_i)) + (a-1)\eta \bar{r}}\right] \\
    &\quad \leq \frac{\sigma^2 \eta^2}{2} + (a-1) \eta \bar{r}.
\end{align}
which is equivalent to
\begin{align}
    \log \mathbb{E}_{P_{W}\otimes \mu}[e^{-\eta(r(W,Z_i))}] \leq \frac{\sigma^2 \eta^2}{2} - \eta \mathbb{E}[r(W,Z_i)],
\end{align}
By defining $0 < a_\eta = 1-  \frac{\eta\sigma^2}{2\mathbb{E}[r(W,Z_i)]} < 1$, the R.H.S. of the expression can be written as:
\begin{align}
    \frac{\sigma^2 \eta^2}{2} - \eta \mathbb{E}[r(W,Z_i)] = -a_\eta \eta \mathbb{E}[r(W,Z_i)].
\end{align}
Hence we have
\begin{align}
    \log \mathbb{E}_{P_{W}\otimes \mu}[e^{-\eta(r(W,Z_i))}] &\leq -a_\eta \eta \mathbb{E}[r(W,Z_i)]. 
\end{align}
where $0 < \eta < \frac{2\mathbb{E}[r(W;Z_i)]}{\sigma^2}$. We then rewrite the inequality as,
\begin{align}
    -\frac{1}{\eta}\log \mathbb{E}_{P_{W}\otimes \mu}[e^{-\eta(r(W,Z_i))}] &\geq a_\eta \mathbb{E}[r(W,Z_i)], 
\end{align}
and we further have the following bound,
\begin{align}
    \frac{I(W;Z_i)}{\eta} + \mathbb{E}_{P_{WZ_i}}[r(W,Z_i)] &\geq a_\eta \mathbb{E}_{P_WP_{Z_i}}[r(W,Z_i)].
\end{align}
Hence,
\begin{align}
    &\mathbb{E}_{P_WP_{Z_i}}[r(W,Z_i)] - \mathbb{E}_{P_{WZ_i}}[r(W,Z_i)] \nonumber \\
    &\leq \frac{I(W;Z_i)}{\eta a_\eta} + \frac{1-a_\eta}{a_\eta}\mathbb{E}_{P_{WZ_i}}[r(W,Z_i)].
\end{align}
Summing every term for $Z_i$, we have,
\begin{align}
     \mathbb{E}_{W\mathcal{S}_n} \left[\mathcal{E}(W, \mathcal{S}_n)\right] \leq & \frac{1-a_\eta}{a_\eta} \mathbb{E}_{W\mathcal{S}_n}[\hat{\mathcal{R}(W,\mathcal{S}_n)}] \nonumber \\
     &+ \frac{1}{n\eta a_\eta} \sum_{i=1}^{n}  I\left(W ; Z_{i}\right),
\end{align}
which completes the proof.
\end{proof}

\section{Proof of Theorem~\ref{thm:eta-c}}\label{proof:thm_eta-c}
\begin{proof}
Firstly we rewrite excess risk and empirical excess risk by:
    \begin{align}
        \mathcal{R}(w) &= \mathbb{E}_{Z\sim \mu}[\ell(w,Z)] - \mathbb{E}_{Z\sim \mu}[\ell(w^*,Z)] \nonumber \\
        &= \frac{1}{n}\sum_{i=1}^{n} \mathbb{E}_{Z_i\sim \mu}[\ell(w,Z_i)] - \mathbb{E}_{Z_i \sim \mu}[\ell(w^*,Z_i)] \nonumber \\
        &= \mathbb{E}_{\mathcal{S}_n}[\hat{\mathcal{R}}(w, \mathcal{S}_n)],
    \end{align}
    and 
    \begin{align}
        \hat{\mathcal{R}}(w, \mathcal{S}_n) &=  \hat{L}(w,\mathcal{S}_n) - \hat{L}(w^*,\mathcal{S}_n).
    \end{align}
Given any $\mathcal{S}_n$, the gap between the excess risk and empirical excess risk can be written as,
    \begin{align}
    \mathcal{R}(w) - \hat{\mathcal{R}}(w, \mathcal{S}_n) = \mathbb{E}_{\mathcal{S}_n}[\hat{\mathcal{R}}(w, \mathcal{S}_n)] - \hat{\mathcal{R}}(w, \mathcal{S}_n).
    \end{align}
We will bound the above quantity by taking the expectation w.r.t. $W$ learned from $\mathcal{S}_n$ by: 
    \begin{align}
        &\mathbb{E}_{W\mathcal{S}_n}[\mathcal{E}(W)] = \mathbb{E}_{W\mathcal{S}_n}[\mathcal{R}(W) - \hat{\mathcal{R}}(W, \mathcal{S}_n)] \\
        &=  \mathbb{E}_{P_W\otimes \mathcal{S}_n}[\hat{\mathcal{R}}(W, \mathcal{S}_n)] - \mathbb{E}_{W\mathcal{S}_n}[\hat{\mathcal{R}}(W, \mathcal{S}_n)] \\
        &= \frac{1}{n}\sum_{i=1}^n \mathbb{E}_{P_W \otimes \mu}[r(W,Z_i)] - \mathbb{E}_{WZ_i}[r(W,Z_i)].
    \end{align}
Recall that the variational representation of the KL divergence between two distributions $P$ and $Q$ defined over $\mathcal X$ is given as (see, e. g. \cite{boucheron2013concentration})
\begin{align}
D(P||Q)=\sup_{f}\{\Esub{P}{f(X)}-\log\Esub{Q}{e^{f(x)}} \}, 
\end{align}
where the supremum is taken over all measurable functions such that $\Esub{Q}{e^{f(x)}}$ exists. Let $f(w,z_i) = -\eta r(w,z_i)$, we have inequality as shown in~(\ref{eq:MI_KL}).
\begin{figure*}[!htb]
\normalsize
\begin{align}
    D(P_{WZ_i}\|P_{W}\otimes P_{Z_i}) &\geq \Esub{P_{WZ_i}}{-\eta r(W,Z_i)} - \log \mathbb{E}_{P_{W}\otimes \mu}[e^{-\eta(r(W,Z_i))}] \nonumber \\
    &= \Esub{P_{WZ_i}}{-\eta r(W,Z_i)} - \log \mathbb{E}_{P_{W}\otimes \mu}[e^{-\eta(r(W,Z_i) - \mathbb{E}_{P_{W}\otimes \mu}[r(W,Z_i)])  }] + \Esub{P_W \otimes \mu}{\eta r(W,Z_i)} \nonumber \\
    &= \eta\left(\Esub{P_W \otimes \mu}{r(W,Z_i)} - \Esub{P_{WZ_i}}{r(W,Z_i)}\right) - \log \mathbb{E}_{P_{W}\otimes \mu}[e^{\eta( \mathbb{E}_{P_{W}\otimes \mu}[r(W,Z_i)]  - r(W,Z_i))  }]. \label{eq:MI_KL}
\end{align}
\vspace*{4pt}
\hrulefill
\end{figure*}
Next we will upper bound the second term $\log \mathbb{E}_{P_{W}\otimes \mu}[e^{\eta( \mathbb{E}_{P_{W}\otimes \mu}[r(W,Z_i)]  - r(W,Z_i))  }]$ in R.H.S. using the expected $(\eta,c)$-central condition. From the $(\eta, c)$-central condition, we have,
\begin{align}
    &\log \mathbb{E}_{P_{W}\otimes \mu}[e^{\eta( \mathbb{E}_{P_{W}\otimes \mu}[r(W,Z_i)]  - r(W,Z_i))  }] \nonumber \\
    & \leq (1-c)\eta \mathbb{E}_{P_{W}\otimes \mu}[r(W,Z_i)].
\end{align}
Then we arrive at,
\begin{align}
    I(W;Z_i) &\geq \eta \left(\Esub{P_W \otimes \mu}{r(W,Z_i)} - \Esub{P_{WZ_i}}{r(W,Z_i)}\right) \nonumber  \\
    &\quad - (1-c)\eta \mathbb{E}_{P_{W}\otimes \mu}[r(W,Z_i)].
\end{align}
Divide $\eta$ on both side; we arrive at,
\begin{align}
    \frac{I(W;Z_i)}{\eta} 
    &\geq \Esub{P_W \otimes \mu}{r(W,Z_i)} - \Esub{P_{WZ_i}}{r(W,Z_i)} \nonumber \\
    &\quad - (1-c) \mathbb{E}_{P_{W}\otimes \mu}[r(W,Z_i)].
\end{align}
Rearrange the equation and yields,
\begin{align}
    c\Esub{P_W \otimes \mu}{r(W,Z_i)} \leq \Esub{P_{WZ_i}}{r(W,Z_i)} + \frac{I(W;Z_i)}{\eta}.
\end{align}
Therefore,
\begin{align}
    &\Esub{P_W \otimes \mu}{r(W,Z_i)} - \Esub{P_{WZ_i}}{r(W,Z_i)} \nonumber \\
    &\quad \leq \left(\frac{1}{c} - 1\right)\mathbb{E}_{P_{WZ_i}}[r(w,Z_i)] + \frac{I(W;Z_i)}{c\eta}.
\end{align}
Summing up every term for $Z_i$ and divide by $n$, we end up with,
\begin{align}
    &\Esub{P_W \otimes P_{\mathcal{S}_n}}{\hat{\mathcal{R}}(W,\mathcal{S}_n)} - \Esub{P_{W\mathcal{S}_n}}{\hat{\mathcal{R}}(W,\mathcal{S}_n)} \nonumber \\
    &\leq  (\frac{1}{c} - 1) \left({\mathbb{E}_{P_{W\mathcal{S}_n}}}[\hat{\mathcal{R}}(W,\mathcal{S}_n)]\right) + \frac{1}{n}\sum_{i=1}^{n}\frac{I(W;Z_i)}{c\eta}.
\end{align}
Finally, we complete the proof by,
\begin{align}
    \mathbb{E}_{W\mathcal{S}_n}[\mathcal{E}(W)] \leq &(\frac{1}{c} - 1) {\mathbb{E}_{P_{W\mathcal{S}_n}}}[\hat{\mathcal{R}}(W,\mathcal{S}_n)] \nonumber \\
    & + \frac{1}{n}\sum_{i=1}^{n}\frac{I(W;Z_i)}{c\eta}.
\end{align}
\end{proof}

\section{Proof of~Theorem~\ref{thm:eta-c-loss}}
\begin{proof}
Due to the Donsker-Varadhan representation, we have that for each $Z_i$:
\begin{align}
    I(W;Z_i) 
    &\geq -\mathbb{E}_{WZ_i}[\eta \ell(W, Z_i)] \nonumber \\
    &\qquad - \log \mathbb{E}_{P_W\otimes \mu}\left[e^{-\eta \ell(W,Z_i)} \right] \nonumber \\
    &\geq -\mathbb{E}_{WZ_i}[\eta \ell(W, Z_i)] \nonumber \\
    &\qquad + c\eta \mathbb{E}_{P_W \otimes \mu}[\ell(W,Z)].
\end{align}
The last inequality holds due  the $(\eta,c)$-central condition. By rearranging the equation, we arrive at the bound for the generalization error as:
\begin{align}
    &\mathbb{E}_{P_W \otimes \mu}[\ell(W,Z_i)] -\mathbb{E}_{WZ_i}[\ell(W, Z_i)] \nonumber \\
    & \leq \frac{I(W;Z_i)}{c\eta} + \frac{1-c}{c}\mathbb{E}_{WZ_i}[\ell(W,Z_i)]
\end{align}
which completes the proof by:
\begin{align}
    \mathbb{E}_{W\mathcal{S}_n}[\mathcal{E}(W,\mathcal{S}_n)] \leq & \frac{\sum_{i=1}^{n}I(W;Z_i)}{c\eta n} \nonumber \\
    & + \frac{1-c}{c}\mathbb{E}_{W\mathcal{S}_n}[\hat{L}(W,\mathcal{S}_n)].
\end{align}
\end{proof}

\section{Proof of~Corollary~\ref{coro:berstein}}\label{proof:corollary_bernstein}
\begin{proof}
We first present the expected Bernstein inequality which will be the key technical lemma for the fast rate bound.
\begin{lemma}[Expected Bernstein Inequality \cite{mhammedi2019pac,cesa2006prediction}] \label{lemma:exp_bern}
Let $U$ be a random variable bounded from below by $-b < 0 $ almost surely, and let $\kappa(x)=(e^x - x - 1) / x^{2} .$ For all $\eta >0$, we have
\begin{align}
 \log \mathbb{E}_{U}\left[e^{\eta(\mathbb{E}[U]-U}) \right] \leq \eta^2 c_{\eta} \cdot \mathbb{E}[U^{2}],
\end{align}
for all $c_{\eta} \geq  \kappa(\eta b)$.
\end{lemma}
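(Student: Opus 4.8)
The plan is to reduce the inequality to a one-sided pointwise bound on the exponential function, combined with the elementary estimate $\log t\le t-1$. First I would record the analytic fact that $\kappa(x)=(e^x-x-1)/x^2$, extended continuously by $\kappa(0)=\tfrac{1}{2}$, is nonnegative and nondecreasing on all of $\mathbb{R}$. Nonnegativity is immediate from $e^x\ge 1+x$. For monotonicity I would differentiate, $\kappa'(x)=\bigl((x-2)e^x+x+2\bigr)/x^3$, and set $g(x)=(x-2)e^x+x+2$; one checks $g(0)=g'(0)=0$ and $g''(x)=xe^x$, so $g'$ attains its global minimum value $0$ at $x=0$, whence $g\ge 0$ on $[0,\infty)$ and $g\le 0$ on $(-\infty,0]$, and in either case $\kappa'(x)=g(x)/x^3\ge 0$.

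Given this, for any real $x\le a$ the inequality $\kappa(x)\le\kappa(a)$ becomes, after multiplying through by $x^2\ge 0$, the one-sided bound $e^x\le 1+x+\kappa(a)x^2$. I would apply it with $x=-\eta U$ and $a=\eta b$: since $U\ge-b$ almost surely and $\eta>0$, we have $-\eta U\le\eta b$, so
\[
e^{-\eta U}\le 1-\eta U+\kappa(\eta b)\,\eta^2 U^2\qquad\text{almost surely},
\]
and taking expectations gives $\mathbb{E}[e^{-\eta U}]\le 1-\eta\,\mathbb{E}[U]+\kappa(\eta b)\,\eta^2\,\mathbb{E}[U^2]$.

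Finally I would combine this with $\log t\le t-1$ and the identity $\log\mathbb{E}\bigl[e^{\eta(\mathbb{E}[U]-U)}\bigr]=\eta\,\mathbb{E}[U]+\log\mathbb{E}[e^{-\eta U}]$ to obtain
\[
\log\mathbb{E}\bigl[e^{\eta(\mathbb{E}[U]-U)}\bigr]\le\eta\,\mathbb{E}[U]+\bigl(\mathbb{E}[e^{-\eta U}]-1\bigr)\le\kappa(\eta b)\,\eta^2\,\mathbb{E}[U^2],
\]
and since $c_\eta\ge\kappa(\eta b)$ and $\mathbb{E}[U^2]\ge 0$, the right-hand side is at most $\eta^2 c_\eta\,\mathbb{E}[U^2]$, which is the claim. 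The only step requiring genuine care is the monotonicity of $\kappa$ (equivalently, the sign analysis of $g$ above); the remaining steps are a routine Taylor-type expansion plus $\log t\le t-1$, and it is worth noting that the hypothesis that $U$ is bounded below, used together with $\eta>0$, is precisely what makes the one-sided exponential bound applicable here.
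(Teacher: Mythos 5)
Your proof is correct and follows essentially the same route as the paper's: bound $e^x\le 1+x+\kappa(\eta b)x^2$ pointwise for $x=-\eta U\le \eta b$ via monotonicity of $\kappa$, take expectations, and finish with $\log t\le t-1$. The only difference is that you actually verify the monotonicity of $\kappa$ by the sign analysis of $g(x)=(x-2)e^x+x+2$, whereas the paper simply cites this property (and, incidentally, misstates it as ``non-increasing''), so your write-up is if anything more self-contained.
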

\begin{proof}
The proof of the lemma follows from~\cite{cesa2006prediction} and \cite{mhammedi2019pac}. Firstly we define $Y = - U$ which is upper bounded by $b$, then using the property that $\frac{e^Y - Y -1}{Y^2}$ in non-increasing for $Y \in \mathbb{R}$, then we define $Z = \eta Y$ such that,
\begin{align}
   \frac{e^{Z} - Z -1}{Z^2} \leq \frac{e^{\eta b} - \eta b - 1}{\eta^2 b^2} = \kappa (\eta b).
\end{align}
Rearranging the inequality, we then arrive at,
\begin{align}
   e^{Z} - Z -1 \leq Z^2 \kappa (\eta b).
\end{align}
Taking the expectation on both sides, we have:
\begin{align}
    \mathbb{E}[e^{Z}] - 1 \leq \mathbb{E}[Z^2]\kappa (\eta b) + \mathbb{E}[Z]
\end{align}
and using the fact that $\log (x) \leq x - 1$ for any $x \in (0, +\infty)$, we have,
\begin{align}
    \log\mathbb{E}[e^{Z}] \leq \mathbb{E}[e^{Z}] - 1 \leq \mathbb{E}[Z^2]\kappa (\eta b) + \mathbb{E}[Z].
\end{align}
Then we arrive at:
\begin{align}
   \log\mathbb{E}[e^{Z}]  - \mathbb{E}[Z] \leq \mathbb{E}[Z^2] \kappa (\eta b).
\end{align}
By substituting $Z = \eta Y$, we have
\begin{align}
   \mathbb{E}[e^{\eta (Y - \mathbb{E}[Y])}] \leq e^{\eta^2 \mathbb{E}[Y^2] \kappa (\eta b)}.
\end{align}
Define $c_\eta \geq \kappa(\eta b)$, it yields that
\begin{align}
   \mathbb{E}[e^{\eta (Y - \mathbb{E}[Y])}] \leq e^{\eta^2 c_\eta \mathbb{E}[Y^2] }.
\end{align}
By substituting $Y = -U$, we finally have,
\begin{align}
   \mathbb{E}[e^{\eta (\mathbb{E}[U] - U)}] \leq e^{\eta^2 c_\eta \mathbb{E}[U^2] },
\end{align}
which completes the proof.
\end{proof}
Using the Bernstein condition and we also assume that $r(w,z_i)$ is lower bounded by $-b$ almost surely, we have for all $0< \eta < \frac{1}{b}$ and all $c > \frac{e^{\eta b} - \eta b - 1}{\eta^2b^2} > 0$, the following inequality holds with Lemma~\ref{lemma:exp_bern}:
 \begin{align}
     e^{\eta(\Esub{P_{W}\otimes \mu}{r(W,Z_i)} - r(W,Z_i))} \leq e^{\eta^2 c \Esub{P_{W}\otimes \mu}{r^2(W,Z_i)}}. \label{eq:rsquare}
 \end{align}
 Then following Proposition 5 in \cite{grunwald2021pac}, we have that for some $\beta \in [0,1]$, all $c>0$, $\eta < \frac{1}{2Bc}$ and  all $0 < \beta' \leq \beta$, we have
\begin{align}
    &\eta^2c \Esub{P_W \otimes \mu}{r^2(w,Z_i)} \nonumber \\
    &\quad \leq \left(\frac{1}{2} \wedge \beta'\right) \eta \left(\mathbb{E}_{P_W\otimes \mu}[r(w,Z_i)]\right) \nonumber \\
    &\qquad + (1-\beta') \cdot(2 B c \eta)^{\frac{1}{1-\beta'}}\eta.
\end{align}
Hence the equation~(\ref{eq:rsquare}) can be further bounded by
\begin{align}
    &e^{\eta(\Esub{P_{W}\otimes \mu}{r(W,Z_i)} - r(W,Z_i))} \nonumber \\
    &\leq e^{\left(\frac{1}{2} \wedge \beta'\right) \eta \left({\mathbb{E}_{P_W \otimes \mu}}[r(w,Z_i)]\right)+(1-\beta')(2 B c \eta)^{\frac{1}{1-\beta'}}\eta } 
    \label{bound:bernstein}
\end{align}
for $\eta < \min(\frac{1}{2B(e-2)}, \frac{1}{b})$. Here we can choose $c$ to be $\max_{\eta} \frac{e^{\eta b} - \eta b - 1}{\eta^2b^2} = e-2$ since the function $\frac{e^x - x - 1}{x^2}$ is non-decreasing in $[0,1]$. Furthermore, if $\beta' = 1$, we can rewrite~(\ref{bound:bernstein}) as,
\begin{align}
    e^{\eta(\Esub{P_{W}\otimes \mu}{r(W,Z_i)} - r(W,Z_i))} \leq e^{\frac{1}{2} \eta  \left({\mathbb{E}_{P_W \otimes \mu}}[r(w,Z_i)]\right)} 
\end{align}
which completes the proof.
\end{proof}

\section{Proof of~Corollary~\ref{coro:central}}
\begin{proof}
We first present the following Lemma for bounding the excess risk using the cumulant generating function.
\begin{lemma}[Generalized from Lemma 13 in \cite{grunwald2020fast}]\label{lemma:central}
Let $\bar{\eta}>0$. Assume that the expected $\eta$-strong central condition holds, and suppose further that the $(u, c)$-witness condition holds for $u>0$ and $c \in(0,1]$. Let $c_{u}:=\frac{1}{c} \frac{\eta' u+1}{1-\frac{\eta'}{\eta}} > 1$, then the following inequality holds:
\begin{equation}
\mathbb{E}_{P_W \otimes \mu}\left[r(W,Z)\right]  \leq - \frac{c_{u}}{\eta'}  \log \mathbb{E}_{ P_W \otimes \mu}\left[e^{-\eta' r(W,Z)}\right] . 
\end{equation}

\end{lemma}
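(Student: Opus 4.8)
The plan is to work throughout with the product law $P := P_W\otimes\mu$ and the single real-valued random variable $r := \ell(W,Z) - \ell(w^*,Z)$, noting that $\mathbb{E}_P[r] = \mathbb{E}_{P_W}[\mathcal{R}(W)] \ge 0$ since $w^*$ minimizes the expected loss, and fixing $0 < \eta' < \eta$. First I would peel off the logarithm with the elementary bound $\log t \le t-1$, which gives $-\log\mathbb{E}_P[e^{-\eta' r}] \ge 1 - \mathbb{E}_P[e^{-\eta' r}] = \mathbb{E}_P[1-e^{-\eta' r}]$ and reduces the claim to the linearized inequality $\eta'\,\mathbb{E}_P[r] \le c_u\,\mathbb{E}_P[1-e^{-\eta' r}]$.

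The core of the argument is a case analysis over the three regions $\{r<0\}$, $\{0\le r\le u\}$ and $\{r>u\}$, each handled by a different tool. On $\{0\le r\le u\}$ I would use the elementary inequality $\eta' x \le (1+\eta' u)(1-e^{-\eta' x})$ valid for $0\le x\le u$ (it follows from monotonicity of $t\mapsto t/(1-e^{-t})$ together with $e^{\eta' u}\ge 1+\eta' u$). On $\{r>u\}$ the term $1-e^{-\eta' r}$ is non-negative, while the contribution of this region to $\mathbb{E}_P[r]$ is kept small by the $(u,c)$-witness condition, which yields $\mathbb{E}_P[r\,\mathbf{1}_{r\le u}]\ge c\,\mathbb{E}_P[r]$ and hence $\mathbb{E}_P[r\,\mathbf{1}_{r>u}]\le(1-c)\mathbb{E}_P[r]$. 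The region $\{r<0\}$, where $1-e^{-\eta' r}$ is negative and unbounded below, is the delicate one: here I would compare the two exponents via the fact that $s\mapsto (1-e^{-sx})/s$ is non-increasing in $s$ for fixed $x<0$, giving $1-e^{-\eta' r}\ge \tfrac{\eta'}{\eta}(1-e^{-\eta r})$ on $\{r<0\}$, and then control $\mathbb{E}_P[(1-e^{-\eta r})\mathbf{1}_{r<0}]$ using the $\eta$-central condition $\mathbb{E}_P[e^{-\eta r}]\le 1$ (equivalently $\mathbb{E}_P[1-e^{-\eta r}]\ge 0$) together with $1-e^{-\eta r}\le\eta r$ on $\{r\ge 0\}$.

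Finally I would assemble the three estimates, eliminate the auxiliary quantities $\mathbb{E}_P[r\,\mathbf{1}_{0\le r\le u}]$ and $\mathbb{E}_P[r\,\mathbf{1}_{r\ge 0}]$ (both of which the witness condition bounds by $\tfrac1c\,\mathbb{E}_P[r\,\mathbf{1}_{0\le r\le u}]$) and collect constants. I expect the bookkeeping in this last step to be the main obstacle: the naive pointwise inequality $\eta' x \le c_u(1-e^{-\eta' x})$ is genuinely false both for large positive $x$ (which forces the use of the witness condition) and for small or very negative $x$ (which forces the use of the $\eta$-central condition with a strict gap $\eta'<\eta$), so the three regional bounds must be combined so that the cross terms cancel and the constant emerges as exactly $c_u = \tfrac1c\cdot\tfrac{\eta' u+1}{1-\eta'/\eta}$ — the factor $1/c$ coming from the witness step, the factor $\eta' u+1$ from the bounded-region step, and the denominator $1-\eta'/\eta$ from the exponent-comparison step, which is precisely where the strictness $\eta'<\eta$ is used.
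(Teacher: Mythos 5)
The first thing to note is that the paper does not actually prove Lemma~\ref{lemma:central}: in Appendix~\ref{proof:corollary_bernstein}'s neighbouring proof of Corollary~\ref{coro:central} it simply states that the argument is that of Appendix~C.1 of \cite{grunwald2020fast} with all expectations additionally taken over $P_W$. So you are attempting a self-contained proof of something the paper defers to the literature, which is a worthwhile goal; however, your outline has a genuine gap at exactly the step you label ``bookkeeping''. Each individual ingredient you list is correct (the reduction via $\log t\le t-1$; $\eta' x\le(1+\eta' u)(1-e^{-\eta' x})$ on $[0,u]$; $\mathbb{E}[r\mathbf{1}_{\{r>u\}}]\le(1-c)\mathbb{E}[r]$ and $\mathbb{E}[r\mathbf{1}_{\{r\ge0\}}]\le\frac{1}{c}\mathbb{E}[r\mathbf{1}_{\{0\le r\le u\}}]$ from the witness condition; and $1-e^{-\eta' x}\ge\frac{\eta'}{\eta}(1-e^{-\eta x})$ for $x<0$), but these estimates cannot be assembled into the claim: the cancellation you are hoping for does not exist.

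Concretely, write $\mathbb{E}$ for $\mathbb{E}_{P_W\otimes\mu}$ and $M:=\mathbb{E}[r\,\mathbf{1}_{\{0\le r\le u\}}]$. Your treatment of the negative region gives
\begin{align*}
\mathbb{E}\bigl[(1-e^{-\eta' r})\mathbf{1}_{\{r<0\}}\bigr]
\;\ge\; \frac{\eta'}{\eta}\,\mathbb{E}\bigl[(1-e^{-\eta r})\mathbf{1}_{\{r<0\}}\bigr]
\;\ge\; -\frac{\eta'}{\eta}\,\mathbb{E}\bigl[(1-e^{-\eta r})\mathbf{1}_{\{r\ge 0\}}\bigr]
\;\ge\; -\eta'\,\mathbb{E}\bigl[r\,\mathbf{1}_{\{r\ge 0\}}\bigr],
\end{align*}
i.e.\ the factor $\eta'/\eta$ gained from the exponent comparison is surrendered in full by the factor $\eta$ in $1-e^{-\eta r}\le \eta r$. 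Adding your bounds for the other two regions, the strongest conclusion available from your three estimates is
\begin{align*}
\mathbb{E}\bigl[1-e^{-\eta' r}\bigr]\;\ge\;\frac{\eta'}{1+\eta' u}\,M\;-\;\eta'\,\mathbb{E}\bigl[r\,\mathbf{1}_{\{r\ge 0\}}\bigr]
\;\;\text{with}\;\;
\frac{\eta'}{1+\eta' u}\,M-\eta'\,\mathbb{E}\bigl[r\,\mathbf{1}_{\{r\ge 0\}}\bigr]\le \eta' M\Bigl(\frac{1}{1+\eta' u}-1\Bigr)\le 0,
\end{align*}
since $\mathbb{E}[r\mathbf{1}_{\{r\ge0\}}]\ge M$; invoking the witness bound $\mathbb{E}[r\mathbf{1}_{\{r\ge0\}}]\le M/c$ only worsens the deficit because $1/c\ge 1>1/(1+\eta' u)$. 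Thus the assembled lower bound is weaker than the trivial fact $\mathbb{E}[1-e^{-\eta' r}]\ge0$ (already implied by the central condition) and can never dominate the strictly positive target $\frac{\eta' c(1-\eta'/\eta)}{1+\eta' u}\,\mathbb{E}[r]$ whenever $\mathbb{E}[r]>0$; no regrouping of these particular inequalities produces the denominator $1-\eta'/\eta$. The missing idea is a strictly sharper control of the far-negative tail, in which lowering the exponent from $\eta$ to $\eta'$ buys a power rather than a linear factor --- e.g.\ $e^{-\eta' x}-1\le\bigl(e^{-\eta x}-1\bigr)^{\eta'/\eta}$ for $x<0$, or $e^{-\eta' x}\le\frac{\eta'}{\eta}e^{-\eta x}+\bigl(1-\frac{\eta'}{\eta}\bigr)$, exploited together with the probability mass via H\"older/Young, i.e.\ convexity of the cumulant generating function in the exponent. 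An interpolation ingredient of this type (the substance of the cited Appendix~C.1 argument in \cite{grunwald2020fast}) is what actually generates the $1-\eta'/\eta$ term, and without it your outline cannot be completed.
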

The proof of the above lemma is similar to the proof in Appendix C.1 (page 48) in \cite{grunwald2020fast} by taking the expectation over the hypothesis distribution $P_W$, which is omitted here. Now with~Lemma~\ref{lemma:central}, we have that for any $0 < \eta' < \eta$,
\begin{align}
    &\log \mathbb{E}_{ P_W \otimes \mu}\left[e^{-\eta' \left( r(W,Z) - \mathbb{E}_{P_W \otimes \mu}[r(W,Z)] \right)}\right] \nonumber \\
    &\quad \leq -\frac{\eta'}{c_u}\mathbb{E}_{P_W \otimes \mu}\left[r(W,Z)\right] \nonumber \\
    &\qquad + \eta'\mathbb{E}_{P_W \otimes \mu}\left[r(W,Z)\right] \nonumber \\
    &\quad = \left(1-\frac{1}{c_u}\right) \eta' \mathbb{E}_{P_W \otimes \mu}\left[r(W,Z)\right].
\end{align}
Therefore, the central condition with the witness condition implies the expected $(\eta', \frac{c-\frac{c\eta'}{\eta}}{\eta' u +1})$-central condition for any $0 < \eta' < \eta$.
\end{proof}

\section{Proof of Corollary \ref{coro:subexponential}} \label{proof:coro_subexponential_subgamma}

\begin{proof}[Proof of sub-Exponential Distribution]\label{proof:subexponential}
Recall that $X$ is a $\left(\nu^2, \alpha\right)$-sub-exponential random variable with parameters $\nu, \alpha>0$ if:
\begin{align}
\log\mathbb{E} \left[ e^{\lambda (X -\mathbb{E}[X])}\right] \leq \frac{\lambda^2 \nu^2}{2} , \quad \forall \lambda:|\lambda|<\frac{1}{\alpha}. 
\end{align}
Then rewrite the above inequality and replacing $\lambda = -\lambda$,  we have:
\begin{align}
    \log\mathbb{E} \left[ e^{-\lambda X}\right]
    &\leq -\lambda\left(\mathbb{E}[X] -\frac{\lambda \nu^2}{2}\right) \nonumber \\
    &= -\lambda \mathbb{E}[X]\left(1 - \frac{\lambda \nu^2}{2\mathbb{E}[X]}\right), 
\end{align}
for any $0 < \lambda<\frac{1}{\alpha}$. Let $c = \frac{1}{2}$, $\lambda = \min(\frac{1}{\alpha}, \frac{\nu^2}{\mathbb{E}_{P_W\otimes \mu}[r(W,Z)]})$ and we can complete the proof since $\mathbb{E}_{P_W\otimes \mu}[r(W,Z)] > 0$ from the definition.
\end{proof}

\begin{proof}[Proof of sub-Gamma distribution]
Recall that $-X$ is a ($\nu^2,\alpha$)-sub-Gamma random variable with variance parameter $\nu^2$ and scale parameter $\alpha$ if:
\begin{align}
\log\mathbb{E} \left[ e^{\lambda (\mathbb{E}[X] - X)}\right]\leq \frac{\nu^2 \lambda^2}{2(1- \alpha \lambda)}, \quad  \forall \lambda:  0<\lambda<\frac{1}{\alpha}.
\end{align}
Rewrite the above equation and we have:
\begin{align}
& \log\mathbb{E} \left[ e^{-\lambda X}\right] \leq \frac{\nu^2 \lambda^2}{2(1- \alpha \lambda)} - \lambda\mathbb{E}[X] \nonumber \\
& = -\lambda \mathbb{E}[X]\left(1 - \frac{\nu^2\lambda}{2(1-\alpha \lambda)\mathbb{E}[X]}\right), \nonumber \\
& \forall \lambda:  0<\lambda<\frac{1}{\alpha}.
\end{align}
Let $c = \frac{1}{2}$, $\lambda = \min(\frac{1}{\alpha}, \frac{\mathbb{E}_{P_W\otimes \mu}[r(W,Z)]}{\nu^2+\alpha \mathbb{E}_{P_W\otimes \mu}[r(W,Z)]})$ and we can complete the proof.
\end{proof}

\section{Proof of Corollary~\ref{coro:rerm}}\label{proof:rerm}
\begin{proof}
We first define,
\begin{align}
    \hat{{L}}_{\textup{reg}}(w,\mathcal{S}_n) := \hat{{L}}(w,\mathcal{S}_n) + \frac{\lambda}{n}g(w).
\end{align}
Based on Theorem~\ref{thm:eta-c}, we can bound the excess risk for $W_{\sf{RERM}}$ by (\ref{eq:start})-(\ref{eq:end}),
\begin{figure*}[!htb]
\normalsize
\begin{align}
     \mathbb{E}_{W}[\mathcal{R}(W_{\sf{RERM}})] & \leq  \frac{1}{c} \mathbb{E}_{P_{W\mathcal{S}_n}}[\hat{\mathcal{R}}\left(W_{\sf{RERM}}, \mathcal{S}_{n} \right)]  + \frac{1}{c\eta n} \sum_{i=1}^{n} I(W_{\sf{RERM}};Z_i) \label{eq:start}\\
     &= \frac{1}{c} \left( \mathbb{E}_{P_{W\mathcal{S}_n}}[\hat{L}\left(W_{\sf{RERM}}, \mathcal{S}_{n} \right) - \hat{L}\left(w^*, \mathcal{S}_{n} \right)] \right)   + \frac{1}{c\eta n} \sum_{i=1}^{n} I(W_{\sf{RERM}};Z_i) \\
     &\overset{(a)}{\leq} \frac{1}{c} \left( \mathbb{E}_{P_{W\mathcal{S}_n}}[\hat{{L}}_{\textup{reg}} \left(W_{\sf{RERM}}, \mathcal{S}_{n} \right)] - \mathbb{E}_{P_{W\mathcal{S}_n}}[\hat{{L}}_{\textup{reg}}\left(w^*, \mathcal{S}_{n}\right)] \right)  + \frac{\lambda B}{cn}+ \frac{1}{c\eta n} \sum_{i=1}^{n} I(W_{\sf{RERM}};Z_i) \\
     & =  \frac{1}{c} \mathbb{E}_{P_{W\mathcal{S}_n}}[\hat{\mathcal{R}}_{\textup{reg}}\left(W_{\sf{RERM}}, \mathcal{S}_{n} \right)] + \frac{\lambda B}{cn} + \frac{1}{c\eta n} \sum_{i=1}^{n} I(W_{\sf{RERM}};Z_i) \\
     &\overset{(b)}{\leq}  \frac{\lambda B}{cn} + \frac{1}{c\eta n} \sum_{i=1}^{n} I(W_{\sf{RERM}};Z_i). \label{eq:end}
 \end{align}
 \vspace*{4pt}
\hrulefill
\end{figure*}
where (a) follows since $|g(w^*) - g(W_{\sf{RERM}}))| \leq B$ the expected empirical risk is negative for $W_{\ERM}$ and (b) holds due to that $W_{\sf{RERM}}$ is the minimizer of the regularized loss. 
\end{proof}

\section{Proof of Theorem~\ref{lemma:intermediate}}\label{proof:intermediate}
\begin{proof}
We will build upon~(\ref{eq:MI_KL}). With the $(v,c)$-central condition, for any $\epsilon \geq 0$ and any $ 0 < \eta \leq v(\epsilon)$, the Jensen's inequality yields (\ref{eq:bound-v-central}).
\begin{figure*}[!htb]
\normalsize
\begin{align}
    \log \mathbb{E}_{P_{W}\otimes \mu}[e^{\eta( \mathbb{E}_{P_{W}\otimes \mu}[r(W,Z_i)]  - r(W,Z_i))  }]   &=  \log \mathbb{E}_{P_{W}\otimes \mu}[e^{\frac{\eta}{v(\epsilon)}v(\epsilon)( \mathbb{E}_{P_{W}\otimes \mu}[r(W,Z_i)]  - r(W,Z_i))  }] \\
    &\leq \log \left( \mathbb{E}_{P_{W}\otimes \mu}[e^{v(\epsilon)( \mathbb{E}_{P_{W}\otimes \mu}[r(W,Z_i)]  - r(W,Z_i))  }] \right)^{\frac{\eta}{v(\epsilon)}} \\
    &\leq \frac{\eta}{v(\epsilon)} \left( (1-c)v(\epsilon) \mathbb{E}_{P_{W}\otimes \mu}[r(W,Z_i)] + v(\epsilon) \epsilon \right) \\
    &= \eta(1-c) \mathbb{E}_{P_{W}\otimes \mu}[r(W,Z_i)] + \eta\epsilon.
    \label{eq:bound-v-central}
\end{align}
\vspace*{4pt}
\hrulefill
\end{figure*}
Substitute (\ref{eq:bound-v-central}) into (\ref{eq:MI_KL}), we arrive at,
\begin{align}
    I(W;Z_i) \geq &\eta \left(\Esub{P_W \otimes \mu}{r(W,Z_i)} - \Esub{P_{WZ_i}}{r(W,Z_i)}\right) \nonumber \\
    & - (1-c)\eta \mathbb{E}_{P_{W}\otimes \mu}[r(W,Z_i)] - \eta \epsilon.
\end{align}
Dividing $\eta$ on both sides, we arrive at,
\begin{align}
    \frac{I(W;Z_i)}{\eta} \geq &\Esub{P_W \otimes \mu}{r(W,Z_i)} - \Esub{P_{WZ_i}}{r(W,Z_i)} \nonumber \\
    & - (1-c) \mathbb{E}_{P_{W}\otimes \mu}[r(W,Z_i)] - \epsilon.
\end{align}
Rearranging the equation and it yields,
\begin{align}
    c\Esub{P_W \otimes \mu}{r(W,Z_i)} \leq & \Esub{P_{WZ_i}}{r(W,Z_i)} \nonumber \\
    & + \frac{I(W;Z_i)}{\eta} +\epsilon.
\end{align}
Therefore,
\begin{align}
    & \Esub{P_W \otimes \mu}{r(W,Z_i)} - \Esub{P_{WZ_i}}{r(W,Z_i)} \nonumber \\
    & \leq  (\frac{1}{c} - 1)\left({\mathbb{E}_{P_{WZ_i}}}[r(w,Z_i)]\right) + \frac{I(W;Z_i)}{c\eta} + \frac{\epsilon}{c}.
\end{align}
Summing up every term for $Z_i$ and dividing by $n$, we end up with,
\begin{align}
    &\Esub{P_W \otimes P_{\mathcal{S}_n}}{\hat{\mathcal{R}}(W,\mathcal{S}_n)} \nonumber  - \Esub{P_{W\mathcal{S}_n}}{\hat{\mathcal{R}}(W,\mathcal{S}_n)} \\
    &\leq \left(\frac{1}{c} - 1\right) \mathbb{E}_{P_{W\mathcal{S}_n}}[\hat{\mathcal{R}}(W,\mathcal{S}_n)] \nonumber \\
    &\quad + \frac{1}{n}\sum_{i=1}^{n}\left( \frac{I(W;Z_i)}{c\eta} + \frac{\epsilon}{c}\right).
\end{align}
Finally, we arrive at the following inequality:
\begin{align}
    \mathbb{E}_{W\mathcal{S}_n}[\mathcal{E}(W)] \leq &(\frac{1}{c} - 1) \left({\mathbb{E}_{P_{W\mathcal{S}_n}}}[\hat{\mathcal{R}}(W,\mathcal{S}_n)]\right) \nonumber  \\
    & + \frac{1}{n}\sum_{i=1}^{n}\left(\frac{I(W;Z_i)}{c\eta} + \frac{\epsilon}{c} \right).
\end{align}
In particular, if $v(\epsilon) = \epsilon^{1-\beta}$ for some $\beta \in [0,1]$, then by choosing $\eta = v(\epsilon)$ and $\frac{I(W;Z_i)}{c\eta} + \frac{\epsilon}{c}$ is optimized when $\epsilon = I(W;Z_i)^{\frac{1}{2-\beta}}$ and the bound becomes,
\begin{align}
    \mathbb{E}_{W\mathcal{S}_n}[\mathcal{E}(W)] \leq & (\frac{1}{c} - 1) \left({\mathbb{E}_{P_{W\mathcal{S}_n}}}[\hat{\mathcal{R}}(W,\mathcal{S}_n)]\right) \nonumber \\
    & + \frac{2}{nc}\sum_{i=1}^{n} I(W;Z_i)^{\frac{1}{2-\beta}},
\end{align}
which completes the proof.
\end{proof}

\section{Calculation Details of Example~\ref{sec:example}}\label{apd:example2}
In this section, we present the calculation details of the Gaussian mean estimation case. Let us consider the 1D-Gaussian mean estimation problem. Let $\ell(w,z_i) = (w-z_i)^2$; each sample is drawn from some Gaussian distribution, i.e., $Z_i \sim \mathcal{N}(\mu, \sigma_N^2)$. Then the ERM algorithm arrives at,
\begin{equation}
 W_{\ERM} = \frac{1}{n} \sum_{i=1}^{n} Z_i \sim \mathcal{N}(\mu, \frac{\sigma_N^2}{n}).
\end{equation}
It can be easily calculated that the optimal $w^*$ satisfies:
\begin{align}
    w^* & = \argmin \Esub{Z}{\ell(w,Z)} \nonumber \\
    & =\argmin \Esub{Z}{(w-Z)^2} \nonumber \\
    & = \mu.
\end{align}
Also, it can be calculated that the expected excess risk is,
\begin{align}
    \Esub{W}{\mathcal{R}(W_\ERM)} 
    &= \mathbb{E}_{P_W \otimes \mu}[\ell(W_\ERM,Z)] \nonumber \\
    &\qquad - \mathbb{E}_{Z}[\ell(w^*,Z)] \\
    &= \mathbb{E}_{P_W \otimes \mu}[(W_\ERM - Z)^2] \nonumber \\
    &\qquad - \mathbb{E}_{Z}[(\mu - Z)^2] \\
    &= \mu^2 + \frac{\sigma_N^2}{n} + \mu^2 + \sigma_N^2 - 2\mu^2 \nonumber \\
    &\qquad - \mu^2 - \mu^2 -\sigma_N^2 + 2\mu^2 \\
    &= \frac{\sigma_N^2}{n}.
\end{align}
The corresponding empirical excess risk is given by,
\begin{align}
    &\Esub{W\mathcal{S}_n}{\hat{\mathcal{R}}(W_\ERM,\mathcal{S}_n)} \nonumber \\
    & = \Esub{W_\ERM \mathcal{S}_n}{\hat L(W_\ERM,\mathcal{S}_n) - \hat{L}(w^*,\mathcal{S}_n)} \\
    &= \E{\frac{1}{n}\sum_{i=1}^{n}(W-Z_i)^2}  - \E{\frac{1}{n}\sum_{i=1}^{n}(\mu - Z_i)^2} \\
    &= \mu^2+ \frac{\sigma_N^2}{n} - \mu^2 - 2\mu^2 - \frac{2}{n} \sigma_N^2 + 2\mu^2 \\
    &= -\frac{\sigma_N^2}{n}.
\end{align}
Then it yields the expected generalization error as,
\begin{align}
    &\Esub{W\mathcal{S}_n}{\mathcal{E}(W_\ERM, \mathcal{S}_n)} \nonumber \\
    &= \Esub{W\mathcal{S}_n}{\mathcal{R}(W_\ERM)-\hat {\mathcal{R}}(W_\ERM,\mathcal{S}_n)} \\
    &= \frac{\sigma_N^2}{n} - (- \frac{\sigma_N^2}{n}) \\
    &= \frac{2\sigma_N^2}{n}.
\end{align}
The expected loss can be calculated as,
\begin{align}
    \mathbb{E}_{P_W \otimes \mu}[\ell(W_{\ERM},Z)] = \frac{n+1}{n}\sigma_N^2 := \sigma_W^2.
\end{align}
Let us verify the moment-generating functions for the squared loss. Since $\ell(W_\ERM,Z)$ is $\sigma^2_W\chi^2_1$ distributed, 
\begin{align}
    \log \mathbb{E}_{P_W \otimes \mu}[e^{\eta (W_\ERM - Z)^2}] = -\frac{1}{2}\log(1- 2\sigma^2_W \eta).
\end{align}
Hence, 
\begin{align}
    &\log \mathbb{E}_{P_W \otimes \mu}[e^{\eta \left((W_\ERM - Z)^2 - \mathbb{E}[(W_\ERM - Z)^2] \right)}] \nonumber \\
    & = -\frac{1}{2}\log(1- 2\sigma^2_W \eta) - \sigma^2_W\eta.
\end{align}
It is easy to prove that for any $x\leq 0$, 
\begin{align}
    -\frac{1}{2}\log(1-2x) - x \leq x^2,
\end{align}
which yields,
\begin{align}
    & \log \mathbb{E}_{P_W \otimes \mu}[e^{\eta \left((W_\ERM - Z)^2 - \mathbb{E}[(W_\ERM - Z)^2] \right)}] \nonumber \\
    & = -\frac{1}{2}\log(1- 2\sigma^2_W \eta) - \sigma^2_W\eta \\
    & \leq \sigma_W^4 \eta^2 
\end{align}
for any $\eta < 0$. Therefore, $\ell(W,Z)$ is $\sqrt{2\sigma^4_W}$-sub-Gaussian and we can only achieve the slow rate of $O(\sqrt{1/n})$ with the bound by \cite{bu2020tightening}.

\section{Calculation Details of Example~\ref{example:sub-Gaussian-2}}\label{apd:example3}
Now we introduce $w^*$ in the sequel as a comparison. For a given $\eta$ and $w$, we calculate the moment generating function for the term $ r(w,Z) =(w - Z)^2 - (w^* - Z)^2$ as follows. Using the known results for moment-generating function of Gaussian random variables, we have
\begin{align}
 &\mathbb{E}_{\mu}[e^{\eta r(w,Z)}] \nonumber \\
 & = \frac{1}{\sqrt{2\pi \sigma_N^2}} \int  e^{-\frac{(z-\mu)^2}{2\sigma_N^2}}  e^{\eta ((w - z)^2 - (w^* - z)^2)}  dz \\
 &=\operatorname{exp}\left( (2\eta^2\sigma_N^2 + \eta)(w-\mu)^2 \right).
\end{align}
Taking expectation over $W$ w.r.t. the ERM solution, and using the known results for the moment-generating function of the chi-square random variables,  we have:
\begin{align}
 &\mathbb{E}_{P_W \otimes \mu}[e^{\eta r(W,Z)}] \nonumber \\
 & = \Esub{W}{\operatorname{exp}\left( (2\eta^2\sigma_N^2 + \eta) (W-\mu)^2 \right)}  \\  &= \sqrt{\frac{n}{ n- (4\eta^2\sigma_N^4 + 2 \eta\sigma_N^2)}}.
\end{align}
Therefore for any $\eta \in \mathbb{R}$ and any $n >  \max\left\{\frac{(4\eta^2\sigma^4_N+\eta\sigma^2_N)(2\eta^2\sigma^4_N+\eta\sigma^2_N)}{\eta^2\sigma^4_N}, 4\eta^2\sigma_N^4 + 2 \eta\sigma_N^2\right\} $, we arrive at, 
\begin{align}
    &\log  \mathbb{E}_{P_W \otimes \mu}[e^{\eta (r(W,Z) - \mathbb{E}[r(W,Z)]}] \nonumber \\
    &= \frac{1}{2} \log \frac{n}{ n- (4\eta^2\sigma_N^4 + 2 \eta\sigma_N^2)} - \frac{\eta\sigma_N^2}{n} \\
    & \leq \frac{1}{2} \frac{4\eta^2\sigma_N^4 + 2 \eta\sigma_N^2}{n - 4\eta^2\sigma^4_N - 2\eta\sigma^2_N} - \frac{\eta\sigma_N^2}{n} \\
    & \leq \frac{4\eta^2\sigma_N^4}{n}. 
\end{align}
which is of the order $O(\frac{1}{n})$ and we used the fact that $\frac{a+b}{n - 2a - 2b} \leq \frac{2a + b}{n}$ for $n > \max\left\{\frac{(2a+b)(2a+2b)}{a}, 2a + 2b\right\}$ with $a > 0$. The mutual information can be calculated as,
\begin{align}
    I(W_\ERM;Z_i) &= h(W_\ERM) - h(W_\ERM|Z_i)\\
    &= \frac{1}{2}\log \frac{2\pi e \sigma_N^2}{n} - \frac{1}{2}\log\frac{2\pi e (n-1)\sigma_N^2 }{n^2} \\
    & = \frac{1}{2}\log \frac{n}{n-1} \\
    & = O(\frac{1}{n}).
\end{align}
We then summarize all the quantities of interest in Table~\ref{tab:calculation_details} for references.
\begin{table*}[!ht]
    \centering
    \begin{tabular}{c|c}
    \hline 
     Quantity   &  Values / Distribution \\
     \hline 
     $\mathcal{S}_n$   &  $\{Z_1,Z_2,\cdots,Z_n\}$  \\
     $Z_i$     &     $\mathcal{N}(\mu,\sigma_N^2)$ \\
      $\ell(w,z)$   &  $(w-z)^2$ \\
      $\hat{L}(w,\mathcal{S}_n)$  & $\frac{1}{n}\sum_{i=1}^{n}\ell(w,z_i)$  \\ 
      $L(w)$  &  $\Esub{Z}{\ell(w,Z)}$ \\
      $W_\ERM$  & $\mathcal{N}(\mu,\frac{\sigma_N^2}{n})$    \\
      $w^*$   & $\mu$    \\
      $r(w,z)$  & $(w-z)^2 - (w^* - z)^2$  \\
      $\mathcal{R}(w)$  & $L(w) - L(w^*)$   \\ 
      $\hat{\mathcal{R}}(w,\mathcal{S}_n)$  & $\hat{L}(w) - \hat{L}(w^*)$    \\
      $\mathcal{E}(w,\mathcal{S}_n)$  & $L(w) - \frac{1}{n}\sum_{i=1}^{n}\ell(w,z_i)$   \\
      $M_{Z}[r(w,Z)]$ & $-\frac{1}{\eta} \log \mathbb{E}_Z\left[e^{-\eta r(w,Z)}\right]$   \\
      $M_{P_W\otimes \mu}[r(w,Z)]$ & $-\frac{1}{\eta} \log \mathbb{E}_{P_W\otimes \mu}\left[e^{-\eta r(W,Z)}\right]$   \\
      \hline 
      \hline
      $\Esub{W\mathcal{S}_n}{\hat{\mathcal{R}}(W_\ERM,\mathcal{S}_n)}$   &  $-\frac{\sigma_N^2}{n}$   \\
      $\Esub{W}{\mathcal{R}(W_\ERM)}/\mathbb{E}_{P_W\otimes \mu}[r(W,Z)]$   &  $\frac{\sigma_N^2}{n}$   \\
      $\Esub{W\mathcal{S}_n}{\mathcal{E}(W_\ERM, \mathcal{S}_n)}$ & $\frac{2\sigma_N^2}{n}$   \\
      $\mathcal{R}(w)/\mathbb{E}_{Z}[r(w,Z)]$  &  $(w-\mu)^2$    \\
      $\mathbb{E}_{Z}[e^{\eta r(w,Z)}]$ & $\operatorname{exp}\left( (2\eta^2\sigma_N^2 +\eta )(w-\mu)^2 \right)$   \\
      $\mathbb{E}_{P_W\otimes \mu}[e^{\eta r(W,Z)}]$ & $\sqrt{\frac{n}{ n- (4\eta^2\sigma_N^4 + 2 \eta\sigma_N^2)}}$   \\
      $\mathbb{E}_{Z}[e^{-\eta r(w,Z)}]$ & $\operatorname{exp}\left( (2\eta^2\sigma_N^2 - \eta )(w-\mu)^2 \right)$ \\
      $\mathbb{E}_{P_W\otimes \mu}[e^{-\eta r(W,Z)}]$ & $\sqrt{\frac{n}{ n- (4\eta^2\sigma_N^4 - 2 \eta\sigma_N^2)}}$ \\
     $M_{Z}[r(w,Z)]$ & $(1-2\eta\sigma_N^2)(w-\mu)^2$  \\
      $M_{P_W\otimes \mu}[r(w,Z)]$ & $(1-2\eta\sigma_N^2)\frac{\sigma_N^2}{n}$ \\
      $\mathbb{E}_{Z}[r(w,Z)^2]$ & $(w - \mu)^4 + 4(w -\mu)^2\sigma_N^2$ \\
      $\mathbb{E}_{P_W\otimes \mu}[r(W,Z)^2]$ & $\frac{3\sigma_N^4}{n^2} + \frac{4\sigma_N^4}{n}$  \\
      $I(W;Z_i)$ & $\frac{1}{2}\log\frac{n}{n-1}$ \\
      \hline 
    \end{tabular}
    \caption{Summarized Quantities} \label{tab:calculation_details}
\end{table*}

\section{Proof of Lemma \ref{lemma:tightness}}\label{proof:lemma_tightness}
We are going to prove the lower bound as the upper bound is from the variational representations:
\begin{align*}
    \frac{n-1}{n} I(W;Z_i) 
    &\leq \mathbb{E}_{WZ_i}[-\eta r(W,Z_i)] \\
    &\qquad - \log \mathbb{E}_{P_W\otimes \mu}[e^{-\eta r(W,Z_i)}] \\
    &\leq I(W;Z_i).
\end{align*}
The proof for Lemma \ref{lemma:tightness} simply follows the calculation of the two terms. By setting $\eta = \frac{1}{2\sigma^2_N}$, the first term can be calculated as:
\begin{align}
    \mathbb{E}_{WZ_i}[-\eta r(W,Z_i)] = \frac{1}{2n},
\end{align}
while the second term could be calculated as:
\begin{align}
    \log \mathbb{E}_{P_W\otimes \mu}[e^{-\eta r(W,Z_i)}] = 0.
\end{align}
Hence we have:
\begin{align}
    &\frac{n-1}{n}I(W;Z_i)=\frac{n-1}{2n} \log\frac{n}{n-1} \\
    & \leq \frac{1}{2}\log\frac{n}{n-1} \\
    &= I(W;Z_i).
\end{align}

\section{Condition Checking} \label{apd:condiiton_checking}
From Table~\ref{tab:calculation_details}, we can check to conclude that for most fast rate conditions such as Berstein's condition, central condition, and sub-Gaussian condition, the results will hold in expectation, but this is not the case for any $w \in \mathcal{W}$. To see this, we will check whether the condition is in succession.
\begin{itemize}
    \item When checking $\eta$-central condition, 
    \begin{itemize}
    \item For any $w$, 
        \begin{align}
        & \mathbb{E}_{\mu}[e^{-\eta r(w,Z)}] \nonumber = \operatorname{exp}\left( (2\eta^2\sigma_N^2 -\eta )(w-\mu)^2 \right) \nonumber \\
        &\leq 1,
        \end{align}
        then we require $0 < \eta \leq \frac{1}{2\sigma_N^2}$.
        \item For $W_\ERM$,
        \begin{align}
        &\mathbb{E}_{P_W\otimes \mu}[e^{-\eta r(W,Z)}]  = \sqrt{\frac{n}{ n- (4\eta^2\sigma_N^4 - 2 \eta\sigma_N^2)}} \nonumber \\
        &\leq 1 .
        \end{align}
        then we require $0 < \eta \leq \frac{1}{2\sigma_N^2}$.
    \end{itemize}
    \item When checking Bernstein's condition,
    \begin{itemize}
        \item For any $w\in \mathcal{W}$,
        \begin{align}
          &\mathbb{E}_{\mu}[r(w,Z)^2] =  (w - \mu)^4 + 4(w -\mu)^2\sigma_N^2 \nonumber \\
          & \leq B(\mathbb{E}_{ Z}[r(w,Z)])^{\beta} =  B(w-\mu)^{2\beta}.
        \end{align}
        Apparently, this does not hold for all $w \in \mathbb{R}$ when $\beta \in [0,1]$.
        \item For $W_\ERM$,
        \begin{align}
        &\mathbb{E}_{P_W\otimes \mu}[r(W_\ERM,Z)^2]  = \frac{3\sigma_N^4}{n^2}+ \frac{4\sigma_N^4}{n} \nonumber \\
        &\leq B(\mathbb{E}_{P_W\otimes \mu}[r(W_\ERM,Z)])^{\beta} \nonumber \\
        &=  B(\frac{\sigma_N^2}{n})^{\beta}.
        \end{align}
        This holds for $\beta = 1$ and $B = 7\sigma_N^2$.
    \end{itemize}
    \item When checking witness condition,
    \begin{itemize}
    \item For any $w \in \mathcal{W}$, we require that,
    \begin{align}
        &\mathbb{E}_{Z}\left[\left(r(w,Z) \right) \cdot \mathbf{1}_{\left\{r(w,Z) \leq u \right\}}\right] \nonumber \\
        &\geq c \mathbb{E}_{Z}\left[r(w,Z) \right] = c(w-\mu)^2.
    \end{align}
    There does not exist finite $c$ and $u$ that satisfy the above inequality, so the witness condition does not hold for all $w \in \mathcal{W}$.
    \item For $W_\ERM$,
    \begin{align}
        & \mathbb{E}_{P_W\otimes \mu}\left[\left(r(W_\ERM,Z) \right) \cdot \mathbf{1}_{\left\{r(W_\ERM,Z) \leq u \right\}}\right] \nonumber \\
        & \geq c \mathbb{E}_{P_W\otimes \mu}\left[r(W,Z) \right] = \frac{c\sigma_N^2}{n}.
    \end{align}
    In this case, with high probability $r(W,Z)$ approaches zero, and there exists $u$ and $c$ satisfying the above inequality.
    \end{itemize}
    \item When checking the sub-Gaussian condition,
    \begin{itemize}
    \item For $W_\ERM$, when $0 < \eta \leq \frac{1}{2\sigma_N^2}$, we have,
    \begin{align}
        & \log \mathbb{E}_{P_W\otimes \mu}\left[e^{-\eta \left( r(W_\ERM,Z) - \mathbb{E}[r(W_\ERM,Z)]\right)} \right] \nonumber \\
        &= O\left(\frac{2\eta^2\sigma_N^4}{n}\right).
    \end{align}
    Then it satisfy with the $\sigma'^2$-sub-Gaussian condition that $\sigma'^2 = \frac{4\sigma_N^4}{n}$.
    \item For any $w$, 
     \begin{align}
          \log \mathbb{E}_{Z}[e^{\eta r(w,Z)}] = 2\eta^2\sigma_N^2(w-\mu)^2.
     \end{align}
     Since $w$ is unbounded, it does not satisfy the sub-Gaussian assumption for all $w\in\mathcal{W}$.
    \end{itemize}
    \item When checking the $(\eta,c)$-central condition, we confirmed that the learning tuple satisfies the $(\eta,c)$-central condition under $P_W\otimes \mu$ for the ERM algorithm. Now we consider the case for any hypothesis $w$.
    \begin{itemize}
        \item For any constant hypothesis $w$: we have that 
    \begin{align}
    \mathbb{E}_{\mu}[e^{-\eta r(w,Z)}]  = \operatorname{exp}\left( (2\eta^2\sigma_N^2 - \eta )(w-\mu)^2 \right),
    \end{align}
    where the excess risk can be calculated as:
    \begin{align}
    \mathbb{E}_{\mu}[r(w,Z)] = (w-\mu)^2.
    \end{align}
    Therefore, the $(\eta,c)$-central condition is satisfied as
    \begin{align}
    \log \mathbb{E}_{\mu}[e^{-\eta r(w,Z)}] &=(2\eta^2\sigma_N^2 - \eta )(w-\mu)^2 \nonumber \\
    & \leq -c\eta (w-\mu)^2.
    \end{align}
    for any $\eta \leq \frac{1}{2\sigma^2_N}$ and any $c \leq 1 - 2\eta \sigma^2_N$. 
    \end{itemize}
\end{itemize}

\newpage 

\bibliographystyle{IEEEtranN}
\bibliography{reference}

\newpage 

\begin{IEEEbiographynophoto}{Xuetong Wu}
Xuetong Wu received a B.S. degree in 2016 from Fudan University, Shanghai, China, and completed his M.S. and Ph.D. degrees in 2018 and 2023, respectively, in the Department of Electrical and Electronic Engineering at the University of Melbourne. His research interests include machine learning, transfer learning, and information theory.
\end{IEEEbiographynophoto}
\begin{IEEEbiographynophoto}{Jonathan H. Manton}
Professor Jonathan Manton holds a Distinguished Chair at the University of Melbourne with the title Future Generation Professor. He is also an adjunct professor in the Mathematical Sciences Institute at the Australian National University, a Fellow of the Australian Mathematical Society (FAustMS) and a Fellow of the Institute of Electrical and Electronics Engineers (FIEEE). He received his Bachelor of Science (mathematics) and Bachelor of Engineering (electrical) degrees in 1995 and his Ph.D. degree in 1998, all from The University of Melbourne, Australia. In 2005 he became a full Professor in the Research School of Information Sciences and Engineering (RSISE) at the Australian National University. From mid-2006 till mid-2008, he was on secondment to the Australian Research Council as Executive Director, Mathematics, Information and Communication Sciences. He has served as an Associate Editor for the IEEE Transactions on Signal Processing and a Lead Guest Editor for the IEEE Transactions on Selected Topics in Signal Processing. He has been a Committee Member of the IEEE Signal Processing for Communications (SPCOM) Technical Committee, and a Committee Member on the Mathematics Panel for the ACT Board of Senior Secondary Studies in Australia. Currently he is a Committee Member of the IEEE Machine Learning for Signal Processing (MLSP) Technical Committee, and is the Signal Processing Chapter Chair for the IEEE Victorian Section. Awards include a prestigious Queen Elizabeth II Fellowship and a Future Summit Australian Leadership Award.

His principal fields of interest are Mathematical Systems Theory (including Signal Processing and Optimisation), Geometry and Topology (Differential and Algebraic), and Learning and Computation (including Systems Biology, Systems Neuroscience and Machine Learning).
\end{IEEEbiographynophoto}
\begin{IEEEbiographynophoto}{Uwe Aickelin}
Prof Uwe Aickelin received the Ph.D. degree from the University of Wales, U.K. He is currently a Professor and the Head of the School of Computing and Information Systems, University of Melbourne. His current research interests include artificial intelligence (modelling and simulation), data mining and machine learning (robustness and uncertainty), decision support and optimization (medicine and digital economy), and health informatics (electronic healthcare records).
\end{IEEEbiographynophoto}
\begin{IEEEbiographynophoto}{Jingge Zhu}
Jingge Zhu received the B.S. degree and M.S. degree in electrical engineering from Shanghai Jiao Tong University, Shanghai, China, the Dipl.-Ing. degree in Technische Informatik from Technische Universit\"{a}t Berlin, Berlin, Germany and the Doctorat \`{e}s Sciences degree from the Ecole Polytechnique F\'{e}d\'{e}rale (EPFL), Lausanne, Switzerland. He was a post-doctoral researcher at the University of California, Berkeley.  He is now a Senior Lecturer at the University of Melbourne, Australia. His research interests include information theory with applications in communication systems and machine learning. 

Dr. Zhu received the Discovery Early Career Research Award (DECRA) from the Australian Research Council in 2021, the IEEE Heinrich Hertz Award for Best Communications Letters in 2013, and the Early Postdoc. Mobility Fellowship from Swiss National Science Foundation in 2015 and the Chinese Government Award for Outstanding Students Abroad in 2016.
\end{IEEEbiographynophoto}

\end{document}